\newtheorem{theorem}{Theorem}
\newtheorem{lemma}{Lemma}
\newtheorem{corollary}{Corollary}
\newtheorem{definition}{Definition}
\newtheorem*{theorem*}{Theorem}
\theoremstyle{remark}
\newtheorem{remark}{\bfseries{Remark}}
\def\Z{\mathbb{Z}}
\def\R{\mathbb{R}}
\def\P{\mathbb{P}}
\def\E{\mathbb{E}}
\def\FF{\mathscr{F}}
\def\BB{\mathscr{B}}
\def\SS{\mathscr{S}}
\renewcommand{\phi}{\varphi}
\renewcommand{\epsilon}{\varepsilon}
\newcommand{\1}{{\text{\Large $\mathfrak 1$}}}
\newcommand{\comp}{\raisebox{0.1ex}{\scriptsize $\circ$}}
\newcommand{\eqdist}{\stackrel{\text{\rm (d)}}{=}}
\definecolor{mygray}{gray}{0.4}
\definecolor{deeppink}{RGB}{255,20,147}
\definecolor{mygreen}{rgb}{0.0, 0.75, 0.0}
\definecolor{myred}{rgb}{0.768, 0.09, 0.09}
\newcommand{\red}{\color{red}}
\long\def\symbolfootnote[#1]#2{\begingroup
\def\thefootnote{\fnsymbol{footnote}}\footnote[#1]{#2}\endgroup}
\newcommand{\keywords}[1]{ \noindent {\footnotesize
             {\small \it  Keywords and phrases.} {\sc #1} } }
\newcommand{\ams}[2]{  \noindent {\footnotesize
             {\small \it  AMS {\rm 2010} subject classification.
             {\rm Primary {\sc #1}; secondary {\sc #2}} } } }
\newcounter{para}
\def\wE{{\widetilde \E}}
\def\wP{{\widetilde \P}}
\definecolor{eqcol}{RGB}{255,10,130}
\newcommand{\paintme}{\color{eqcol}}
\def\e{\mathbf e}
\def\Estar{{\displaystyle \E^*}}
\def\Pstar{{\displaystyle \P^*}}
\def\PO{{\mathcal P}}
\def\BL{{\mathcal B}}
\def\arr{{\mathfrak a}}
\def\LIFO{\text{LIFO}}
\def\pLIFO{\text{pLIFO}}
\def\FIFO{\text{FIFO}}
\begin{document}

\title{\Large \bf The distribution of age-of-information performance
measures for message processing systems} 
\author{\sc George Kesidis\thanks{\href{mailto:gik2@psu.edu}{gik2@psu.edu}} 
\and 
\sc Takis Konstantopoulos\thanks{\href{mailto:takiskonst@gmail.com}{takiskonst@gmail.com}} 
\and 
\sc Michael A.\ Zazanis\thanks{\href{mailto:zazanis@aueb.gr}{zazanis@aueb.gr}}}

\date{\small 22 January 2020}
\maketitle

\begin{abstract}
The idea behind the recently introduced ``age of information'' performance measure 
of a networked message processing system is that it indicates our knowledge
regarding the ``freshness" of  the most recent piece of information
that can be used as a criterion for real-time control.
In this foundational paper, we examine two such measures, one that has
been extensively studied in the recent literature and a new one
that could be more relevant from the point of view of the processor.
Considering these measures as stochastic processes in a stationary environment 
(defined by the arrival processes, message processing times and admission controls
in bufferless systems),
we characterize their distributions using the Palm inversion formula.
Under renewal assumptions we derive explicit solutions for their
Laplace transforms and show some interesting decomposition properties. 
Previous work has mostly focused on computation of expectations in very particular cases.
We argue that using bufferless or very small buffer systems is best and
support this by simulation. We also pose some open problems 
 including assessment of 
enqueueing policies that may be better in cases where one wishes to
minimize more general functionals of the age of information measures.
\\[2mm]
\keywords{Age of information; message processing systems; Palm probability; renewal process;
Poisson process; performance evaluation; stochastic decomposition}
\\[2mm]
\ams{60G55,60K05}{60G50,60K30}
\end{abstract}



{\small
\tableofcontents
}

\section{Introduction}
\label{sec:intro}
\subsection{Technological background}
The Internet is now commonly used to transmit latency-sensitive
information that is part of a real-time control or decision process.
As an example, consider a temperature or pressure sensor which could periodically
transmit a reading to a latency-critical
remote control. 
Other examples include decision systems for an airplane, driverless vehicles,
financial transactions, power systems, sensor/actuator systems 
or other ``cyber physical" systems.
In the power system case, a high temperature reading of a transmission
line could indicate reduced capacity or predict near-term failure.
In the sensor system case, the sensor could indicate an alarm 
such as a motion detector which needs to be manually reset once tripped; 
any alarm message would render stale any queued or 
in-transmission ``heartbeat" message 
that is periodically sent to indicate no intruder is present and that
the sensor is properly functioning.
In the actuator system case, messages may embody commands to a remote actuator
of a time-critical control system.

\subsection{Two age of information measures}
Systems such as the ones described above naturally depend on the {\it age} of the 
most recently received reading from a remote sensor. This is a quantity
that takes into account  the time since the reading was generated.
In view of the speeds involved 
a decision must be taken upon arrival of
a new information packet: to read or not read it. The choice is crucial and
depends on the packet length and the frequency of information packet arrivals,
quantities that may not be completely known. 
If 
$A^*_t:=$ {\bf\em arrival time of the last completely read message before time $t$}
then the quantity 
\[
\alpha(t):=t-A^*_t
\]
has been introduced in the literature and has been given the name 
{\bf\em ``age of information (AoI)''}.
This has been used as a measure
of freshness and its expectation (under specific assumptions) 
has been studied in, e.g.,
 \cite{Yates12,Eph16,Kosta17,Shroff17,Altman18}. 
From a performance point of view,
we are interested not only in its expectation but also in its probability distribution.
We derive fundamental results about the latter in this paper.

One can argue that the above measure may have limited usefulness for applications
that {\bf\em  cannot control the arrivals} 
of messages. And thus, one may assert that the
freshness of information should be gauged not against the current time $t$
but against
$A_t:=$ {\bf\em  arrival time of last message before $t$.}
By definition, $A^*_t \le A_t$ with equality if and only if the message
arriving at $A_t$ is completely read.
We thus introduce the measure 
\[
\beta(t):=A_t-A^*_t.
\]

To further explain our claim that $\beta$ may be more relevant than $\alpha$,
consider the following scenario: messages arrive rarely and randomly at times
$t_n$ and have 
very  small duration $\epsilon$, so small that $\epsilon \ll t_{n+1}-t_n$ for
all $n$.
Then only one message will be in the system at a time
and, assuming that the processor does not idle when the message is present,
every message is completely read. 
We can easily see that, unless $t$ is in the extremely small
interval of length $\epsilon$ during the processing of a message,
$\alpha(t)$ equals time elapsed between the last arrival before $t$ and $t$,
whereas $\beta(t)=0$.
(To see this, take  $t_0+\epsilon < t < t_1$. Then $A_t=A^*_t=t_0$,
so $\alpha(t)=t-t_0$, but $\beta(t)=0$. On the other hand,
if $t_1<t<t_1+\epsilon$
then $A_t=t_1$, but $A^*_t = t_0$. Hence
$\alpha(t)=t-t_0$, $\beta(t)=t_1-t_0$, and the two quantities
are approximately the same since $\epsilon$ is extremely small.)

Thus, $\alpha(t)$ and $\beta(t)$ are almost the same when $t$ lies in
a processing interval, but vastly different when $t$ lies in an idle interval.
In the latter case, $\alpha(t)$ simply tells us the age of the arrival process
but $\beta(t)=0$ meaning that the processor possesses the freshest information.
Thus, in situations where the arrival process is beyond the processor's control,
trying to keep the ``age of information'' low should not take into account the 
age of the arrival process. This is why we propose the new measure as a more 
relevant quantity. Granted, $\alpha(t) = (t-A_t)+\beta(t)$,
so, insofar as expectations are desired cost functionals, there is no difference
in potential optimization problems. However if the cost functional is another 
function $\alpha(t)$, e.g., $\P(\alpha(t)>u)$, then the dependence between
$t-A_t$ and $\beta(t)$ justify finding the distribution of $\beta$.
Since there is no terminology for this quantity, we are free to choose one:
we call it {\bf\em  ``new age of information (NAoI)''}.

\subsection{The queueing system: bufferless instead of buffered}
\label{bb}
The age of information measures can be defined for a general queueing system
that could consist of a number of queues and servers, buffers of various sizes
and various policies that control the acceptance of a message and its successful
processing. We define some quantities used in the paper.
Messages arrive at times $T_n$ and have processing (or service) times $\sigma_n$.
An arriving message may be immediately accepted or rejected. The $1/0$-valued
variable $\chi_n$ denotes acceptance/rejection. 
A message is called successful if it is processed in its entirety.
The $1/0$-valued variable $\psi_n$ denotes processing success/failure.
A failed message is kicked out of the system before it is read entirely.
We let $T_n'$ be the time at which the message arriving at time $T_n$ departs
from the system either because it is rejected or because it fails to be
processed entirely or because it departs successfully. 

Two systems that are of main concern in this policy are as follows.
There is a single server and a buffer of unit size (just to accommodate
the message being processed).
The first system operates under the {\bf\em  pushout policy ($\PO$)}.
Every arriving message immediately kicks out (one uses the word ``obsoletes'') the existing
message, if any, and starts being processed immediately. If no message
arrives while one is processed then the latter message finishes successfully.
Note that this system can be thought of as a Preemptive Last In First Out system
with buffer of size 1.
The second system operates under the {\bf\em  blocking policy ($\BL$)}.
An arriving message immediately grabs the server if the latter is available
or is immediately kicked out if the server is busy. Other policies are possible;
see the examples at the end of Section \ref{buff}.

The literature so far has focused on the AI for single server queueing  systems,
particularly stable FIFO or  preemptive LIFO disciplines \cite{Yates12,Shroff17},
with infinite buffers where all messages are accepted ($\chi_n=1$ for all $n$)
and all messages are successful ($\psi_n=1$ for all $n$).
Moreover, only the mean of $\alpha(t)$ for M/M/1-FIFO \cite{Yates12} in steady-state has been derived: see formula \eqref{mean-fifo} in the last section.
We do, however, question the use of infinite buffers, based on some simple,
intuitive observations.
The most intuitive of all is: if it is desired to keep the age of information low then
storing arriving message makes no sense as they contribute nothing to either $\alpha$
or $\beta$. 

Consider the $\PO$ system as described above and compare it with an
infinite buffer preemptive LIFO (pLIFO) system. Assume that the same
sequence of arrival and processing times is fed into both systems. 
Then, as explained in more detail in the last section,
\begin{equation}
\label{pLIFOvsP}
\alpha_\pLIFO(t) = \alpha_\PO(t), \quad 
\beta_\pLIFO(t) = \beta_\PO(t), \quad t \in \R.
\end{equation}
In fact, recently, it was shown that, among all work-conserving processing 
disciplines, for an infinite buffer single server queue, the preemptive LIFO discipline achieves stochastically lowest AoI in steady-state in some cases; see \cite{Shroff17}.
Concerning next an infinite buffer FIFO system, the other system
studied in the age of information literature \cite{Yates12,Mbook},
we conjecture that another system that we call $\PO_2$, basically a variation
of $\PO$ but with buffer size $2$, 
has better AoI performance than the infinite buffer FIFO.
It is for these reasons that we study  only bufferless systems in this paper.
In studying bufferless systems, the only variable is the queueing policy.
Rather than studying an optimal control problem,
we focus on two very specific and, in some sense, extreme policies,
$\PO$ and $\BL$.
We do so in order to obtain concrete formulas and explain the methods.
However, in principle, our methods, based on Palm calculus
and renewal theory, will work on any policy.


\subsection{Paper organization and contributions}
The paper is organized as follows.
In Section \ref{sec:mi} we present the setup and the definition of the models
and all relevant stochastic processes.
Section \ref{sec:outline} is a brief outline of some of the results, pointing out,
in particular, some interesting {distributional stochastic decomposition results}
for the various stationary performance measures.
Formulas for distributions and moments
of both the AoI and the NAoI for the pushout system
are derived in Section \ref{sec:fpo}.
This is done by carefully applying classical Palm theory, first in a stationary context
and then by specializing to the case involving independence assumptions.
The stronger the assumptions, the more explicit the results.
For the queueing theorist, it is not a surprise that the formulas become
quite explicit when the arrival process is Poisson.
Similarly pleasing and explicit is the case when the message
lengths are independent exponentially distributed random variables.
If both Poissonian assumptions hold then we are in the best of all worlds.
The blocking system is the subject of  Section \ref{sec:fbl}. 
The action plan is the same as in the pushout system case, but, here,
all calculations are more involved. This is due to the fact that
the blocking system has
more complicated dynamics than the pushout system.
Nevertheless, closed-form formulas are also possible.
In Section \ref{sec:fw}, we discuss variations of
the AoI problem to be considered in future work;  in particular,
we discuss other
queueing policies that may have smaller (in some sense) age of information
in some cases.

The contributions of this paper are as follows:
Previous literature has focused only on stationary mean AoI but
even that is done in rather specific cases (infinite buffer FIFO). 
In this paper we  {derive formulas for the distributions
via  Laplace transforms} of AoI and NAoI in steady-state
under renewal assumptions.
In particular, we find explicit formulas for all the means in all cases,
and even this appears to be novel. 
In addition to deriving formulas for the stationary distributions 
under renewal assumptions, by adopting
a top-down approach based on Palm calculus 
we derive a methodology on how one could
compute the same things (i) for arrival/service distributions with dependencies
and (ii) for policies other than $\PO$ or $\BL$.

\section{System definitions}
\label{sec:mi}
The goal of this section is to define the two measures of the age of information
for a general bufferless processing system.
We are careful to include the possibility that some of the quantities below
may be restricted on a lattice. 
We first define such a system, allowing the possibility to accept or reject messages.
We then give the definitions of the age of information measures as functions of
time.
Lastly, we introduce stochastic assumptions which make the age of information
processes random functions of time. 
\subsection{Notation/terminology}
The set of integers is denoted by $\Z$.
The indicator function of a set $A$ is denoted by $\1_A$.
The notation $\E[X; A]$ stands for $\E[X \1_A]$.
If $S$ is a set and $s \in S$, then $\delta_s$ denotes the Dirac measure
$\delta_s(B) = \1_{s \in B}$, $B \subset S$.
By point measure on $\R$ (or $\R^2$) we mean a measure assuming nonnegative
integer values; necessarily, it is a finite or countable sum of Dirac measures.
A point process is a random point measure.
If $X$ is a positive random variable with finite expectation, we say that
$\overline X$ is the stationary version of $X$ if it has density
$\P(X>x)/\E X$:
\[
\P(\overline X \in dx) = \frac{\P(X > x)} {\E X}  dx.
\]
We then have  \[
\E e^{-u \overline X} = \frac{1-\E e^{-u X}}{u \E X},\quad
\E \overline X = \frac{\E X^2}{2\E X}.
\] 
When $X$ and $Y$ are random variables (on, possibly, different probability spaces) 
$X \eqdist Y$ denotes equality of their laws (distributions).
The symbol $\wP$ denotes the probability governing a time-stationary system,
whereas $\P$ denotes the Palm probability  of $\wP$ with respect to the arrival process. See section \ref{stfr} below for exact definitions.
(We choose this unconventional notation because the former symbol
is used less frequently than the latter.)

\subsection{Bufferless message processing systems}
\label{buff}

Messages arrive in a bufferless server which can read one message at a time. 
Denote by $T_n$, $n \in \Z$, the message arrival times. We assume that
\[
T_n < T_{n+1}, ~ n \in \Z, \qquad
\sup_{n \in \Z} T_n =+\infty, \qquad \inf_{n \in \Z} T_n = -\infty.
\] 
We shall fix an ordering by letting $T_0$ be such that $T_0 \le 0 < T_1$.
We denote by
\[
\arr := \sum_{n \in \Z} \delta_{T_n}
\]
the arrival process, considered as a point measure.
We shall also let, for all $n \in \Z$,
\begin{equation}
\label{taun}
\tau_n := T_{n+1} - T_n.
\end{equation}
We introduce, for each $n \in \Z$, the {\it accept/reject index} $\chi_n$,
setting
\[
\chi_n = \begin{cases}
1, & \text{if the message arriving at $T_n$ is accepted}
\\
0, & \text{otherwise}.
\end{cases}
\]
The $\chi_n$ is a decision variable that depends on the
{\bf\em acceptance policy}. See below for some example.
In this paper we shall only consider specific policies leaving
optimization/control problems for future work.
The length  of message $n$ (the message arriving at time $T_n$) is
denoted by $\sigma_n$ and its departure time by $T_n'$. The latter given by
\begin{equation}
\label{T'}
T_n' :=
\begin{cases}
T_n, & \text{ if } \chi_n=0
\\
(T_n+\sigma_n) \wedge \inf\{T_r :\, r > n,\, \chi_r=1\},
& \text{ if } \chi_n=1
\end{cases}.
\end{equation}
This means that an arriving message will either be immediately rejected (and
thus depart immediately) or accepted, in which case it will either be read in
its entirety or pushed out by another accepted message.
Note that the sets $\{T_n, n \in \Z\}$ and $\{T_n', n \in \Z\}$
may have common elements (e.g., if we allow all variables take
values that are integer multiples of a common unit).
It is easy to see from \eqref{T'} 
that the intervals $[T_n, T_n')$ and $[T_m, T_m')$ are disjoint 
if $m \neq n$.
Thus, for all $t$, the quantity 
\begin{equation}
\label{qdef}
q(t) := \sum_{n \in \Z} \chi_n\, \1_{T_n \le t < T_n'}
\end{equation}
is either $0$ or $1$. The $q(t)$  is the state of the server at time $t$:
$q(t)=1$ if the server is busy or $0$ if not.
Notice that $q(\cdot)$ is right-continuous (by choice rather than by necessity).

We call message $n$ {\it successful} if it departs immediately after having being read
in its entirety. The {\it success/failure index} is the binary variable
\begin{equation}
\label{psi}
\psi_n := \1_{T_n' = T_n+\sigma_n}.
\end{equation}
By definition, for all $n$,
\[
\psi_n \le \chi_n.
\]
See Figure \ref{fig:T-fig} for an illustrative example of an arbitrary policy.
\begin{figure}
\centering
\includegraphics[width=10cm]{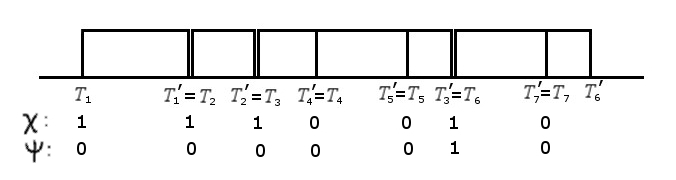}
\caption{\it A message arrives at time $T_1$ at an idle server and is immediately accepted.
A double line indicates that a message pushes out the previous one,
while a single line indicates that the message is blocked.
Thus, messages 1, 2, 3 and 6 are accepted, while 4, 5 and 7 are rejected.
Only message 6 is successful. The server started reading message 1 at time
$T_1$ and finishes reading message $6$ in its entirety at time $T_6'=T_6+\sigma_6$.}\label{fig:T-fig}
\end{figure}


Consider $n \in \Z$ and the statement
\begin{equation}
\mathcal Z_n := \text{``$q(T_n-)=0$ or  $T_m'=T_n$ 
for some $m < n$''}.
\label{Zn}
\end{equation}
We can interpret $\mathcal Z_n$ as ``the server is idle at time $T_n$''.
If there is no possibility that a departure time coincides with the
arrival time of another message then idle server simply means $q(T_n-)=0$.
But we must include the possibility that some message $m<n$ departs
exactly at $T_n$.
We shall throughout assume that the {\it non-idling condition} 
\begin{equation}
\tag{NI}\label{NI}
\text{for all } n \in \Z \text{ if } \mathcal Z_n \text{ then } \chi_n=1
\end{equation}
holds.
For those $n$ for which $\mathcal Z_n$ is violated the determination of 
$\chi_n$ is a matter of the acceptance policy.

Here are four examples of acceptance policies. Let $\ell$ be a nonnegative integer.

\paragraph{Example 1. The {\bf\em  pushout} ($\PO$) policy.}
All messages are accepted:
\[
\chi_n=1, \quad n \in \Z.
\]
From \eqref{T'} and \eqref{psi} 
it is easy to see that 
\[
\psi_n=\1_{T_n+\sigma_n \le T_{n+1}} = \1_{\tau_n \ge \sigma_n}, \quad n \in \Z.
\]

\paragraph{Example 2. The {\bf\em  blocking} ($\BL$) policy.} 
No message other than those satisfying the non-idling condition \eqref{NI}
are accepted:
\[
\chi_n =1 \iff \text{ $\mathcal Z_n$ holds}. 
\]
Note that, here, $\psi_n= \chi_n$ for all $n$, that is, every accepted message
is successful.

\paragraph{Example 3. The  $\BL\PO(\ell)$ policy.} 
Say a message arrives at time $t$ at an empty system, $q(t-)=0$.
Then it starts being processed. If there are at most $\ell$ arrivals 
while the message is being processed then they are all blocked.
Beyond that, the server accepts every arrival until it becomes empty again.
In other words, during a reading period, the server behaves in a blocking
fashion for up to $\ell$ arrivals and in a pushout fashion after that.

\paragraph{Example 4. The $\PO\BL(\ell)$ policy.} 
During a reading period, the server behaves in a pushout
fashion for up to $\ell$ arrivals and in a blocking fashion after that.

\paragraph{}We shall only study the first two policies in this paper,
leaving the study of the others, as well as optimal
policies, for future work.

\subsection{Age of information processes}
To define the age of information functions (of time) we 
need to introduce the following functions on $\R$.
The {\it last arrival epoch} before $t \in \R$ is defined by
\[
A_t := \sup\{T_n:\, n \in \Z, T_n \le t\}.
\]
The {\it last successful arrival epoch} before $t$ is defined by
\[
S_t := \sup\{T_n:\, n \in \Z, T_n \le t, \psi_n=1\};
\]
The {\it last successful departure epoch} before $t$ is defined by
\[
D_t := \sup\{T_n+\sigma_n:\,  n \in \Z, T_n+\sigma_n \le t, \psi_n=1\}.
\]
Note that, under our assumptions on the sequence $T_n$, 
the $\sup$ in the definition of $A_t$
is actually a $\max$. 
Assuming further that  
\[
\tag{A1}\label{A1}
\inf\{n: \psi_n=1\}=-\infty
\]
we have that the $\sup$
in $S_t$ and $D_t$ is replaced by a $\max$.
If, in addition, 
\[
\tag{A2}\label{A2}
\sup\{n: \psi_n=1\}=\infty
\]
then $S_t, D_t < \infty$ for all $t$.
\begin{definition}
\label{defaoi}
Under assumptions \eqref{A1} and \eqref{A2},
the {\bf\em  age of information (AoI)} function is defined by
\begin{equation}
\label{defalpha}
\alpha(t) := t -S_{D_t}, \quad t \in \R,
\end{equation}
and the {\bf\em  new age of information (NAoI) function} is defined by
\begin{equation}
\label{defbeta}
\beta(t):= A_t-S_{D_t}, \quad t \in \R.
\end{equation}
\end{definition}

Note that the functions $A, S, D$ above are right-continuous and increasing
($s < t \Rightarrow A_s \le A_t, S_s \le S_t, D_s \le D_t$). 
It follows that $\alpha$ and $\beta$ are also right-continuous.
Moreover,
\[
\Delta \alpha(t) := \alpha(t)-\alpha(t-) = - \Delta S_{D_t}
= - \lim_{\epsilon \downarrow 0} (S_{D_t}- S_{D_{t-\epsilon}}) \le 0.
\]
So jumps of $\alpha$ can only be negative.
Notice that
\[
\Delta \alpha(t) = S_{D_t} - S_{(D_{t-})-}.
\]
On the other hand, $\beta$ can have both positive and negative jumps.

We shall also use the following notations and terminology. 
Consider the arrival times $T_n$ of messages arriving at a idle server:
\[
\{B_k:\, k \in \Z\} := \{T_n:\, 
\text{ $\mathcal Z_n$ holds}\}.
\]
By convention, we enumerate these points as
\[
\cdots < B_{-1} < B_0 \le 0 < B_1 < \cdots
\]
They form the beginnings of reading intervals.
An interval with endpoints $B_k$ and $B_{k+1}$ will be referred to
as {\it cycle}.
Define also
\[
\{B_k':\, k \in \Z\} := \{T_n+\sigma_n:\, n \in \Z,\, \psi_n=1\}
\]
and again assume that
\[
\cdots < B_{-1}' < B_0' \le 0 < B_1' < \cdots
\]
These are the ends of reading intervals.
The two sequences, $\{B_k\}$ and $\{B_k'\}$, are interlaced:
between two successive elements of one sequence there is exactly one element 
of the other. See Figure \ref{fig:BRC}.
\begin{figure}[h]
\centering
\includegraphics[width=6cm]{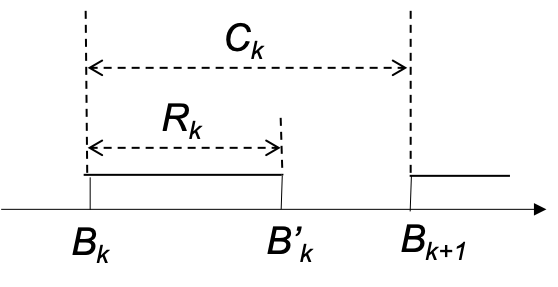}
\caption{\it The interval $[B_k, B_{k+1})$ is a cycle and the
subinterval $[B_k, B_k')$ is a reading interval.}
\label{fig:BRC}
\end{figure}
An interval with endpoints $B_k$ and $B_{k+1}$ is called a {\it cycle}.
We set 
\[
\mathbf C_k:=B_{k+1}-B_k\] 
for the cycle length.
The subinterval with endpoints $B_k$ and $B_k'$ is called a {\it reading interval}.
We set 
\[
\mathbf R_k:=B_k'-B_k
\]
for the reading length.

\subsection{The stationary framework and Palm probabilities}
\label{stfr}
Let $(\Omega, \FF, \widetilde \P)$ be a probability space endowed with a flow, i.e.,
a family of invertible measurable functions $\theta_t : \Omega \to \Omega$, $t \in \R$,
such that $\theta^{-1}_t$ are also measurable
and such that
\begin{equation}
\label{semig}
\theta_{t+s}= \theta_t \comp \theta_s, \quad s, t \in \R.
\end{equation}
Assume further that the flow preserves $\widetilde \P$, 
that is, 
\[
\widetilde \P \comp \theta_{t} = \widetilde \P, \quad t \in \R.
\]

Let $T_n, \sigma_n$ be random variables such that
the marked\footnote{A point process $\phi$
on a product space $S \times M$ is called $M$-marked (or just marked) if
$\phi(\{s\} \times M) \in \{0,1\}$ for all $s \in S$.
}
 point process $\sum_n \delta_{(T_n, \sigma_n)}$
is stationary, that is, 
\begin{equation}
\label{st1}
\bigg(\sum_n \delta_{(T_n, \sigma_n)}\bigg) \comp \theta_t
= \sum_n \delta_{(T_n-t, \sigma_n)}, \quad t \in \R.
\end{equation}
Note then that 
\[
A_t \comp \theta_s = A_{t+s}-s, \quad s,t \in \R.
\]
It follows that the arrival rate
\[
\lambda := \widetilde \E \sum_n \1_{0 \le T_n \le 1}
\]
is positive and finite.
Consider next a  acceptance policy as specified by 
the acceptance random variables $\chi_n$, $n \in \Z$, defined on $(\Omega, \FF)$.
We say that the system is in {\it steady-state} if, in addition to \eqref{st1}, 
\begin{equation}
\label{st2}
\bigg(\sum_n \delta_{(T_n, \sigma_n, \chi_n)}\bigg) \comp \theta_t
= \sum_n \delta_{(T_n-t, \sigma_n, \chi_n)}, \quad t \in \R.
\end{equation}
If the system is in steady-state then it follows from \eqref{st2}, \eqref{semig}
\eqref{psi} and \eqref{T'} that
\begin{equation}
\label{st3}
\bigg(\sum_n \delta_{(T_n, \sigma_n, \chi_n,\psi_n)}\bigg) \comp \theta_t
= \sum_n \delta_{(T_n-t, \sigma_n, \chi_n,\psi_n)}, \quad t \in \R,
\end{equation}
and, for all $s,t \in \R$,
\begin{gather*}
S_t \comp \theta_s = S_{t+s}-s,\quad
D_t \comp \theta_s = D_{t+s}-s,
\\
\alpha(s) \comp \theta_t = \alpha(t+s), \quad
\beta(s) \comp \theta_t = \beta(t+s), \quad
q(s) \comp \theta_t = q(t+s).
\end{gather*}
In general, it is not obvious that \eqref{st2} holds.
Of the four acceptance policies mentioned above, the pushout $\PO$ immediately
satisfies \eqref{st2} owing to  that
$\chi_n=1$
and $\psi_n=\1_{ T_{n+1}-T_n\ge \sigma_n}$ for all $n$. 
That \eqref{st2} holds is proved in
\cite[Section 5.3]{BB} and may require enlarging the probability space 
$(\Omega, \FF, \widetilde \P)$.

\begin{definition}
\label{defpalm}
We shall denote by $\P$ the Palm probability of $\widetilde \P$ with
respect to the point process $\arr=\sum_{n \in \Z} \delta_{T_n}$.
If \eqref{st2} holds we shall denote by 
$\Pstar$ the Palm probability of $\widetilde \P$ with
respect to the point process  $\sum_{k \in \Z} \delta_{B_k}$.
\end{definition}
For the notion of Palm probability see, e.g., 
Daley and Vere-Jones \cite[Chapter 13]{DV}, Kallenberg \cite{KALL}
and Baccelli and Br\'emaud \cite{BB}.
Formally, with $\BB$ denoting the class of Borel sets on $\R$,
the measure $\BB\ni C \mapsto \widetilde \E(\1_A \sum_n \1_{T_n \in C})$
is absolutely continuous, and hence differentiable, with respect to
the measure $\BB\ni C \mapsto \widetilde \E(\sum_n \1_{T_n \in C})$.
The value of the derivative at $0$ is precisely $\P(A)$.
The Palm probability $\Pstar (A)$ can be obtained in exactly the same manner.
However,
since $\{B_k\}$
is precisely the set of $T_n$ for which $\mathcal Z_n$ holds, it follows
that $\Pstar$ is obtained from $\P$ via elementary conditioning:
\begin{equation*}
\label{PstarP}
\Pstar = \P(\cdot | \mathcal Z_0).
\end{equation*}

The hierarchy of the three measures used in the paper is
\[
\widetilde \P \longrightarrow \P \longrightarrow \P^*
\]
Intuitively, one thinks of $\P$ is obtained from $\widetilde \P$ by conditioning that
a point of $(T_n)$ is at $0$ and $\P^*$ is obtained from $\P$ by conditioning
on that one of this point at $0$ is one of the points of $(B_k)$.
Hence $\P^*$ is obtained from $\widetilde \P$ as well 
by conditioning on both events. Hence if $A$ is an event such that
$\widetilde \P(A)=1$ then $\P(A)=1$ also
and if $\P(A)=1$ then $\P^*(A)=1$ also.
Integrals with respect to $\widetilde \P$, $\P$, $\Pstar$  are denoted by
$\widetilde \E$, $\E$, $\Estar$ respectively. 
Moreover, 
\begin{equation}
\label{BTone}
\P(T_0=0)=1, \quad \Pstar (B_0=T_0=0)=1.
\end{equation}
We denote by $\theta_{T_n}$
the map defined by $\theta_{T_n}(\omega) = \theta_{T_n(\omega)}(\omega)$. 
Then $\theta_{T_n}$, $n \in \Z$,
forms a discrete time flow that preserves $\P$.
In other words, $\P$-a.s., $\theta_{T_n} \comp \theta_{T_m} =
\theta_{T_{n+m}}$ for all $m,n \in \Z$ and
$\P \comp \theta_{T_n} = \P$ for all $n \in \Z$.
Similarly, $\Pstar$-a.s., $\theta_{B_k} \comp \theta_{B_\ell} 
= \theta_{B_{k+\ell}}$ for all $k, \ell \in \Z$ and $\Pstar \comp \theta_{B_k}
= \Pstar$ for all $k \in \Z$.

The $\P$-law of $(\tau_n,\sigma_n)$ does not depend on $n$.
In what follows, we let $(\tau, \sigma)$ be a generic random element whose
law is the same as the $\P$-law of $(\tau_0, \sigma_0)$.
The definition of Palm probability and the fact $\lambda>0$ implies that
\[
\E \tau = 1/\lambda < \infty.
\]
This is the minimal condition imposed by stationarity and thus it cannot be avoided.
It is important to note however
that we shall make no assumptions about finiteness of 
higher $\P$-moments of $\tau$.

Referring to Figure \ref{fig:BRC}, note that,
under $\Pstar$, all cycles have identical law and so do all reading
intervals.
We denote by $\mathbf C$  a {\it typical cycle length}, that is, a random
variable whose law is the $\Pstar$-law of the length of any cycle. 
Similarly, $\mathbf R$  denotes a {\it typical reading interval length}.

\section{Outline of  some of the results}
\label{sec:outline}
All results concern stationary processes. 
Denote by $\alpha_\PO$, $\alpha_\BL$ the AoI processes for the 
pushout and blocking systems, respectively.
Similarly, we let $\beta_\PO$, $\beta_\BL$ be the NAoI processes for the two systems.

\subsection{Stochastic decomposition/representation results}
These are obtained under the assumptions that, under the Palm measure $\P$, 
the $(\tau_i)$ are i.i.d.\ and independent
of the $(\sigma_i)$ which are also i.i.d. We refer to these assumptions
as being the i.i.d.\ (or renewal) assumptions.
When we say ``decomposition'' of (the law of) a random variable $X$ we mean,
as usual in applied probability and queueing theory, that $X \eqdist X_1 + X_2$
where $X_1$ and $X_2$ are independent random variables.
The following are obtained in Theorems \ref{alphaiid}, \ref{Hermes}, respectively.
Under $\wP$,
\begin{align*}
\alpha_\PO(t) &\eqdist \overline \tau + \bf R_\PO \\
\alpha_\BL(t) &\eqdist  \sigma + \bf \overline C_\BL
\end{align*}
Here, $\overline \tau$ is a random variable whose law is the law of the
stationary version of the interarrival time, $\bf R_\PO$  is distributed as the typical reading interval of the pushout system, 
and $\bf \overline C_\BL$ is  distributed as the stationary version of the typical cycle
of the blocking system.
We also obtain, in Theorems \ref{betaiid}, \ref{betagen2}, respectively,
the following representations:
\begin{align*}
(\beta_\PO(t)|\beta_\PO(t)>0) &\eqdist \bf C_\PO\\
\beta_\BL(t) \1_{\beta_\BL(t)>0} &\eqdist \beta_+(t).
\end{align*}
Here,  $\bf C_\PO$ is distributed as the typical cycle of the pushout system
and $\beta_+(t)$ is the NAoI process for an appropriately defined  
variant of the fully-blocking system: remove from
the system all undisturbed messages, that is, all messages that
arrive at an idle system and are such that no other messages
arrive while they are being processed.
Moreover, we find that the NAoI always has an atom at $0$.
This is obvious for $\beta_P$ because the it is $0$ when the processor
is idle, but less obvious for $\beta_\BL$. The last representation result
explains the appearance of an atom. For more discussion see Remark \ref{rembet}
of Section \ref{naoibl}.

\subsection{A guide to the subsequent analysis and results} 
We stress some points that will facilitate the reader in going through the analysis
of the pushout and blocking systems,
 Sections \ref{sec:fpo} and \ref{sec:fbl} below.
First of all, the reader should keep in mind 
the hierarchy of the three measures, $\widetilde \P$ (governing the stationary
system), $\P$ (Palm with respect to arrivals), and $\P^*$ (palm with respect to
the beginnings of cycles) should be kept in mind, as explained above.

Regarding the pushout system, the most general results are in Theorems \ref{alphagen} and \ref{betagen}:
\begin{align*}
&\wE F'(\alpha_\PO(0)) 
= \lambda \E\left[ F\bigg(\tau_{-1}+ \sum_{i=0}^{N-1} \tau_i+ \sigma_N\bigg)
- F(\sigma_N);\, \tau_{-1}> \sigma_{-1} \right],
\\
&\wE f(\beta_\PO(0))
= \lambda \E\left[
\sum_{i=0}^{N-1}\tau_i f\left(\sum_{j=-1}^{i-1} \tau_j\right)
+ \sigma_N f\left(\sum_{j=-1}^{N-1} \tau_j\right)
+ (\tau_N-\sigma_N) f(0);\, \tau_{-1} > \sigma_{-1}
\right].
\end{align*}
These are, in principle, expressions for the distributions of $\alpha_\PO(0)$
and $\beta_\PO(0)$ in steady-state because $F$ and $f$ are ``general'' functions
and everything on the right-hand sides of the equations depends solely on
the (joint) distribution of the infinite random sequence $(\tau_i, \sigma_i:\, i \in \Z)$.
In particular, $N$ is defined as 
$N= \inf\{\ell \ge 0:\, \tau_{\ell} \ge \sigma_{\ell}\}$ and denotes the index of the
first message, among the ones numbered $0,1,2,\ldots$, that is successful.
Note, in particular, that $N$ has a stopping time property and this, along with the 
fundamental probabilist's tool, the d\'ecoupage de L\'evy (Lemma \ref{lemddL}),
makes, under renewal assumptions,  the analysis and the obtaining of explicit formulas possible.

Regarding the blocking system, the most general results are 
formulas \eqref{Plato} and  \eqref{posvalue} of Theorems \ref{alphagen2} and \ref{betagen2} below. The formulas are more complicated due to the fact that the 
dynamics of the system and, in particular, the construction of the unique steady-state
depends on the infinite past. However, again, these formulas are again expressions
for the distributions of $\alpha_\BL(0)$ and $\beta_\BL(0)$. We point out that
the index $N$ appearing in them is now defined as 
$N = \inf\{\ell \ge 1:\, \tau_0+\cdots+\tau_{\ell-1} \ge \sigma_0\}$ and is chosen
so that it has the stopping time property.

Using renewal theory, we manage to turn these general formulas into explicit
results for the Laplace transforms of the quantities of interest. To do so, we
need to introduce several functionals of the processes which can be found by
solving fixed point (renewal equations). Sometimes, the Laplace transforms
can be inverted explicitly giving formulas for  densities. In particular,
this can be done when the random variables $(\tau_i)$ are i.i.d.\ exponential
and the $(\sigma_i)$ are also i.i.d.\ exponential and the two sequences
are independent. This, of course, is no surprise to the queueing theorist.
Finding just
the expectations of the AoI and NAoI can be done either via their
Laplace transforms or via the general formulas obtained via Palm calculus
by choosing specific functionals. 
We do whatever is quicker and obtain expectation
formulas that are summarized in Table \ref{tab:means} of the last 
section.  To the best of our knowledge, the GI/GI 
formulas are new and some of the rest are consistent with
\cite{Yates17}.

\section{The pushout system}
\label{sec:fpo}
The dynamics of the pushout system is quite simple:
every arriving message is admitted: $\chi_n=1$ for all $n \in \Z$.
The message arriving at $T_n$ is successful if and only if $T_n+\sigma_n
\le T_{n+1}$.
Thus
\[
\psi_n = \1_{\tau_n \ge \sigma_n}, \quad n \in \Z,
\]
where $\tau_n=T_{n+1}-T_n$ as in \eqref{taun}.
Since, for all $n$,   $\chi_n=1$ and
$\psi_n=\1_{\tau_n\ge \sigma_n}$, it follows from \eqref{T'} 
that the state process $q$ of \eqref{qdef} is alternatively given by
\[
q(t) = 
\begin{cases}
0, & T_n+\sigma_n \le t < T_{n+1}
~~\mbox{for some $n$}
\\
1, & \text{ otherwise}
\end{cases}.
\]
If $\P(\tau_0 < \sigma_0)=1$ then $\P(\tau_n < \sigma_n \text{ for all }n)=1$
and so $q$ is identically equal to $1$. This is an uninteresting case
resulting in infinite AoI and NAoI. 
We thus assume that
\begin{equation}
\label{Seneca}
\P(\tau_0 \ge \sigma_0) > 0,
\end{equation}
that is $\P(\psi_0=1)>0$. 
By the Poincar\'e recurrence theorem 
\cite[Theorem 7.3.4]{DURR}, there is a doubly-infinite subsequence 
$\psi_{n_k}$, $k \in \Z$, such that $\psi_{n_k}=1$ for all $k$,
$\P$-a.s.\ and $\widetilde \P$-a.s.
In other words,
$\inf\{n: \psi_n=1\}=-\infty$,
$\sup\{n:\psi_n=1\}=+\infty$, $\P$-a.s., and  hence $\widetilde \P$-a.s.
This implies that $\alpha, \beta$ are well-defined and finitely-valued processes.

It is easy to see that, for the pushout system,
the beginnings of cycles satisfy
\begin{gather*}
\{B_k: \, k \in \Z\} = \{T_n:\, n \in \Z, \psi_{n-1}=1\}.
\end{gather*}
We therefore have:
\begin{lemma}
The Palm probability
$\Pstar$ of Definition \ref{defpalm}
is the Palm probability of $\widetilde \P$ with
respect to the (stationary) point process 
\[
\sum_{n \in \Z} \psi_{n-1} \delta_{T_n}
\]
and
\begin{equation}
\label{PstarPpo}
\Pstar = \P(\cdot | \psi_{-1}=1) =  \P(\cdot | \tau_0 \ge \sigma_0) .
\end{equation}
\end{lemma}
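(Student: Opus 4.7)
The plan is to identify the point process $\sum_k\delta_{B_k}$ of cycle starts with a thinning of the arrival process $\arr$, and then to apply the standard Palm identity relating the Palm measure of a $\{0,1\}$-valued stationary thinning to that of the original.

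I would first show that for the pushout system the condition $\mathcal{Z}_n$ of \eqref{Zn} is equivalent to $\{\psi_{n-1}=1\}$. Since every arrival is admitted, the only way the server can fail to be idle at $T_n$ is that the previous message at $T_{n-1}$ has not yet departed; and in the pushout regime an unsuccessful previous message is pushed out at the very instant $T_n$, so the server fails to be idle at $T_n$ precisely when $\psi_{n-1}=0$. This yields
\[
\sum_{k\in\Z}\delta_{B_k}=\sum_{n\in\Z}\psi_{n-1}\,\delta_{T_n}.
\]
Stationarity of the right-hand side under $\widetilde\P$ is automatic from \eqref{st3}, since $\psi_{n-1}$ is a shift-covariant functional of the marked arrival process.

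Next I would invoke the following standard Palm-thinning fact: if $\Psi=\sum_n\xi_n\delta_{T_n}$ is a $\widetilde\P$-stationary $\{0,1\}$-valued thinning of $\arr$, then the Palm measure of $\Psi$ is precisely $\P(\cdot\mid\xi_0=1)$. This is a one-line consequence of Campbell's formula, which gives, for any Borel $C$ with $0<|C|<\infty$ and any nonnegative functional $h$,
\[
\widetilde\E\sum_n\xi_n\,(h\comp\theta_{T_n})\,\mathbf 1_{T_n\in C}=\lambda\,|C|\,\E[\xi_0\, h];
\]
dividing by the intensity $\lambda\,\P(\xi_0=1)$ of $\Psi$ identifies its Palm expectation as $\E[h\mid\xi_0=1]$. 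Applying this with $\xi_n=\psi_{n-1}$ yields $\P^*=\P(\cdot\mid\psi_{-1}=1)$, and substituting the pushout identity $\psi_n=\mathbf 1_{\tau_n\ge\sigma_n}$ at the appropriate index gives the second equality of \eqref{PstarPpo}. There is no substantive obstacle here beyond careful bookkeeping with indices; one only has to keep in mind that under $\P$ the point at the origin is $T_0$, and that idleness of the server at that point is encoded in the previous interarrival--service pair, which by the $\P$-stationarity of the sequence $(\tau_n,\sigma_n)_{n\in\Z}$ has the same law as any other pair.
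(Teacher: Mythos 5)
Your proposal is correct and follows essentially the same route as the paper: the paper identifies $\{B_k\}=\{T_n:\,\psi_{n-1}=1\}$ (your idleness argument is exactly the ``easy to see'' step) and then invokes the remark following Definition \ref{defpalm} that the Palm probability with respect to a thinned subsequence of the arrival points is obtained from $\P$ by elementary conditioning; you merely make that elementary step explicit via the refined Campbell formula, which is a harmless and correct addition. One caveat concerns your last sentence. What your computation, together with the pushout identity $\psi_{-1}=\1_{\tau_{-1}\ge\sigma_{-1}}$, literally delivers is $\Pstar=\P(\cdot\mid\tau_{-1}\ge\sigma_{-1})$, and this is the form the paper actually uses afterwards (cf.\ the event $\{\tau_{-1}>\sigma_{-1}\}$ in \eqref{Socrates} and $\Estar(B_1-B_0)=1/\lambda\P(\tau_{-1}\ge\sigma_{-1})$). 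Stationarity of $(\tau_n,\sigma_n)_{n\in\Z}$ under $\P$ equates the \emph{numbers} $\P(\tau_{-1}\ge\sigma_{-1})$ and $\P(\tau_0\ge\sigma_0)$, but it does not identify the conditional \emph{measures} $\P(\cdot\mid\tau_{-1}\ge\sigma_{-1})$ and $\P(\cdot\mid\tau_0\ge\sigma_0)$: in the i.i.d.\ case, for instance, they assign probabilities $p<1$ and $1$ respectively to the event $\{\tau_0\ge\sigma_0\}$. So the index shift should not be justified by ``same law of the pair''; read the right-hand side of \eqref{PstarPpo} with the index $-1$ (this is an imprecision in the displayed statement rather than a gap in your argument, but as written your closing justification would not prove the index-$0$ form).
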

In particular,
\begin{equation}
B_1 = \inf\{T_n: n \in \Z, T_n > 0, \psi_{n-1}=1\}, \quad
B_0 = \sup\{T_n: n \in \Z, T_n \le 0, \psi_{n-1}=1\}.
\label{BBB}
\end{equation}

\subsection{The age of information  for the pushout system}
To compute the law of $\alpha(0)$ we shall use the Palm inversion formula
\begin{equation}
\label{palminv}
\wE f(\alpha(0)) = \frac{\Estar \int_{B_0}^{B_1} f(\alpha(t))\, dt}
{\Estar (B_1-B_0)},
\end{equation}
where $f: \R \to \R$ is bounded and measurable or of constant sign and measurable.
The denominator is easy to compute:
\begin{equation}
\label{Bdiff}
\Estar (B_1-B_0) = \left(\wE \sum_n \psi_{n-1} \1_{0 < T_n < 1}\right)^{-1}
= \left(\lambda \E \int_\R \psi_{-1} \1_{0<t<1} \, dt\right)^{-1}
= \frac{1}{\lambda \P(\tau_0\ge \sigma_0)},
\end{equation}
where we used Campbell's formula.
By the non-triviality assumption \eqref{Seneca}, $\Estar(B_1-B_0)<\infty$.

We will need the following random integer below.
\begin{equation}
\label{Npo}
N: = \inf\{\ell \ge 0:\, \tau_{\ell} \ge \sigma_{\ell}\}
= \min\{\ell \ge 0:\, \tau_{\ell} \ge \sigma_{\ell}\}.
\end{equation}

\begin{theorem}
\label{alphagen}
Consider the pushout system under stationarity assumptions
and assume that \eqref{Seneca} holds.
Let $F: \R_+ \to \R$ be a bounded absolutely continuous function with a.e.\ derivative $F'$.
Then
\begin{equation}
\label{Socrates}
\wE F'(\alpha(0)) 
= \lambda \E\left[ F\bigg(\tau_{-1}+ \sum_{i=0}^{N-1} \tau_i+ \sigma_N\bigg)
- F(\sigma_N);\, \tau_{-1}> \sigma_{-1} \right],
\end{equation}
where $N$ is defined in \eqref{Npo}.
\end{theorem}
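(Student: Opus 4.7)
The plan is to apply the Palm inversion formula \eqref{palminv} with $f = F'$, and to analyse the trajectory of $\alpha(\cdot)$ over a single cycle under $\Pstar$. Throughout I use $\Pstar = \P(\cdot \mid \psi_{-1}=1) = \P(\cdot \mid \tau_{-1} \ge \sigma_{-1})$ together with $B_0 = T_0 = 0$ a.s., from \eqref{PstarPpo} and \eqref{BTone}, and the Campbell-type identity $\Estar(B_1-B_0) = [\lambda \P(\tau_{-1}\ge \sigma_{-1})]^{-1}$ of \eqref{Bdiff}.

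First I describe the geometry of a typical cycle. In the pushout system $\psi_i = \1_{\tau_i \ge \sigma_i}$, so the first index in $\{0,1,2,\ldots\}$ whose message reads to completion is $N$; hence $B_1 = T_{N+1}$. Messages $0, 1, \ldots, N-1$ are pushed out in succession, the server reads message $N$ during $[T_N, T_N+\sigma_N]$, and idles on $[T_N+\sigma_N, T_{N+1})$. The previous cycle contributes a successful arrival at $T_{-1} = -\tau_{-1}$ whose successful departure occurs at $T_{-1} + \sigma_{-1} \le 0$.

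Next I compute $\alpha(t) = t - S_{D_t}$ piecewise on the cycle. On $[0, T_N+\sigma_N)$ no arrival of the current cycle has yet produced a successful departure, so $D_t = T_{-1}+\sigma_{-1}$, the last successful arrival before this time is $T_{-1}$, and therefore $\alpha(t) = t + \tau_{-1}$. On $[T_N+\sigma_N, T_{N+1})$ the last successful departure is $T_N + \sigma_N$ with corresponding arrival $T_N = \sum_{i=0}^{N-1}\tau_i$, yielding $\alpha(t) = t - T_N$. Since $F$ is absolutely continuous, the fundamental theorem of calculus gives
\begin{equation*}
\int_{B_0}^{B_1} F'(\alpha(t))\, dt
= F\Big(\tau_{-1} + \sum_{i=0}^{N-1}\tau_i + \sigma_N\Big) - F(\tau_{-1}) + F(\tau_N) - F(\sigma_N).
\end{equation*}

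The remaining step, which I expect to be the only delicate point, is to show that the two middle terms cancel under $\Estar$, i.e.\ that $\Estar F(\tau_{-1}) = \Estar F(\tau_N)$. This rests on the invariance of $\Pstar$ under the cycle shift $\theta_{B_1}$: because $B_1 = T_{N+1}$, this shift sends the old variable $\tau_N = T_{N+1}-T_N$ into the slot occupied by the new $\tau_{-1}$, so the two share the same $\Pstar$-law. Combining this cancellation with \eqref{Bdiff}, the Palm inversion formula \eqref{palminv} converts the $\Estar$ on the right-hand side back into a $\P$-expectation restricted to the event $\{\tau_{-1} \ge \sigma_{-1}\}$; the factor $\P(\tau_{-1}\ge\sigma_{-1})$ from the conditioning cancels exactly against the same factor in $\Estar(B_1-B_0)^{-1} = \lambda \P(\tau_{-1}\ge\sigma_{-1})$, producing the right-hand side of \eqref{Socrates}.
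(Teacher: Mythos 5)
Your proposal is correct and follows essentially the same route as the paper's proof: Palm inversion over the cycle $[B_0,B_1)=[T_0,T_{N+1})$, the same piecewise description $\alpha(t)=t+\tau_{-1}$ on $[0,T_N+\sigma_N)$ and $\alpha(t)=t-T_N$ on $[T_N+\sigma_N,T_{N+1})$, and the cancellation $\Estar F(\tau_{-1})=\Estar F(\tau_N)$ before dividing by $\Estar(B_1-B_0)=1/\lambda\P(\tau_{-1}\ge\sigma_{-1})$. Your explicit justification of that cancellation via invariance of $\Pstar$ under $\theta_{B_1}$ is exactly the shift-invariance the paper relies on, so there is nothing missing.
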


\begin{proof}
We have $N < \infty$ because of stationarity and hence the expression
in the brackets of \eqref{Socrates} makes sense.
Message $N$ is successful ($\psi_N=1$) and, by the first of \eqref{BBB}
and \eqref{BTone}, 
\[
B_0=T_0=0,\,
B_0' = T_N+\sigma_N,\,
B_1 = T_{N+1},  \quad \Pstar\text{-a.s.}
\]
To compute the integral in  the numerator of \eqref{palminv} 
we take a close look at the function $\alpha$
restricted on the interval $[B_0, B_1)=[T_0,T_{N+1})$.
Note that the only successful departures are precisely the points
$B_k'$ where reading periods end,
whereas the only successful arrivals are the last arrivals on a reading period.
If $T_0 \le t < T_N+\sigma_N$ then $D_t=B_{-1}'$
and so $S_{D_t}=S_{B_1'} = T_{-1}$, since $T_0=0$ initiates a reading period,
so the last successful arrival before this is the arrival that ended the 
previous reading period.
If $T_N +\sigma_N \le t < T_{N+1}$ then
$D_t=B_0'=T_N+\sigma_N$ and $S_{D_t} = S_{B_0'}=T_N$.
Thus,
\[
\alpha(t) =
\begin{cases}
t-T_{-1}, & T_0 \le t < T_N+\sigma_N
\\
t-T_N, & T_N+\sigma_N \le t < T_{N+1}
\end{cases}, \quad
\Pstar\text{-a.s.}
\]
Then, $\Pstar\text{-a.s.}$, $B_0=T_0=0$ (see \eqref{BTone}) and
\begin{align*}
\int_{B_0}^{B_1} f(\alpha(t)) \, dt
= \int_{T_0}^{T_{N+1}} f(\alpha(t))\, dt
&= \int_{T_0}^{T_N+\sigma_N} f(t-T_{-1})\, dt
+ \int_{T_N+\sigma_N}^{T_{N+1}} f(t-T_N)\, dt
\\
&= F(T_N+\sigma_N-T_{-1})-F(T_0-T_{-1})
+F(T_{N+1}-T_N)-F(\sigma_N),
\end{align*}
and thus, since $\Estar F(T_0-T_{-1}) = \Estar F(T_{N+1}-T_N)$,
\[
\Estar \int_{B_0}^{B_1} f(\alpha(t)) \, dt
= \Estar \left[ F\bigg(\tau_{-1}+ \sum_{i=0}^{N-1} \tau_i+ \sigma_N\bigg)
- F(\sigma_N)\right].
\]
We can rewrite  \eqref{Bdiff} as  
$\Estar (B_1-B_0) = 1/\lambda \P(\tau_{-1} \ge \sigma_{-1})$.
Dividing the last display by this expression and
using the relation \eqref{PstarPpo} between $\Pstar$ and $\P$ we arrive at
\eqref{Socrates}.
\end{proof}

At this level of generality it is not possible to have a more explicit formula.
However, given information about the law of the sequence $(\tau_n, \sigma_n)$,
$n \in \Z$, we can proceed further. 
For example, assuming that the  $\tau_n$, $n \in \Z$, is
independent of $\sigma_n$, $n \in \Z$, and both sequences have known laws
then a further simplification is possible.
If, in addition, the $\P$-law of one of the sequences is that of
i.i.d.\ exponential random variables then it is possible to elaborate further
and derive an almost closed-form formula.
\begin{theorem}
\label{alphaiid}
Consider the pushout system and
assume that $(\tau_n, \sigma_n)$, $n \in \Z$, is i.i.d.\ under $\P$
and such that $\E \tau_0 <\infty$ and $\P(\tau_0 \ge \sigma_0) > 0$.
Assume further that $\tau_n$ is independent of $\sigma_n$ for all $n$.
Then, for $u>0$,
\begin{equation}
\label{aLAP}
\wE e^{-u \alpha(0)} 
= \frac{1-\E e^{-u\tau}}{u \E\tau}\,
\frac{\E [e^{-u \sigma}; \tau \ge \sigma]}{1-\E[e^{-u \tau}; \tau < \sigma]}
\end{equation}
In particular, under $\wP$, $\alpha(0)$ is the sum of two independent
random variables:
\begin{equation}
\label{decomp1}
\alpha(0) \eqdist \overline \tau + \mathbf R , 
\end{equation}
where $\overline \tau$ is the stationary version of $\tau$
and $\mathbf R$ is a typical reading interval length.
\end{theorem}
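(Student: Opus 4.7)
The plan is to evaluate the general formula \eqref{Socrates} with a suitable test function $F$, and then exploit the renewal assumptions to factor the resulting expectation. To obtain the Laplace transform of $\alpha(0)$ directly, I would take
\[
F(x) = \frac{1-e^{-ux}}{u}, \qquad F'(x)=e^{-ux},
\]
which is bounded and absolutely continuous on $\R_+$. Plugging into \eqref{Socrates} gives
\[
\wE e^{-u\alpha(0)}
= \frac{\lambda}{u}\,\E\!\left[e^{-u\sigma_N}-e^{-u(\tau_{-1}+\sum_{i=0}^{N-1}\tau_i+\sigma_N)};\, \tau_{-1}>\sigma_{-1}\right].
\]

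Next I would split off the index $-1$ from the indices $\geq 0$. Under the i.i.d.\ assumption, the pair $(\tau_{-1},\sigma_{-1})$ is independent of $(\tau_i,\sigma_i)_{i\geq 0}$, and the random variables $N$, $\sigma_N$ and $\sum_{i=0}^{N-1}\tau_i$ are measurable with respect to the latter. Hence the above expectation factorises into the pieces
\[
\P(\tau>\sigma)\,\E e^{-u\sigma_N}\;-\;\E[e^{-u\tau};\tau>\sigma]\,\E\bigl[e^{-u(\sum_{i=0}^{N-1}\tau_i+\sigma_N)}\bigr].
\]

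Now I compute the two $N$-dependent expectations via a découpage argument, decomposing on $\{N=k\}$. The event $\{N=k\}$ is $\bigcap_{i<k}\{\tau_i<\sigma_i\}\cap\{\tau_k\geq\sigma_k\}$, and by independence across $i$,
\[
\E\!\left[e^{-u(\sum_{i=0}^{N-1}\tau_i+\sigma_N)}\right]
=\sum_{k\geq 0}\E[e^{-u\tau};\tau<\sigma]^{k}\,\E[e^{-u\sigma};\tau\geq\sigma]
=\frac{\E[e^{-u\sigma};\tau\geq\sigma]}{1-\E[e^{-u\tau};\tau<\sigma]},
\]
and similarly $\E e^{-u\sigma_N}=\E[e^{-u\sigma};\tau\geq\sigma]/\P(\tau\geq\sigma)$. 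Substituting back (and using $\P(\tau>\sigma)=\P(\tau\geq\sigma)$ up to atoms, which is the convention already used in \eqref{PstarPpo}) yields
\[
\wE e^{-u\alpha(0)}
=\frac{\lambda}{u}\,\E[e^{-u\sigma};\tau\geq\sigma]\,
\frac{1-\E e^{-u\tau}}{1-\E[e^{-u\tau};\tau<\sigma]}.
\]
Recalling that $\lambda=1/\E\tau$ gives exactly \eqref{aLAP}.

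Finally, for the decomposition \eqref{decomp1}, I would simply recognise the two factors. The first factor $(1-\E e^{-u\tau})/(u\E\tau)$ is, by the formula recorded in the notation section, the Laplace transform of the stationary version $\overline{\tau}$. The second factor is the $\Pstar$-Laplace transform of $\mathbf R = B_0'-B_0 = \sum_{i=0}^{N-1}\tau_i+\sigma_N$, where I use that under $\Pstar=\P(\cdot\mid\tau_{-1}\geq\sigma_{-1})$ the law of $(\tau_i,\sigma_i)_{i\geq 0}$ is unchanged by independence. Since the Laplace transform of the sum factors into a product, $\alpha(0)\eqdist\overline\tau+\mathbf R$ with the summands independent. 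The only subtle point in the whole argument is the clean separation of the index $-1$ from the forward indices, which rests squarely on the i.i.d./independence hypothesis; after that step everything reduces to a geometric sum.
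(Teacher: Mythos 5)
Your proposal is correct and follows essentially the same route as the paper: it evaluates the Palm formula \eqref{Socrates} at an exponential test function, uses the i.i.d.\ structure to split off the index $-1$ and sum a geometric series over $\{N=k\}$ (a hands-on version of the d\'ecoupage de L\'evy, Lemma \ref{lemddL}, that the paper invokes), and then reads off the decomposition \eqref{decomp1} by matching the second factor with the $\Pstar$-law of $T_N+\sigma_N=B_0'-B_0$. The only cosmetic differences are your choice $F(x)=(1-e^{-ux})/u$ instead of $F(x)=e^{-ux}$ and your explicit handling of the strict-versus-weak inequality convention, which the paper treats the same way implicitly.
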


\begin{corollary}
The $\wP$-distribution of $\alpha(0)$ is absolutely continuous.
\end{corollary}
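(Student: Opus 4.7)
The plan is to exploit the stochastic decomposition \eqref{decomp1} established in Theorem \ref{alphaiid}, namely $\alpha(0) \eqdist \overline\tau + \mathbf R$ with $\overline\tau$ and $\mathbf R$ independent under $\wP$. The key structural fact is that the convolution of an absolutely continuous distribution with an arbitrary probability distribution is again absolutely continuous, so it suffices to exhibit one of the two summands as absolutely continuous.

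First I would recall from the notation section that $\overline \tau$ is the stationary version of the interarrival time $\tau$, so by construction it has the explicit Lebesgue density
\[
f_{\overline\tau}(x) = \frac{\P(\tau > x)}{\E \tau}, \quad x \ge 0,
\]
which is well-defined and finite since $\E\tau = 1/\lambda < \infty$ by the standing stationarity assumption. In particular, the law of $\overline\tau$ is absolutely continuous with respect to Lebesgue measure on $\R_+$.

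Next, using independence of $\overline \tau$ and $\mathbf R$, I would write, for any Borel set $B \subset \R$ of Lebesgue measure zero,
\[
\wP(\alpha(0) \in B) = \int \P(\overline\tau \in B - r) \, \P(\mathbf R \in dr) = 0,
\]
because, for every fixed $r$, the set $B-r$ has Lebesgue measure zero and $\overline\tau$ has a density. Equivalently, by Fubini, $\alpha(0)$ has the explicit density
\[
f_{\alpha(0)}(z) = \wE\left[\frac{\P(\tau > z - \mathbf R)}{\E\tau}\, \1_{\mathbf R \le z}\right], \quad z \ge 0,
\]
which establishes absolute continuity.

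There is no real obstacle here: the content of the corollary is simply an unpacking of \eqref{decomp1}, together with the elementary observation that convolution smooths. Alternatively, one can read the same conclusion directly from the Laplace transform \eqref{aLAP}, since the factor $(1-\E e^{-u\tau})/(u\E\tau)$ is exactly the Laplace transform of the absolutely continuous law of $\overline\tau$, and products of Laplace transforms correspond to convolutions of distributions; the second factor is the Laplace transform of some probability measure (the law of $\mathbf R$), so the product is the Laplace transform of an absolutely continuous distribution.
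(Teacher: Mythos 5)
Your argument is correct and is exactly the route the paper intends: the corollary is an immediate consequence of the decomposition \eqref{decomp1}, since $\overline\tau$ has the density $\P(\tau>x)/\E\tau$ and convolving an absolutely continuous law with the independent law of $\mathbf R$ yields an absolutely continuous law. The paper leaves this unpacking implicit, so your write-up (including the equivalent reading off of \eqref{aLAP}) matches its approach and fills in the details correctly.
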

To prove Theorem \ref{alphaiid}, 
we shall make use of the following elementary fact, often known
under the name ``d\'ecoupage de L\'evy''.
\begin{lemma}
\label{lemddL}
Let $X_1, X_2, \ldots$ be i.i.d.\ random elements in an arbitrary measurable space 
$(S, \SS)$ with common law $\mu$ and let  $B \in \SS$ have $\mu(B)>0$.
Let  $N=\inf\{n\ge1: X_n \in B\}$. Then
\begin{enumerate}
\item[(i)] 
$(X_1,\ldots,X_{N-1})$ is independent of $X_N$;
\item[(ii)]  $X_N$ has law $\mu(\cdot|B)$;
\item[(iii)] $\P(N=n) = \mu(S-B)^{n-1}\mu(B)$, $n \ge 1$.
\end{enumerate}
Moreover, the distribution of $(X_1, \ldots, X_N)$ can be expressed neatly as follows.
Let $X'', X'_1, X'_2, \ldots$ be {independent} random elements,
and independent of $N$,
such that
\[
\P(X'' \in \cdot) =\mu(\cdot|B), \quad 
\P(X_i'\in \cdot) = \mu(\cdot|S-B), \quad i=1,2,\ldots
\]
Then 
\[
(X_1, \ldots, X_N) \eqdist (X_1',\ldots,X'_{N-1}, X''),
\]
where, by definition, $(X_1',\ldots,X'_{N-1}, X'')=X''$ if $N=1$.
\end{lemma}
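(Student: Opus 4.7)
The plan is to reduce everything to a single joint probability computation on the event $\{N=n\}$, and then read off each of (i), (ii), (iii) and the final representation as consequences. The key observation is that the event $\{N=n\}$ is the product event $\{X_1\in S\setminus B,\ldots,X_{n-1}\in S\setminus B,\ X_n\in B\}$, so by independence of the $X_i$, for any measurable $A_1,\ldots,A_{n-1},A\in\SS$,
\[
\P(X_1\in A_1,\ldots,X_{n-1}\in A_{n-1},\ X_n\in A,\ N=n)
= \mu(A\cap B)\prod_{i=1}^{n-1}\mu(A_i\cap(S\setminus B)).
\]
This is the only substantive computation; everything else is bookkeeping.

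From this master identity I would deduce (iii) by taking $A_1=\cdots=A_{n-1}=A=S$. Then (ii) follows by taking $A_1=\cdots=A_{n-1}=S$ and summing over $n\ge 1$, using the geometric series $\sum_{n\ge 1}\mu(S\setminus B)^{n-1}=1/\mu(B)$, which yields $\P(X_N\in A)=\mu(A\cap B)/\mu(B)=\mu(A\mid B)$. For (i), the master identity factors the joint law on $\{N=n\}$ into the product $\mu(A\cap B)\cdot\prod_{i<n}\mu(A_i\cap(S\setminus B))$; dividing and multiplying by appropriate normalizers shows that, conditionally on $N=n$, the random elements $X_1,\ldots,X_{n-1},X_n$ are independent, with $X_i$ having law $\mu(\cdot\mid S\setminus B)$ for $i<n$ and $X_n$ having law $\mu(\cdot\mid B)$. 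Since the conditional law of $X_N$ does not depend on $N$, one gets unconditional independence of $(X_1,\ldots,X_{N-1})$ from $X_N$, and along the way the law of $(X_1,\ldots,X_{N-1})$ conditional on $N=n$ is identified as a product of $(n-1)$ copies of $\mu(\cdot\mid S\setminus B)$.

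For the final representation, I would simply verify that the joint law of $(X_1',\ldots,X_{N-1}',X'')$ agrees with the one just computed: by the assumed mutual independence and independence from $N$,
\[
\P(X_1'\in A_1,\ldots,X_{N-1}'\in A_{N-1},\ X''\in A,\ N=n)
= \mu(S\setminus B)^{n-1}\mu(B)\prod_{i=1}^{n-1}\mu(A_i\mid S\setminus B)\cdot\mu(A\mid B),
\]
which collapses, after cancelling the $\mu(S\setminus B)^{n-1}\mu(B)$ factors against the conditional densities, to exactly the master identity. Hence $(X_1,\ldots,X_N)\eqdist(X_1',\ldots,X_{N-1}',X'')$.

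There is no serious obstacle here; the only mild subtlety is handling the degenerate case $N=1$ (the tuple $(X_1,\ldots,X_{N-1})$ is then empty) and pedantically verifying measurability of the sets involved, both of which are handled by the convention stated in the lemma and by $B\in\SS$. The proof is essentially a formalization of the intuitive statement that ``rejecting draws from $S\setminus B$ until the first acceptance in $B$'' produces an accepted sample with law $\mu(\cdot\mid B)$, independent of the rejected ones, each of which is distributed as $\mu(\cdot\mid S\setminus B)$.
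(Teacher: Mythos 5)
Your proof is correct; the paper itself omits the argument entirely (declaring it trivial), and your master identity $\P(X_1\in A_1,\ldots,X_{n-1}\in A_{n-1},X_n\in A,\,N=n)=\mu(A\cap B)\prod_{i=1}^{n-1}\mu(A_i\cap(S\setminus B))$ is exactly the standard computation the authors have in mind, from which (i)--(iii) and the representation of $(X_1,\ldots,X_N)$ follow just as you describe. The only point worth making explicit is that $\mu(B)>0$ gives $\P(N>n)=\mu(S\setminus B)^n\to 0$, so $N<\infty$ almost surely and the geometric summation in (ii) is legitimate — which your argument uses implicitly and correctly.
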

The proof is trivial and is thus omitted.

\begin{proof}[Proof of Theorem \ref{alphaiid}]
For fixed $u>0$, let $F(x) = e^{-u x}$, $x \ge 0$.
Then $F'(x)=-u e^{-ux}$ and $F(x_1+x_2)=F(x_1)F(x_2)$ for all $x_1, x_2 \ge 0$.
With a view towards applying Lemma \ref{lemddL} to the
sequence $(\tau_n,\sigma_n)$, $n \ge 0$,
let $B := \{(t,s) \in \R^2: t\ge s \ge 0\}$.
For simplicity,
let 
\[
p:= \P(\tau \ge  \sigma), \quad q =1-p.
\]
By \eqref{Socrates},
\begin{align*}
\wE F'(\alpha(0)) &= \lambda p\,
\Estar\left[ F\bigg(\tau_{-1}+ \sum_{i=0}^{N-1} \tau_i+ \sigma_N\bigg)
- F(\sigma_N)\right]
= \lambda p\, 
\E\left[ F\bigg(\tau''+ \sum_{i=0}^{N-1} \tau_i'+ \sigma''\bigg)
- F(\sigma'')\right],
\end{align*}
where $N, \tau'', \tau_1',\tau_2', \ldots, \sigma''$ are independent random variables
such that 
\begin{equation}
\label{newvars}
\text{$\P(N=n)=q^n p$,\quad
$\tau'' \eqdist (\tau | \tau > \sigma)$,\quad
$\sigma'' \eqdist (\sigma | \tau > \sigma)$,\quad
$\tau' \eqdist (\tau| \tau \le  \sigma)$.}
\end{equation}
Hence, letting $F(x)=e^{-ux}$ for some fixed $u>0$ we have
\begin{align*}
\wE F'(\alpha(0)) &=  \lambda p\, \E \left\{ F(\tau'')  F(\sigma'')  \prod_{i=0}^{N-1} F(\tau_i') - F(\sigma'') \right\}
= \lambda p\, \E F(\sigma'') \,
 \left\{ \E F(\tau'')\,  \E[( \E F(\tau'))^N] - 1 \right\}
\\
&= \lambda p\, \E F(\sigma'') \,
 \left\{ \E F(\tau'')\,  \frac{p}{1-q \E F(\tau')} - 1 \right\}
=  \lambda p\,  \frac{ \E F(\sigma'') \, (\E F(\tau)-1)}{1-q \E F(\tau')},
\end{align*}
whence, after a little algebra, we obtain \eqref{aLAP}:
\begin{align*}
- u \wE e^{-u \alpha(0)}
&= \lambda (\E e^{-u\tau}-1)\, 
\frac{p \E e^{-u \sigma''}}{1-q \E e^{-u \tau'}}
= \lambda (\E e^{-u\tau}-1)\, 
\frac{\E[ e^{-u \sigma}; \tau \ge  \sigma]}{1-\E[e^{-u \tau}; \tau <\sigma]}.
\end{align*}
To prove \eqref{decomp1} note that the first term in \eqref{aLAP}
equals  $\frac{1-\E e^{-u\tau}}{u \E\tau}$ is equal to
$\E e^{-u \overline \tau}$.
So $\alpha(0) \eqdist \overline \tau+Y$ where $Y$ is an 
independent random variable
whose Laplace transform is the second term in \eqref{aLAP}:
\begin{equation}
\label{Ylap}
\E e^{-u Y} 
= \frac{\E [e^{-u \sigma}; \tau \ge \sigma]}{1-\E[e^{-u \tau}; \tau < \sigma]}.
\end{equation}
Recalling that $N$ is the index of the first successful  arrival  after
the origin, we see that, again after a little algebra involving 
a geometric series,
\begin{equation}
\label{Ylaplap}
\E e^{-u (T_N+\sigma_N) } 
= \E \sum_{n=0}^\infty e^{-u(\tau_0+\cdots+\tau_{n-1}+\sigma_n)}\,
\1_{\tau_0 < \sigma_0, \ldots, \tau_{n-1} < \sigma_{n-1}, \tau_n\ge \sigma_n}
= \frac{\E [e^{-u \sigma}; \tau \ge \sigma]}{1-\E[e^{-u \tau}; \tau < \sigma]}.
\end{equation}
This shows that  $\E e^{-u Y} = \E e^{-u (T_N+\sigma_N) }$
for all $u>0$, and thus
\[
Y \eqdist T_N+\sigma_N.
\]
But, $\Pstar$--a.s.,  $T_N+\sigma_N=B_0'-B_0 \eqdist \mathbf R$.
\end{proof}

\begin{remark}
We may decompose $\alpha(0)$ in a different way.
Rearranging terms in the $\wP$-Laplace transform of $\alpha(0)$ we have
\[
\widetilde \E e^{-u \alpha(0)}
= \E e^{-u \sigma''} \,  \frac{\lambda p}{u}
\frac{1-\E e^{-u\tau}}{1-q\E e^{-u \tau'}},
\]
which implies that there is a second decomposition for the law of $\alpha(0)$:
\[
\alpha(0) \eqdist \sigma'' + Z, 
\]
where $\sigma''$ and $Z$ are independent random variables,
with $\sigma''$ having the law of $\sigma$ conditional on $\tau\ge \sigma$
and $Z$ having Laplace transform $({\lambda p}/{u})
({1-\E e^{-u\tau}})/({1-q\E e^{-u \tau'}})$.
\end{remark}

\begin{corollary}\label{cor:AoI-PO-mean}
Under the assumptions of Theorem \ref{alphaiid}, we have
\begin{equation}
\label{a1}
\wE \alpha(0) = \frac{\E \tau^2}{2\E \tau} + \frac{\E \tau \wedge \sigma}{
\P(\tau \geq \sigma)}.
\end{equation}
\end{corollary}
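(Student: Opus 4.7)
The plan is to start from the stochastic decomposition \eqref{decomp1}, namely $\alpha(0) \eqdist \overline\tau + \mathbf R$ with $\overline\tau$ and $\mathbf R$ independent under $\wP$. By independence, $\wE\alpha(0) = \E\overline\tau + \E\mathbf R$, so the task reduces to computing these two expectations separately.

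The first term is standard: since $\overline\tau$ is the stationary version of $\tau$, it has density $\P(\tau>x)/\E\tau$, which gives $\E\overline\tau = \E\tau^2/(2\E\tau)$. This was already noted in the notation subsection. So the real content is the identification
\[
\E \mathbf R \;=\; \frac{\E(\tau\wedge\sigma)}{\P(\tau\ge\sigma)}.
\]

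For this I would use the representation established in the proof of Theorem \ref{alphaiid}: under $\Pstar$ (and $\P(\cdot\mid \tau_0\ge\sigma_0)$) we have $\mathbf R = T_N+\sigma_N = \sum_{i=0}^{N-1}\tau_i + \sigma_N$, with $N$ as in \eqref{Npo}. Then I would apply the d\'ecoupage de L\'evy (Lemma \ref{lemddL}) to the i.i.d.\ sequence $X_n=(\tau_n,\sigma_n)$ with $B=\{(t,s): t\ge s\ge 0\}$. This gives independent variables $N, \tau_1',\tau_2',\ldots, \sigma''$ with $\P(N=n)=q^n p$ (where $p=\P(\tau\ge\sigma)$, $q=1-p$), $\tau'\eqdist(\tau\mid\tau<\sigma)$, $\sigma''\eqdist(\sigma\mid\tau\ge\sigma)$, such that $\sum_{i=0}^{N-1}\tau_i + \sigma_N \eqdist \sum_{i=1}^{N}\tau_i' + \sigma''$. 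Taking expectations via Wald gives
\[
\E\mathbf R \;=\; \E N\cdot \E\tau' + \E\sigma''
\;=\; \frac{q}{p}\cdot \frac{\E[\tau;\tau<\sigma]}{q} + \frac{\E[\sigma;\tau\ge\sigma]}{p}
\;=\; \frac{\E[\tau;\tau<\sigma]+\E[\sigma;\tau\ge\sigma]}{p},
\]
and the numerator is exactly $\E(\tau\wedge\sigma)$. Adding the two pieces yields \eqref{a1}.

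There is no real obstacle here: everything follows from assembling \eqref{decomp1}, the standard mean of the stationary spacing, and a single application of d\'ecoupage de L\'evy. The only mildly delicate point is keeping the conditioning bookkeeping straight so that the product $\E N\cdot\E\tau'$ correctly reassembles $\E[\tau;\tau<\sigma]$ and that $\E\sigma''$ reassembles $\E[\sigma;\tau\ge\sigma]$, after which the identity $\E[\tau;\tau<\sigma]+\E[\sigma;\tau\ge\sigma]=\E(\tau\wedge\sigma)$ finishes the proof. (As a sanity check one could alternatively differentiate \eqref{aLAP} at $u=0$ and recover the same answer, but the decomposition route is cleaner and avoids an $L'$Hôpital-type expansion.)
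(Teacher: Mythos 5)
Your proposal is correct and follows essentially the same route as the paper: it reads off $\wE\alpha(0)=\E\overline\tau+\E\mathbf R$ from the decomposition \eqref{decomp1} and identifies $\E\mathbf R=\E(T_N+\sigma_N)=\E(\tau\wedge\sigma)/\P(\tau\ge\sigma)$, which is exactly the paper's (more tersely stated) argument. Your Wald/d\'ecoupage computation simply supplies the details the paper leaves implicit, and it is carried out correctly.
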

\begin{proof}
Look at \eqref{decomp1}. We have
$\E \overline \tau = \E \tau^2/2\E \tau$
and 
\[
\E \mathbf R
= \E(T_N+\sigma_N) = \frac{\E \tau \wedge \sigma}{p}.
\]
\end{proof}

\begin{corollary}
\label{coroxxx}
Under the assumptions of Theorem \ref{alphaiid},
and if, in addition, the variables $\sigma_n$ are exponential with rate $\mu$,
then, under $\wP$,
\[
\alpha(0) \eqdist \overline \tau + \frac{\mathbf e}{\mu},
\]
where $\e$ is a rate-1 exponential random variable, 
independent of $\overline \tau$
and so
\[
\wE \alpha(0) = \frac{\E \tau^2}{2\E \tau} + \frac{1}{\mu}.
\]

\end{corollary}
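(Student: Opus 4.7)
The plan is to specialize the decomposition \eqref{decomp1} from Theorem \ref{alphaiid} to the case where $\sigma$ is exponential with rate $\mu$. By that theorem, under $\wP$ we already have $\alpha(0) \eqdist \overline\tau + \mathbf R$ with $\overline\tau$ and $\mathbf R$ independent, and the Laplace transform of $\mathbf R$ given by \eqref{Ylap}. So the whole task reduces to identifying $\mathbf R$ as $\mathbf e/\mu$ when $\sigma$ is $\mathrm{Exp}(\mu)$.

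To do that I would compute the two pieces of \eqref{Ylap} separately, exploiting that $\sigma$ is exponential and independent of $\tau$. For the numerator, I condition on $\tau$ and integrate the density $\mu e^{-\mu s}$ of $\sigma$ against $e^{-us}$ over $[0,\tau]$, which gives
\[
\E[e^{-u\sigma};\tau\ge\sigma] = \frac{\mu}{\mu+u}\,\bigl(1-\E e^{-(\mu+u)\tau}\bigr).
\]
For the denominator, using $\P(\sigma>\tau\mid\tau)=e^{-\mu\tau}$,
\[
\E[e^{-u\tau};\tau<\sigma] = \E e^{-(\mu+u)\tau},
\]
so the ratio in \eqref{Ylap} collapses to $\mu/(\mu+u)$, which is exactly the Laplace transform of $\mathbf e/\mu$. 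This identifies $\mathbf R \eqdist \mathbf e/\mu$ and yields the claimed distributional decomposition.

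The mean formula then follows immediately: by independence, $\wE\alpha(0)=\E\overline\tau+\E(\mathbf e/\mu)=\E\tau^2/(2\E\tau)+1/\mu$, where the first term is the standard formula for the mean of the stationary version recorded in the Notation section. Alternatively one can get it from Corollary \ref{cor:AoI-PO-mean} by noting that $\E(\tau\wedge\sigma)/\P(\tau\ge\sigma)=\E(\tau\wedge\sigma)/\E e^{-\mu\tau}$ and that for $\sigma\sim\mathrm{Exp}(\mu)$ one has $\E(\tau\wedge\sigma)=\mu^{-1}(1-\E e^{-\mu\tau})$, which again gives $1/\mu$ after simplification; the two routes agree.

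No step here is really an obstacle: the only thing to be slightly careful about is the independence assumption and the conditional computation $\P(\sigma>\tau\mid\tau)=e^{-\mu\tau}$, which crucially uses both memorylessness and the independence of $\sigma$ from $\tau$. The rest is algebra on Laplace transforms, and the pleasing cancellation of the factor $1-\E e^{-(\mu+u)\tau}$ is what makes the final answer free of the interarrival distribution apart from via $\overline\tau$.
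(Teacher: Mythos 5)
Your proposal is correct and follows essentially the same route as the paper: specialize the decomposition \eqref{decomp1}, compute $\E[e^{-u\sigma};\tau\ge\sigma]=\frac{\mu}{\mu+u}(1-\E e^{-(\mu+u)\tau})$ and $\E[e^{-u\tau};\tau<\sigma]=\E e^{-(\mu+u)\tau}$ using memorylessness and independence, and observe that the ratio \eqref{Ylap} collapses to $\mu/(\mu+u)$, identifying $\mathbf R$ as exponential with rate $\mu$. The mean formula then follows exactly as you say.
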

\begin{proof}
We use \eqref{decomp1}.
We just have to show that the reading interval length $\mathbf R$ is exponential with rate $\mu$. Since
\begin{align*}
\E [e^{-u \sigma}; \tau \ge \sigma]
&= \E \int_0^\tau e^{-u s} \mu e^{-\mu s} ds
=\mu  \E \int_0^\tau e^{-(u+\mu) s} ds
= \frac{\mu}{u+\mu} [1-\E e^{-(u+\mu)\tau}],
\\
\E[e^{-u \tau}; \tau < \sigma]
&= \E e^{-u \tau} \P(\sigma \ge \tau|\tau)
= \E  e^{-u \tau}  e^{-\mu \tau}
= \E  e^{-(u+\mu) \tau},
\end{align*}
we have, from \eqref{Ylap}, that the Laplace transform of $\mathbf R$ is
\[
\E e^{-u \mathbf R} =\frac{\E [e^{-u \sigma}; \tau \ge \sigma]}{1-\E[e^{-u \tau}; \tau < \sigma]}
=\frac{\frac{\mu}{u+\mu} [1-\E e^{-(u+\mu)\tau}]}
{1-\E  e^{-(u+\mu) \tau}} = \frac{\mu}{u+\mu}.
\]
\end{proof}

\begin{corollary}
\label{cor:four}
Under the assumptions of Theorem \ref{alphaiid},
and if, in addition, the variables $\tau_n$ are exponential with rate $\lambda$,
then
{
\[
\widetilde \E e^{-u \alpha(0)}
= \frac{\lambda \E e^{-(\lambda+u)\sigma}}{u+\lambda \E e^{-(\lambda+u)\sigma}},
\quad
\wE \alpha(0) = \frac{1}{ \lambda\, \E e^{-\lambda \sigma}}.
\]
}
\end{corollary}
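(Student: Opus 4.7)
The plan is to specialize formula \eqref{aLAP} of Theorem \ref{alphaiid} to the case $\tau \sim \mathrm{Exp}(\lambda)$ and simplify. The computation splits naturally into the two factors of \eqref{aLAP}, and the memoryless property of $\tau$ is what collapses the general expressions into closed form. Since the $\sigma_n$ are not assumed exponential, the answer will involve $\E e^{-(u+\lambda)\sigma}$ rather than a pure rational function.

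First I would evaluate the prefactor $(1-\E e^{-u\tau})/(u\E\tau)$. Using $\E e^{-u\tau}=\lambda/(u+\lambda)$ and $\E\tau=1/\lambda$, this is just $\lambda/(u+\lambda)$ (which is the Laplace transform of $\overline\tau$, again exponential with rate $\lambda$, as expected from \eqref{decomp1}). Next I would compute the two conditional expectations in the second factor by conditioning on $\sigma$ and exploiting $\P(\tau\ge s)=e^{-\lambda s}$:
\[
\E[e^{-u\sigma};\,\tau\ge\sigma]=\E e^{-(u+\lambda)\sigma},\qquad
\E[e^{-u\tau};\,\tau<\sigma]=\frac{\lambda}{u+\lambda}\bigl(1-\E e^{-(u+\lambda)\sigma}\bigr),
\]
the second identity following from $\int_0^\sigma e^{-ut}\lambda e^{-\lambda t}\,dt=\frac{\lambda}{u+\lambda}(1-e^{-(u+\lambda)\sigma})$. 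Substituting these into \eqref{aLAP} and combining common denominators gives
\[
\widetilde\E e^{-u\alpha(0)}
=\frac{\lambda}{u+\lambda}\cdot\frac{(u+\lambda)\E e^{-(u+\lambda)\sigma}}{u+\lambda\E e^{-(u+\lambda)\sigma}}
=\frac{\lambda\E e^{-(u+\lambda)\sigma}}{u+\lambda\E e^{-(u+\lambda)\sigma}},
\]
which is the claimed Laplace transform.

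For the mean, rather than differentiating the above at $u=0$, I would use Corollary \ref{cor:AoI-PO-mean}, which is already available. With $\tau\sim\mathrm{Exp}(\lambda)$ one has $\E\tau^2/(2\E\tau)=1/\lambda$, while conditioning on $\sigma$ gives $\E(\tau\wedge\sigma)=(1-\E e^{-\lambda\sigma})/\lambda$ and $\P(\tau\ge\sigma)=\E e^{-\lambda\sigma}$. Substituting,
\[
\widetilde\E\alpha(0)=\frac{1}{\lambda}+\frac{1-\E e^{-\lambda\sigma}}{\lambda\E e^{-\lambda\sigma}}=\frac{1}{\lambda\E e^{-\lambda\sigma}}.
\]
There is no real obstacle; the only thing to be careful about is the bookkeeping when combining the fractions in the Laplace transform step, and the consistency check that setting $u=0$ in the Laplace transform yields $1$ (which it does).
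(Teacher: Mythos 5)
Your proposal is correct and follows essentially the same route as the paper: specialize \eqref{aLAP} (equivalently \eqref{Ylap} for the second factor, which the paper identifies as the transform of $\mathbf R$) using the memorylessness of $\tau$, and obtain the mean from Corollary \ref{cor:AoI-PO-mean} with $\E(\tau\wedge\sigma)=(1-\E e^{-\lambda\sigma})/\lambda$ and $\P(\tau\ge\sigma)=\E e^{-\lambda\sigma}$. The intermediate computations and final simplifications match the paper's proof.
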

\begin{proof}
Since $\tau$ is exponential we have $\overline \tau \eqdist \tau$ and so
\[
\E e^{-u \overline \tau} = \E e^{-u \tau} = \frac{\lambda}{u+\lambda}.
\]
Using \eqref{Ylap}, we have
\[
\E e^{-u \mathbf R} 
= \frac{(u+\lambda) \E e^{-(u+\lambda)\sigma}}
{u+\lambda \E e^{-(u+\lambda)\sigma}}.
\]
Equation \eqref{aLAP} says that the Laplace transform of $\alpha(0)$
is the product of the last two displays and so this derives
the first formula.
Next use \eqref{a1}.
Since
\[
\E \tau \wedge \sigma = \frac{1}{\lambda} (1-\E e^{-\lambda\sigma}),
\quad \P(\tau> \sigma) = \E e^{-\lambda \sigma},
\]
we have
\[
\wE\alpha(0) = \frac{1}{\lambda} + \frac{1}{\lambda} \cdot
\frac{1-\E e^{-\lambda\sigma}}{\E e^{-\lambda \sigma}}
= \frac{1}{\lambda \E e^{-\lambda \sigma}}.
\]
\end{proof}

Finally, a direct consequence of either of the above corollaries is:
\begin{corollary}
\label{coroa}
If the $\tau_n$ are i.i.d.\ exponential with rate $\lambda$,
if the $\sigma_n$ are i.i.d.\ exponential with rate $\mu$, and 
if the two sequences are independent, then, under $\wP$,
\[
\alpha(0) \eqdist \frac{\tt \e_1}{\lambda} + \frac{\tt \e_2}{\mu},
\]
where $\tt \e_1, \tt \e_2$ are two independent unit-rate exponential
random variables.
\end{corollary}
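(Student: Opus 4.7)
The plan is to derive the result as an immediate specialization of either Corollary \ref{coroxxx} or Corollary \ref{cor:four}, since both hypotheses (exponential interarrivals and exponential message lengths) are assumed simultaneously here. I would likely present the Laplace transform route via Corollary \ref{cor:four} as the cleanest, and mention the distributional decomposition via Corollary \ref{coroxxx} as a sanity check.

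First, starting from Corollary \ref{cor:four}, I would substitute the exponential law of $\sigma$ with rate $\mu$. Then $\E e^{-(\lambda+u)\sigma} = \mu/(\mu+\lambda+u)$, and plugging this into
\[
\widetilde{\E} e^{-u\alpha(0)} = \frac{\lambda\, \E e^{-(\lambda+u)\sigma}}{u + \lambda\, \E e^{-(\lambda+u)\sigma}}
\]
gives, after a short algebraic simplification,
\[
\widetilde{\E} e^{-u\alpha(0)} = \frac{\lambda\mu}{(u+\lambda)(u+\mu)} = \frac{\lambda}{u+\lambda}\cdot \frac{\mu}{u+\mu}.
\]
The right-hand side is the product of the Laplace transforms of two independent exponential random variables with rates $\lambda$ and $\mu$, i.e.\ the Laplace transform of $\e_1/\lambda + \e_2/\mu$, with $\e_1, \e_2$ independent unit-rate exponentials. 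By uniqueness of Laplace transforms on $\R_+$, this identifies the distribution of $\alpha(0)$ under $\widetilde\P$ as claimed.

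As a complementary derivation, I would note that Corollary \ref{coroxxx} already yields the decomposition $\alpha(0) \eqdist \overline{\tau} + \e/\mu$. Under the additional assumption that $\tau$ is exponential with rate $\lambda$, the memoryless property forces $\overline \tau \eqdist \tau$, so $\overline \tau$ is itself exponential with rate $\lambda$, i.e.\ $\overline \tau \eqdist \e_1/\lambda$. Combined with the independent term $\e/\mu = \e_2/\mu$, this immediately gives the stated representation. There is no substantive obstacle here: the entire content of the corollary is the observation that, in the fully exponential setting, the stationary-version term in \eqref{decomp1} collapses back to an ordinary exponential thanks to memorylessness, and both factors of the product-form Laplace transform become elementary.
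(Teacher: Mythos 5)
Your proposal is correct and follows exactly the paper's route: the paper states Corollary \ref{coroa} as ``a direct consequence of either of the above corollaries,'' and you simply fill in the (correct) details — the algebraic simplification of \eqref{aLAP} via Corollary \ref{cor:four} to $\lambda\mu/((u+\lambda)(u+\mu))$, and the memorylessness observation $\overline\tau \eqdist \tau$ in the decomposition of Corollary \ref{coroxxx}.
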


\subsection{The new age of information for the pushout system}\label{sec:NAoI-PO}
Recall that $\beta(t)=A_t - S_{D_t}$. Under $\wP$, the law of
$\beta(t)$ is independent of $t$.
\begin{lemma}
\label{rematom}
The $\wP$-law of $\beta(t)$ has a nontrivial atom at $0$.
\end{lemma}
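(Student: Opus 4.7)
The plan is to show that the event $\{q(0)=0\}$, that the server is idle at the origin, has positive $\wP$-probability and is contained in $\{\beta(0)=0\}$; any refinement needed to cover degenerate boundary cases can then be handled by a separate direct check.

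For the pointwise containment I would argue as follows. If $q(t)=0$, then $t$ lies in some idle sub-interval $[T_n+\sigma_n,\,T_{n+1})$ of a cycle with $\psi_n=1$. On this sub-interval $A_t=T_n$, and the most recent successful departure not exceeding $t$ is $D_t=T_n+\sigma_n$. Because $T_{n+1}>D_t$, the most recent successful arrival not exceeding $D_t$ is $T_n$ itself, so $S_{D_t}=T_n=A_t$ and $\beta(t)=0$.

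For positivity I would invoke the Palm inversion formula with respect to $\arr$ applied to the bounded function $q$. Under $\P$ the arrival at $T_0=0$ begins being processed immediately; if $\sigma_0\le\tau_0$ it completes at $\sigma_0$ leaving the server idle on $[\sigma_0,\tau_0)$, whereas if $\sigma_0>\tau_0$ it is pushed out at $\tau_0$ and the server is busy throughout $[0,\tau_0)$. In either case $\int_0^{\tau_0}q(s)\,ds=\tau_0\wedge\sigma_0$ $\P$-a.s., so Campbell's formula together with $\lambda\E\tau_0=1$ gives
\[
\wP(q(0)=0)\;=\;1-\lambda\,\E(\tau_0\wedge\sigma_0)\;=\;\lambda\,\E(\tau_0-\sigma_0)^+.
\]
Under the non-triviality assumption \eqref{Seneca} this quantity is strictly positive as soon as $\P(\tau_0>\sigma_0)>0$, yielding $\wP(\beta(0)=0)\ge\wP(q(0)=0)>0$.

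The main (and essentially only) obstacle is the knife-edge situation where $\P(\tau_0\ge\sigma_0)>0$ holds only through the atom $\P(\tau_0=\sigma_0)>0$, so that $\wP(q(0)=0)=0$ and the above bound is vacuous. In that case every successful arrival $T_n$ completes exactly at $T_{n+1}$; a direct inspection of $A_t$, $D_t$ and $S_{D_t}$ on the intervals $[T_n,T_{n+1})$ with $T_n$ successful shows $D_t=T_n$ and $S_{D_t}=T_n=A_t$, so $\beta(t)=0$ there as well (in the extreme $\P(\tau_0=\sigma_0)=1$ one has $\beta\equiv 0$). Hence the atom at $0$ is nontrivial in full generality.
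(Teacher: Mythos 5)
Your main argument is correct and is essentially the paper's own proof: the paper disposes of the lemma by writing $\wP(\beta(t)=0)=\wP(A_t=S_{D_t})=\wP(q(t)=0)>0$, and your containment $\{q(0)=0\}\subseteq\{\beta(0)=0\}$ together with the Palm/Campbell computation $\wP(q(0)=0)=1-\lambda\E(\tau\wedge\sigma)=\lambda\E(\tau-\sigma)^+$ just makes explicit what the paper leaves implicit. Up to that point there is nothing to object to.

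Your treatment of the knife-edge case (where \eqref{Seneca} holds only through the tie $\P(\tau_0=\sigma_0)>0$) is, however, flawed. On $[T_n,T_{n+1})$ with message $n$ successful, the identity $D_t=T_n$ requires a successful departure to occur exactly at $T_n$, i.e.\ it requires message $n-1$ to be successful as well; if message $n-1$ is unsuccessful, then $D_t$ is an earlier epoch, $S_{D_t}$ is the last successful arrival strictly before $T_n$, and $\beta(t)=A_t-S_{D_t}>0$ on that interval. In fact no patch can work in that regime, because the conclusion itself can fail there: take $\tau_n\equiv 1$ and $\sigma_n$ alternating between $1$ and $3/2$ (suitably randomized to be stationary). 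Then \eqref{Seneca} holds, every second message is successful but no two consecutive ones are, the server is never idle, and a direct check gives $\beta(t)\in\{1,2\}$ for all $t$, so the $\wP$-law of $\beta(0)$ has no atom at $0$. The honest conclusion is that the lemma — and the paper's one-line proof, whose claim $\wP(q(t)=0)>0$ also fails in this example — tacitly requires $\P(\tau_0>\sigma_0)>0$ (equivalently, that ties $\tau_0=\sigma_0$ do not carry all the mass in \eqref{Seneca}). Under that reading your argument is complete and coincides with the paper's; your final paragraph claiming the result ``in full generality'' should be deleted or replaced by this observation.
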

\begin{proof}
Indeed,
\[
\wP (\beta(t)=0) = \wP (A_t=S_{D_t})
= \wP(q(t)=0) > 0.
\]
The latter is positive because of the non-triviality assumption \eqref{Seneca}.
\end{proof}

\begin{theorem}
\label{betagen}
Consider the pushout system under stationarity assumptions
and assume that \eqref{Seneca} holds.
Let $f: \R_+ \to \R$ be a  measurable function
that is bounded or nonnegative.
Then
\begin{equation}
\label{betapo}
\wE f(\beta(0))
= \lambda \E\left[
\sum_{i=0}^{N-1}\tau_i f\left(\sum_{j=-1}^{i-1} \tau_j\right)
+ \sigma_N f\left(\sum_{j=-1}^{N-1} \tau_j\right)
+ (\tau_N-\sigma_N) f(0);\, \tau_{-1} > \sigma_{-1}
\right],
\end{equation}
where $N$ is as in Theorem \ref{alphagen}.
\end{theorem}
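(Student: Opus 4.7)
The plan is to mirror the Palm-inversion argument used for Theorem \ref{alphagen}, applied now to the functional $f \circ \beta$ in place of $f \circ \alpha$. That is, I will start from
\[
\wE f(\beta(0)) = \frac{\Estar \int_{B_0}^{B_1} f(\beta(t))\, dt}{\Estar (B_1-B_0)},
\]
reuse the identity $\Estar(B_1-B_0) = 1/\lambda\P(\tau_{-1}\ge\sigma_{-1})$ from \eqref{Bdiff}, and convert the $\Pstar$-expectation into a $\P$-expectation restricted to $\{\tau_{-1}\ge\sigma_{-1}\}$ via \eqref{PstarPpo}. Everything reduces to a trajectorial computation of $\beta$ on the single cycle $[B_0,B_1)$ under $\Pstar$, where (as in the proof of Theorem \ref{alphagen}) we have $B_0=T_0=0$, $B_0'=T_N+\sigma_N$ and $B_1=T_{N+1}$ with $N$ the first nonnegative index for which $\tau_N\ge\sigma_N$.

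The main step is to identify $\beta(t)=A_t-S_{D_t}$ piecewise on $[T_0,T_{N+1})$. Throughout the reading interval $[T_0,T_N+\sigma_N)$ there is no successful departure, so $D_t=B_{-1}'$ and hence $S_{D_t}=T_{-1}$ (since under $\Pstar$ the message labelled $-1$ is successful). The arrival process, however, ticks: for $t\in[T_i,T_{i+1})$ with $0\le i<N$ one has $A_t=T_i$, giving
\[
\beta(t) = T_i - T_{-1} = \sum_{j=-1}^{i-1}\tau_j,
\]
and for $t\in[T_N,T_N+\sigma_N)$ one has $A_t=T_N$, giving $\beta(t)=\sum_{j=-1}^{N-1}\tau_j$. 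Finally, on the idle tail $[T_N+\sigma_N,T_{N+1})$ the successful departure $B_0'$ has occurred, so $D_t=B_0'$ and $S_{D_t}=T_N=A_t$, giving $\beta(t)=0$. Integrating each constant value of $f(\beta(\cdot))$ against the length of the corresponding subinterval produces
\[
\int_{B_0}^{B_1} f(\beta(t))\,dt
=\sum_{i=0}^{N-1}\tau_i\, f\!\left(\sum_{j=-1}^{i-1}\tau_j\right)
+\sigma_N\, f\!\left(\sum_{j=-1}^{N-1}\tau_j\right)
+(\tau_N-\sigma_N)\,f(0).
\]
Taking $\Estar$, dividing by $\Estar(B_1-B_0)$ and rewriting the resulting conditional expectation as a $\P$-expectation with indicator of $\{\tau_{-1}>\sigma_{-1}\}$ yields \eqref{betapo}; the $\P(\tau_{-1}\ge\sigma_{-1})$ factor from Campbell and the conditioning cancel exactly as in the proof of Theorem \ref{alphagen}.

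The only delicate point, and the place where care is needed, is the identification of $S_{D_t}$: one must remember that the ``last successful arrival before the last successful departure'' need not coincide with the last arrival, and in particular stays frozen at $T_{-1}$ during the whole busy portion of the cycle even though many arrivals $T_0,T_1,\ldots,T_{N-1}$ occur and get pushed out in the meantime. Once that is correctly recorded, the integral splits into the three explicit pieces above and the rest is bookkeeping. The identity $N<\infty$ $\P$-a.s.\ (hence the integrand is finite) follows from \eqref{Seneca} as in the proof of Theorem \ref{alphagen}, and all terms are either nonnegative or bounded by the hypothesis on $f$, so no integrability issue arises.
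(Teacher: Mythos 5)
Your proposal is correct and follows essentially the same route as the paper: Palm inversion over the cycle $[B_0,B_1)=[T_0,T_{N+1})$, the same piecewise identification of $\beta(t)$ (frozen at $T_i-T_{-1}$ on $[T_i,T_{i+1})$ for $i<N$, at $T_N-T_{-1}$ on $[T_N,T_N+\sigma_N)$, and $0$ on the idle tail), and the same cancellation of $\P(\tau_{-1}\ge\sigma_{-1})$ between \eqref{Bdiff} and the passage from $\Estar$ to $\E$. No gaps to report.
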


\begin{proof}
We use again the Palm inversion formula
\begin{equation}
\label{palminvbeta}
\wE f(\beta(0)) = \frac{\Estar \int_{B_0}^{B_1} f(\beta(t))\, dt}
{\Estar(B_1-B_0)},
\end{equation}
where the notation is as before.
We now have
\[
\beta(t) = A_t-S_{D_t}
= \begin{cases}
T_i - T_{-1}, & T_0 \le T_i \le t < T_{i+1} \le T_N+\sigma_N,\, i \ge 0,
\\
0, & T_N+\sigma_N \le t < T_{N+1}
\end{cases},\quad \Pstar\text{-a.s.}
\]
Hence the integral in \eqref{palminvbeta} is
\begin{align*}
\int_{B_0}^{B_1} f(\beta(t)) \, dt
&= \int_{T_0}^{T_{N+1}} f(\beta(t))\, dt
\\
& = \sum_{i: T_0 \le T_i < T_{i+1} \le T_N} 
\int_{T_i}^{T_{i+1}} f(T_i - T_{-1}) dt
+ \int_{T_N}^{T_N+\sigma_N} f(T_N - T_{-1}) dt
+ \int_{T_N+\sigma_N}^{T_{N+1}} f(0) dt
\\
&=
\sum_{i=0}^{N-1}\tau_i f(T_i-T_{-1}) 
+ \sigma_N f(T_N - T_{-1})
+ (\tau_N-\sigma_N) f(0).
\end{align*}
Substitute this into \eqref{palminvbeta} and use
$\Estar (B_1-B_0) = 1/\lambda \P(\tau_{-1} \ge \sigma_{-1})$
to obtain \eqref{betapo}.
\end{proof}

\begin{corollary}[Continuation of Lemma \ref{rematom}]
The atom of $\beta(0)$ at $0$ has value
\begin{equation}
\label{0atom}
\wP (\beta(0)=0) =\lambda\, 
\E[(\tau_N-\sigma_N);\, \tau_{-1} > \sigma_{-1}].
\end{equation}
\end{corollary}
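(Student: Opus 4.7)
The approach is a direct specialization of Theorem \ref{betagen}. The plan is to take $f = \1_{\{0\}}$, which is bounded and measurable, so the hypothesis of Theorem \ref{betagen} is satisfied, and to observe that
\[
\wE f(\beta(0)) = \wE \1_{\{0\}}(\beta(0)) = \wP(\beta(0)=0),
\]
which is the left-hand side of the claimed formula.

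Next I would inspect each of the three terms on the right-hand side of \eqref{betapo} after substituting $f = \1_{\{0\}}$. The key observation is that, by the standing assumption $T_n < T_{n+1}$, the interarrival times $\tau_j$ are strictly positive almost surely, hence every partial sum of the form $\sum_{j=-1}^{i-1} \tau_j$ with $i \geq 0$ is strictly positive (in particular, it contains at least the term $\tau_{-1} > 0$). Consequently:
\begin{itemize}
\item For each $0 \le i \le N-1$, the quantity $\sum_{j=-1}^{i-1}\tau_j > 0$, so $\1_{\{0\}}\bigl(\sum_{j=-1}^{i-1}\tau_j\bigr) = 0$ and the first sum vanishes.
\item Likewise $\sum_{j=-1}^{N-1}\tau_j > 0$, so $\sigma_N \1_{\{0\}}\bigl(\sum_{j=-1}^{N-1}\tau_j\bigr)=0$ and the second term vanishes.
\item Only the last term, $(\tau_N - \sigma_N)\1_{\{0\}}(0) = \tau_N - \sigma_N$, survives.
\end{itemize}

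Plugging back into \eqref{betapo} yields \eqref{0atom} at once. There is no real obstacle here: the entire argument is a one-line substitution in the general formula \eqref{betapo}, combined with the trivial fact that nontrivial positive interarrival gaps cannot equal zero. The only (minor) point to double-check is that on the event $\{\tau_{-1} > \sigma_{-1}\}$ the random index $N$ is finite $\P$-a.s., which has already been verified in the proof of Theorem \ref{alphagen} via the Poincaré recurrence argument following \eqref{Seneca}, ensuring that the right-hand side is well defined.
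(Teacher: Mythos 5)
Your proof is correct and follows essentially the same route as the paper: substitute $f=\1_{\{0\}}$ into \eqref{betapo} and use the strict positivity of the interarrival times to kill all but the last term. No issues.
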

\begin{proof}
Let, in \eqref{betapo}, $f(x) := \1_{x=0}$.
Since all the $\tau_n$ and $\sigma_n$ are nonzero with probability $1$,
\eqref{0atom} follows.
\end{proof}

\begin{theorem}
\label{betaiid}
Consider the pushout system and
assume that $(\tau_n, \sigma_n)$, $n \in \Z$, is i.i.d.\ under $\P$
and such that $\E \tau_0 <\infty$ and $\P(\tau_0 \ge \sigma_0) > 0$.
Assume further that $\tau_n$ is independent of $\sigma_n$ for all $n$.
Then  the $\wP$-law of $\beta(0)$ can be described as
\begin{equation}
\label{betacases}
\beta(0) \eqdist
\begin{cases}
0 , & \text{with probability } \displaystyle \frac{\E(\tau-\sigma)^+}{\E\tau}
\\[2mm]
\mathbf C, & \text{with probability } \displaystyle \frac{\E \tau \wedge \sigma}{\E \tau}
\end{cases},
\end{equation}
where $\mathbf C$ has the distribution of a typical cycle length;
\begin{equation}
\label{typicalcycle}
\E e^{-u \mathbf C} = \Estar e^{-u  (B_1-B_0)}
= \frac{\E[e^{-u\tau}; \tau> \sigma]}{1-\E[e^{-u\tau};  \tau\le \sigma]},
\end{equation}
In particular,
\begin{equation}
\label{betamean}
\wE \beta(0) = \frac{\E \tau \wedge \sigma}{\P(\tau \ge \sigma)}.
\end{equation}
\end{theorem}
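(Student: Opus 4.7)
The plan is to apply Theorem~\ref{betagen} under the i.i.d.\ assumption and use Lemma~\ref{lemddL} (d\'ecoupage de L\'evy) to reduce the key computations to geometric-series identities. I will work at the level of the Laplace transform $\wE e^{-u\beta(0)}$: once it is expressed as an atom at $0$ plus a continuous-mass contribution whose transform matches the conjectured law of $\mathbf C$, uniqueness of Laplace transforms delivers both \eqref{betacases} and \eqref{typicalcycle}; the mean \eqref{betamean} then follows from the mixture decomposition.

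The atom is handled first. The corollary to Theorem~\ref{betagen} gives $\wP(\beta(0)=0)=\lambda\E[(\tau_N-\sigma_N);\tau_{-1}>\sigma_{-1}]$, and under the i.i.d.\ hypothesis $(\tau_{-1},\sigma_{-1})$ is independent of $(\tau_i,\sigma_i)_{i\ge 0}$, so this expectation factors. Applying Lemma~\ref{lemddL} with target set $\{(t,s):t\ge s\}$ identifies $(\tau_N,\sigma_N)\eqdist (\tau,\sigma\mid\tau\ge\sigma)$, so $\E(\tau_N-\sigma_N)=\E(\tau-\sigma)^+/p$ with $p:=\P(\tau\ge\sigma)$. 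Combined with $\lambda=1/\E\tau$ this yields the first line of \eqref{betacases}. The Laplace transform of $\mathbf C$ is obtained in parallel: under $\Pstar$ we have $\mathbf C=T_{N+1}-T_0=\tau_0+\cdots+\tau_N$, and d\'ecoupage rewrites this as $\tau_0'+\cdots+\tau_{N-1}'+\tau''$ with independent $\tau_i'\sim(\tau\mid\tau<\sigma)$ and $\tau''\sim(\tau\mid\tau\ge\sigma)$ and $N$ geometric, which collapses by a geometric series to \eqref{typicalcycle}.

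For the continuous part, I apply formula \eqref{betapo} with $f(x)=e^{-ux}$. The key observation is that $e^{-u(\tau_{-1}+\tau_0+\cdots+\tau_{i-1})}$ factors into $e^{-u\tau_{-1}}$ times a product over nonnegative indices, and under the i.i.d.\ assumption the $\tau_{-1}$-factor (restricted to $\{\tau_{-1}>\sigma_{-1}\}$) is independent of everything else; pulling it out contributes the factor $\E[e^{-u\tau};\tau\ge\sigma]$. The residual expectation
\[
\E\Bigl[\sum_{i=0}^{N-1}\tau_i e^{-u(\tau_0+\cdots+\tau_{i-1})}+\sigma_N e^{-u(\tau_0+\cdots+\tau_{N-1})}\Bigr]
\]
is handled by conditioning on $N=n$ and applying d\'ecoupage, which makes the $\tau_i$ with $i<N$ i.i.d.\ of law $(\tau\mid\tau<\sigma)$ and $\sigma_N$ of law $(\sigma\mid\tau\ge\sigma)$; summing the resulting geometric series and invoking the identity $\E[\tau;\tau<\sigma]+\E[\sigma;\tau\ge\sigma]=\E(\tau\wedge\sigma)$ collapses the expression to $\E(\tau\wedge\sigma)/(1-\E[e^{-u\tau};\tau<\sigma])$.

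Dividing by $\Estar(B_1-B_0)=1/(\lambda p)$ (from \eqref{Bdiff}) and assembling the pieces produces
\[
\wE e^{-u\beta(0)}=\frac{\E(\tau-\sigma)^+}{\E\tau}+\frac{\E(\tau\wedge\sigma)}{\E\tau}\cdot\frac{\E[e^{-u\tau};\tau\ge\sigma]}{1-\E[e^{-u\tau};\tau<\sigma]},
\]
which is precisely the Laplace transform of the mixture in \eqref{betacases}, finishing that part by uniqueness. The mean formula \eqref{betamean} is then immediate from $\wE\beta(0)=\wP(\beta(0)>0)\,\E\mathbf C=(\E(\tau\wedge\sigma)/\E\tau)\cdot(\E\tau/p)$. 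I expect the main obstacle to be the bookkeeping in the residual-expectation step: correctly keeping track of the three different roles played by $\tau_{-1}$ (independent with the $\Pstar$-law $(\tau\mid\tau\ge\sigma)$), the ``failure'' block $\tau_0,\ldots,\tau_{N-1}$, and the ``success'' pair $(\tau_N,\sigma_N)$ under d\'ecoupage, and recognizing that the numerator of the summed geometric series simplifies to $\E(\tau\wedge\sigma)$ only through the identity above, without which the final answer would not exhibit the clean mixture form.
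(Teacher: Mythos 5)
Your proposal is correct and follows essentially the same route as the paper's proof: the atom via \eqref{0atom} together with independence and d\'ecoupage, the positive part from \eqref{betapo} with $f(x)=e^{-ux}$ by pulling out the $\tau_{-1}$-factor, summing the geometric series and using $\E[\tau;\tau<\sigma]+\E[\sigma;\tau\ge\sigma]=\E(\tau\wedge\sigma)$, and the same d\'ecoupage computation for $\E e^{-u\mathbf C}$ and for the mean. The only (harmless) slip is bookkeeping: \eqref{betapo} already incorporates the normalization by $\Estar(B_1-B_0)$, so no further division by $1/(\lambda p)$ is needed, and indeed your final Laplace-transform identity is the correct one.
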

\begin{proof}
Using \eqref{0atom} and independence,
\begin{align*}
\wP( \beta(0)=0) &= \lambda\, \E(\tau_N-\sigma_N)
\,\P(\tau_{-1}> \sigma_{-1}).
\end{align*}
By Lemma \ref{lemddL} and \eqref{newvars}, we further have
\begin{align*}
\wP( \beta(0)=0) &=\lambda\, \E(\tau''-\sigma'') \, \P(\tau> \sigma)
\\
&= \lambda\,\E(\tau-\sigma | \tau> \sigma) \, \P(\tau>\sigma)
\\
&=\lambda\, \E (\tau-\sigma)^+.
\end{align*}
This proves the upper part of \eqref{betacases}.
To prove the lower part notice, from \eqref{betapo}, 
\begin{align*}
\wE[ f(\beta(0)); \beta(0)>0]
&= \lambda \, \P(\tau_{-1} > \sigma_{-1})\,\E\left[
\sum_{i=0}^{N-1}\tau_i f\left(\sum_{j=-1}^{i-1} \tau_j\right)
+ \sigma_N f\left(\sum_{j=-1}^{N-1} \tau_j\right)
 \bigg| \tau_{-1} > \sigma_{-1}
\right]
\\
&= \lambda  p\, 
\E\left[ 
\sum_{i=0}^{N-1}
\tau_i'  f\left(\tau_{-1}''+\sum_{j=0}^{i-1} \tau_j'\right)
+\sigma'' f\left(\tau_{-1}''+\sum_{j=0}^{N-1} \tau_j'\right)
\right],
\end{align*}
where we used Lemma \ref{lemddL} and the definitions \eqref{newvars}.
Next, let $f(x) = e^{-ux}$ and write the above as
\begin{align*}
\wE[ f(\beta(0)); \beta(0)>0]
&=\lambda  p\, (\E f(\tau''))\,
\E\left[\sum_{i=0}^{N-1} (\E \tau') (\E f(\tau'))^i
+(\E \sigma'') (\E f(\tau'))^N
\right]
\\
&=\lambda  p\, (\E f(\tau''))\,
\left[
(\E \tau') \, \E \left(\frac{1-(\E f(\tau'))^N}{1-\E f(\tau')}\right)
+(\E \sigma'') (\E f(\tau'))^N
\right]
\\
&=\lambda  p\, (\E f(\tau''))\,
\left[
\frac{\E \tau'}{1- \E f(\tau')}
\left(1-\frac{p}{1-q \E f(\tau')} \right)
+ (\E \sigma'') \frac{p}{1-q \E f(\tau')}
\right]
\\
&=\lambda  p\, (\E f(\tau''))\, \frac{q \E \tau' + p \E \sigma''}
{1-q \E f(\tau')}
= \lambda  p\, (\E f(\tau''))\, \frac{\E \tau \wedge \sigma}{1-q \E f(\tau')},
\end{align*}
that is precisely the lower part of \eqref{betacases}.
The last equality in \eqref{typicalcycle} is easily verified along the same lines.
To finally show \eqref{betamean} just note that 
\[
\wE \beta(0) = \frac{\E \tau \wedge \sigma}{\E \tau}\, \E C
= \frac{\E \tau \wedge \sigma}{\E \tau}\, \frac{\E \tau}{\P(\tau > \sigma)}.
\]
\end{proof}
\begin{remark}\label{rem:integrable-tau}
Notice that $\beta$ does not suffer from the same drawback as $\alpha$
when $\tau^2$ is not integrable. Indeed, here, under the condition
$\E \tau< \infty$ we have $\wE \beta(0) \le 1$, regardless of the variance
of $\tau$.
\end{remark}

\begin{corollary}
\label{cor:seven}
Let the assumptions of Theorem \ref{betaiid} hold true.
\\
(i)
If the variables $\tau_n$ are exponential with rate $\lambda$,
then
\[
\widetilde \E e^{-u \beta(0)}
=1- \frac{u(1-\E e^{-\lambda \sigma})}{u+\lambda \E e^{-(\lambda+u) \sigma}},
\qquad
\wE \beta(0) = \frac{1}{\lambda \E e^{-\lambda \sigma}}-\frac{1}{\lambda}.
\]
(ii)
If the variables $\sigma_n$ are exponential with rate $\mu$,
then
\[
\widetilde \E e^{-u \beta(0)}
=1- \frac{1- \E e^{-\mu\tau}}{\mu \E\tau}
\, \frac{1-\E e^{-u\tau}}{1-\E e^{-(u+\mu)\tau}},
\qquad
\wE \beta(0) = \frac{1}{\mu}.
\]
(iii)
If the $\tau_n$ are  with rate $\lambda$,
and the $\sigma_n$ are exponential with rate $\mu$ then, under $\wP$,
\[
\beta(0) \eqdist 
\begin{cases}
0, & \text{ with probability } \frac{\mu}{\lambda+\mu}
\\
\frac{\mathbf e_1}{\lambda}+\frac{\mathbf e_2}{\mu}, & \text{ with probability } \frac{\lambda}{\lambda+\mu}
\end{cases},
\qquad
\wE \beta(0) = \frac{1}{\mu},
\]
where $\mathbf e_1$, $\mathbf e_2$ are two independent unit-rate exponential
random variables.
\end{corollary}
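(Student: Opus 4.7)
The plan is to specialize Theorem \ref{betaiid} in three cases. That theorem already tells us that under the i.i.d.\ assumptions, $\beta(0)$ takes the value $0$ with probability $\E(\tau-\sigma)^+/\E\tau$ and otherwise is distributed as a typical cycle length $\mathbf C$ whose Laplace transform is $\E[e^{-u\tau};\tau>\sigma]/(1-\E[e^{-u\tau};\tau\le\sigma])$, and that $\wE\beta(0)=\E(\tau\wedge\sigma)/\P(\tau\ge\sigma)$. So everything reduces to computing the four quantities $\E(\tau\wedge\sigma)$, $\P(\tau>\sigma)$, $\E[e^{-u\tau};\tau>\sigma]$, $\E[e^{-u\tau};\tau\le\sigma]$ under the two exponential hypotheses, and then assembling $\wE e^{-u\beta(0)}=p_0+(1-p_0)\E e^{-u\mathbf C}$.

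For part (i), when $\tau$ is exponential with rate $\lambda$, I would first condition on $\sigma$ and integrate: this gives $\E[e^{-u\tau};\tau>\sigma]=\frac{\lambda}{u+\lambda}\E e^{-(u+\lambda)\sigma}$ and $\E[e^{-u\tau};\tau\le\sigma]=\frac{\lambda}{u+\lambda}(1-\E e^{-(u+\lambda)\sigma})$; these then plug directly into Theorem \ref{betaiid} to yield $\E e^{-u\mathbf C}=\lambda\E e^{-(u+\lambda)\sigma}/(u+\lambda\E e^{-(u+\lambda)\sigma})$. The atom at zero is $p_0=\E e^{-\lambda\sigma}$ (using $\E(\tau\wedge\sigma)=(1-\E e^{-\lambda\sigma})/\lambda$ and $\E\tau=1/\lambda$), so $1-p_0=1-\E e^{-\lambda\sigma}$. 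Adding $p_0+(1-p_0)\E e^{-u\mathbf C}$ over the common denominator $u+\lambda\E e^{-(u+\lambda)\sigma}$ and simplifying yields the stated expression. The mean follows at once from $\wE\beta(0)=\E(\tau\wedge\sigma)/\P(\tau\ge\sigma)$.

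For part (ii), when $\sigma$ is exponential with rate $\mu$, I would condition on $\tau$. Then $\P(\tau>\sigma\mid\tau)=1-e^{-\mu\tau}$, giving $\E[e^{-u\tau};\tau>\sigma]=\E e^{-u\tau}-\E e^{-(u+\mu)\tau}$ and $\E[e^{-u\tau};\tau\le\sigma]=\E e^{-(u+\mu)\tau}$. Substituting, $\E e^{-u\mathbf C}$ takes the form $(\E e^{-u\tau}-\E e^{-(u+\mu)\tau})/(1-\E e^{-(u+\mu)\tau})$, and a short algebraic rearrangement shows $1-\E e^{-u\mathbf C}=(1-\E e^{-u\tau})/(1-\E e^{-(u+\mu)\tau})$. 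Since $\E(\tau\wedge\sigma)=(1-\E e^{-\mu\tau})/\mu$ and $\P(\tau>\sigma)=1-\E e^{-\mu\tau}$, one obtains $1-p_0=(1-\E e^{-\mu\tau})/(\mu\E\tau)$ and the stated formula via $\wE e^{-u\beta(0)}=1-(1-p_0)(1-\E e^{-u\mathbf C})$; the mean is immediately $1/\mu$.

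For part (iii), I would simply plug $\E e^{-u\tau}=\lambda/(u+\lambda)$ into the formulas of (ii). The atom becomes $p_0=\mu/(\lambda+\mu)$, and a partial-fraction-style simplification of $\E e^{-u\mathbf C}=\lambda\mu/[(u+\lambda)(u+\mu)]$ identifies $\mathbf C$ as the independent sum $\mathbf e_1/\lambda+\mathbf e_2/\mu$. The mean $1/\mu$ is a consistency check with (ii). The computation is entirely routine; the only mildly delicate point, and in that sense the main ``obstacle,'' is keeping track of the common denominators when combining the atom at $0$ with the absolutely continuous part in (i), since the stated form $1-u(1-\E e^{-\lambda\sigma})/(u+\lambda\E e^{-(\lambda+u)\sigma})$ is not the most obvious combination of $p_0$ and $(1-p_0)\E e^{-u\mathbf C}$; a direct verification of the identity $p_0\cdot[u+\lambda\E e^{-(u+\lambda)\sigma}]+(1-p_0)\lambda\E e^{-(u+\lambda)\sigma}=u+\lambda\E e^{-(u+\lambda)\sigma}-u(1-p_0)$ resolves it.
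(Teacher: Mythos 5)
Your proposal is correct and follows exactly the route the paper intends: Corollary \ref{cor:seven} is a direct specialization of Theorem \ref{betaiid}, obtained by computing $\E(\tau\wedge\sigma)$, $\P(\tau\ge\sigma)$, $\E[e^{-u\tau};\tau>\sigma]$ and $\E[e^{-u\tau};\tau\le\sigma]$ by conditioning on the non-exponential variable, then forming the mixture $p_0+(1-p_0)\E e^{-u\mathbf C}$ — the same elementary computations the paper carries out for the analogous AoI corollaries. Your exponential integrals, the algebraic recombination in (i), and the partial-fraction identification of $\mathbf C\eqdist \mathbf e_1/\lambda+\mathbf e_2/\mu$ in (iii) all check out.
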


\section{The blocking system}
\label{sec:fbl}
The blocking system is defined by the requirement that only those messages
for which $\mathcal Z_n$ holds are admitted. The remaining ones are 
immediately rejected (blocked). 
It is well-known that if
\begin{equation}
\label{SenecaB}
\P(\sup_{i \le -1} (\sigma_i +T_i) \le 0) > 0
\end{equation}
then the system admits a unique steady-state,
see \cite[Chapter 2, Section 5.2]{BB}. Under this condition, \eqref{st3} holds.

We have $\psi_n=\chi_n$ for all $n \in \Z$ (a message is successful if and only
if it is admitted) and
\begin{equation}
\psi_n \text{ is a measurable function of }
(\tau_m, \sigma_m:\, m \le n-1).
\label{psim}
\end{equation}
Recall that we use letters $B_k$, $B_k'$ for the beginnings and ends of reading periods,
respectively.  In other words,
\begin{gather*}
\{B_k: \, k \in \Z\} = \{T_n:\, n \in \Z, \psi_{n}=1\},
\\
\{B_k': \, k \in \Z\} = \{T_n+\sigma_n:\, n \in \Z, \psi_{n}=1\}.
\end{gather*}
Therefore the Palm probability $\P^*$ of $\widetilde \P$ with respect to $\{B_k\}$ admits
a simpler representation:
\begin{lemma}
$\Pstar$ is  the Palm probability of $\widetilde \P$ with
respect to the (stationary) point process 
\[
\sum_{n \in \Z} \psi_{n} \delta_{T_n} 
\]
and
\begin{equation}
\label{PstarPbl}
\Pstar = \P(\cdot | \psi_{0}=1). 
\end{equation}
\end{lemma}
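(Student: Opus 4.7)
The plan is to identify the cycle-beginning point process $\sum_k \delta_{B_k}$ with a thinning of the arrival process by the marks $\psi_n$, and then invoke the standard identity relating the Palm probability of a thinned stationary point process to a conditional version of the Palm probability of the original.

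First I would verify that, for the blocking policy, $\{B_k:\, k \in \Z\} = \{T_n:\, \psi_n = 1\}$. By the definition of blocking, $\chi_n = 1$ if and only if $\mathcal Z_n$ holds, and since $\psi_n = \chi_n$ under this policy, $\psi_n = 1$ iff $\mathcal Z_n$ holds. Combined with the general definition $\{B_k\} = \{T_n:\, \mathcal Z_n\}$ of cycle beginnings, this gives $\sum_{k \in \Z} \delta_{B_k} = \sum_{n \in \Z} \psi_n \delta_{T_n}$. Stationarity of the right-hand side under $\widetilde \P$ follows at once from \eqref{st3}, since projecting out the marks $(\sigma_n, \chi_n)$ and keeping only the points with mark $\psi_n = 1$ preserves $\theta$-invariance of the marked point process.

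Second, I would derive the identity $\Pstar = \P(\cdot \mid \psi_0 = 1)$ by a direct Campbell-type computation. For any $A \in \FF$, the Palm ratio reads
\[
\Pstar(A) = \frac{\widetilde \E\left[\sum_n (\1_A \comp \theta_{T_n})\, \psi_n\, \1_{0 \le T_n \le 1}\right]}{\widetilde \E\left[\sum_n \psi_n\, \1_{0 \le T_n \le 1}\right]}.
\]
Since $\psi_0$ is $\FF$-measurable by \eqref{psim} and $\psi_0 \comp \theta_{T_n} = \psi_n$ by \eqref{st3}, the numerator equals $\widetilde \E \sum_n (\1_A\, \psi_0) \comp \theta_{T_n}\, \1_{0 \le T_n \le 1}$, which by the defining relation of the Palm probability $\P$ applied to the $\FF$-measurable random variable $\1_A\, \psi_0$ is $\lambda\, \E[\1_A\, \psi_0]$. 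Taking $A = \Omega$ shows that the denominator equals $\lambda\, \P(\psi_0 = 1)$, and dividing yields $\Pstar(A) = \P(A \mid \psi_0 = 1)$.

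The only point requiring any care is the positivity $\P(\psi_0 = 1) > 0$, which is needed for the conditioning to be well-defined and for the thinned process to have positive intensity; this is not a genuine obstacle, as assumption \eqref{SenecaB} ensures a non-degenerate steady-state in which admitted messages occur with positive frequency.
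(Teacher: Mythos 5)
Your proposal is correct and follows essentially the same route as the paper, which states the lemma without a detailed proof: it rests on the identification $\{B_k\}=\{T_n:\,\psi_n=1\}$ (since $\psi_n=\chi_n=\1_{\mathcal Z_n}$ under blocking) together with the elementary conditioning relation $\Pstar=\P(\cdot\,|\,\mathcal Z_0)$ already recorded in Section \ref{stfr}. Your Campbell-type computation simply makes that conditioning relation explicit, and your remark on $\P(\psi_0=1)>0$ under \eqref{SenecaB} is the right (and only) point needing care.
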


Recalling that $\{B_k\}$ and $\{B'_k\}$ are interlaced sequences let us
compute the quantities $S_t$ (last successful arrival before $t$), 
$D_t$ (last successful departure before $t$), 
and $S_{D_t}$ (last successful arrival before $D_t$) depending whether $t$ falls in
a reading interval (that is, between $B_k$ and $B_k'$ for some $k$)
or not (that is, between $B_k'$ and $B_{k+1}$ for some $k$).
Since $\{B_k\}$ is the totality of successful arrivals, we have that,
for all $k \in \Z$,
\[
B_k\le t < B_{k+1} \Rightarrow S_t = B_k.
\]
Since $\{B_k'\}$ is the totality of successful departures, we have that,
for all $k \in \Z$,
\[
B_k' \le t < B_{k+1}' \Rightarrow D_t = B_k'.
\]
It then follows that,
for all $k \in \Z$,
\begin{equation}
\label{onetwo}
S_{D_t} =
\begin{cases}
B_{k-1}, & \text{ if } B_k \le t < B_k'
\\
B_k , & \text{ if } B_k' \le t < B_{k+1}
\end{cases}.
\end{equation}

\subsection{The age of information  for the blocking system}
We shall use the Palm inversion formula \eqref{palminv} for the process
$\alpha(t)=t-S_{D_t}$, $t \in \R$, for the blocking system.
By Campbell's formula we have that the denominator of \eqref{palminv} is
\begin{equation}
\label{Binv}
\Estar(B_1-B_0) 
= \frac{1}{\lambda \P(\psi_0=1)},
\end{equation}
however, unlike in the pushout system, the probability in the denominator
depends on the full distribution and the dynamics of the system and so
it does not admit an explicit form without further assumptions.
In what follows, let
\begin{equation}
\label{Nbl}
N := \inf\{\ell \ge 1:\, \tau_0+\cdots+\tau_{\ell-1} \ge \sigma_0\}.
\end{equation}

\begin{theorem}
\label{alphagen2}
Consider the blocking system under stationarity assumptions
and assume that \eqref{SenecaB} holds.
Let $f$  be bounded and measurable or locally integrable and nonnegative
function and let $F$ be such that $F'=f$.
Then
{
\begin{align}
\wE f(\alpha(0)) 
&= \lambda \, \E[ F(T_N + \sigma_N) - F(\sigma_N) ;\,\psi_0=1]
= \frac{\E[ F(T_N + \sigma_N)- F(\sigma_N) | \psi_0=1]}{\E[T_N | \psi_0=1]},
\label{Plato}
\end{align}
}
where $N$ is defined by \eqref{Nbl}.
\end{theorem}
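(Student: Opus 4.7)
The plan is to apply the Palm inversion formula
\[
\wE f(\alpha(0)) = \frac{\Estar \int_{B_0}^{B_1} f(\alpha(t))\, dt}{\Estar(B_1-B_0)},
\]
exactly as in the proof of Theorem \ref{alphagen}, and to identify all the ingredients under $\Pstar$. By \eqref{PstarPbl}, $\Pstar = \P(\cdot\mid \psi_0 = 1)$, so under $\Pstar$ we have $B_0 = T_0 = 0$ and $B_0' = \sigma_0$. The next admitted arrival must be the first $T_\ell$ with $T_\ell \ge \sigma_0$, which is precisely $T_N$ for the $N$ of \eqref{Nbl}; hence $B_1 = T_N$ and $B_1' = T_N + \sigma_N$.

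Using the case distinction \eqref{onetwo} on the cycle $[B_0, B_1)$, $S_{D_t} = B_{-1}$ on the reading interval $[0,\sigma_0)$ and $S_{D_t} = B_0 = 0$ on the subsequent idle interval $[\sigma_0, T_N)$, so $\alpha(t) = t - B_{-1}$ and $\alpha(t) = t$ on the two pieces. Writing $F$ for an antiderivative of $f$,
\begin{align*}
\int_{B_0}^{B_1} f(\alpha(t))\, dt
&= \int_0^{\sigma_0} f(t - B_{-1})\, dt + \int_{\sigma_0}^{T_N} f(t)\, dt \\
&= \bigl[F(\sigma_0 - B_{-1}) - F(-B_{-1})\bigr] + \bigl[F(T_N) - F(\sigma_0)\bigr].
\end{align*}

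The key simplification uses the shift invariance of the marked sequence $(B_k, L_k)$ under $\Pstar$, where $L_k$ denotes the length of the $k$-th reading interval (so $L_0 = \sigma_0$ and $L_1 = \sigma_N$ under $\Pstar$). Applying the shift $\theta_{B_1}$, which preserves $\Pstar$, gives the joint identity in law $(-B_{-1}, L_0) \eqdist (B_1, L_1) = (T_N, \sigma_N)$. Consequently $\Estar F(-B_{-1}) = \Estar F(T_N)$, $\Estar F(\sigma_0) = \Estar F(\sigma_N)$, and $\Estar F(\sigma_0 - B_{-1}) = \Estar F(L_1 + B_1) = \Estar F(T_N + \sigma_N)$. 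The two middle terms in the displayed integral therefore cancel in expectation, and the numerator reduces to $\Estar[F(T_N + \sigma_N) - F(\sigma_N)]$.

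Finally, Campbell's formula yields $\Estar(B_1 - B_0) = 1/(\lambda \P(\psi_0 = 1))$, equivalently $\E[T_N;\, \psi_0 = 1] = 1/\lambda$; writing $\Estar X = \E[X;\, \psi_0=1]/\P(\psi_0=1)$ and dividing numerator by denominator produces both equalities in \eqref{Plato}. The main delicate step is the shift-invariance argument used to rewrite $F(\sigma_0 - B_{-1})$ as $F(T_N+\sigma_N)$: the identity is distributional under $\Pstar$, not pointwise, and requires stationarity of the full marked cycle process $(B_k, L_k)$ rather than merely of the cycle lengths.
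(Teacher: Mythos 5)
Your proposal is correct and follows essentially the same route as the paper: Palm inversion over the cycle $[B_0,B_1)=[0,T_N)$, the two-piece description of $\alpha$ from \eqref{onetwo}, cancellation/rewriting of the boundary terms via invariance of $\Pstar$ under $\theta_{B_1}$ (exactly the paper's step $\Estar F(B_0-B_{-1}+\sigma_0)-\Estar F(\sigma_0)=\Estar F(B_1-B_0+\sigma_N)-\Estar F(\sigma_N)$), and Campbell's formula for $\Estar(B_1-B_0)$. Your closing remark that the identification is distributional, via stationarity of the marked cycle sequence under $\Pstar$, is precisely the fact the paper invokes, so there is nothing to add.
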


\begin{proof}
Under $\Pstar$, message $0$ is successful (admitted) and  $N$ is the first successful
(admitted) message after that. Note that $N<\infty$.
Thus,
\begin{equation}
\label{B1TN}
B_1=T_N, \quad \Pstar\text{-a.s.}
\end{equation}
Note also that, with $\arr = \sum_{n \in \Z} \delta_{T_n}$,
\begin{equation}
\label{Na}
N= \arr([0,\sigma_0]) = \sum_{n=0}^\infty \1_{T_n \le \sigma_0}, 
\quad \P\text{-a.s. and (hence) } \Pstar\text{-a.s.}
\end{equation}
By \eqref{onetwo},
and since $B_0' = T_0+\sigma_0$, $\Pstar$-a.s.,
the function $\alpha$ on $[B_0, B_1)$ is given by
\begin{equation*}
\label{adef}
\alpha(t)
= t-S_{D_t}=
\begin{cases}
t-B_{-1}, & T_0 \le t < T_0+\sigma_0
\\
t-B_0, & T_0+\sigma_0 \le t < T_N
\end{cases},\quad
\Pstar\text{-a.s.}
\end{equation*}
Hence, for  functions $f, F$ as in the theorem statement, with $F'=f$, 
\begin{align*}
\int_{B_0}^{B_1} f(\alpha(t)) \, dt
= \int_{T_0}^{T_{N}} f(\alpha(t))\, dt
&= \int_{T_0}^{T_0+\sigma_0} f(t-B_{-1})\, dt
+ \int_{T_0+\sigma_0}^{T_{N}} f(t-B_0)\, dt
\\
&= F(B_0-B_{-1}+\sigma_0)-F(B_0-B_{-1})+F(B_1-B_0)-F(\sigma_0),\quad
\Pstar\text{-a.s.},
\end{align*}
and thus, since $\Estar F(B_0-B_{-1})= \Estar F(B_1-B_0)$,
\begin{align*}
\Estar \int_{B_0}^{B_1} f(\alpha(t)) \, dt
&= \Estar F(B_0-B_{-1}+\sigma_0)- \Estar F(\sigma_0)
\\
&= \Estar F(B_1-B_0+\sigma_N) - \Estar F(\sigma_N).
\end{align*}
Here we used the fact that $\Pstar$ is preserved by $\theta_{B_k}$
for all $k \in \Z$.
Taking into account \eqref{palminv}, \eqref{Binv} and
\eqref{B1TN},  we can conclude.
\end{proof}
\begin{remark}
Note that, since there is no ready-made expression
for $\P(\psi_0=1)$, the second formula in \eqref{Plato}
turns out to be more useful for further computations.
\end{remark}

We now introduce
\begin{equation}
\arr(t) := \inf\{\ell \ge 0:\, T_\ell \ge t\}, \quad t \ge 0,
\label{Nblo}
\end{equation}
so that the variable $N$ defined by \eqref{Nbl} is simply the value of $\arr(t)$ for
$t=\sigma_0$:
\[
\arr(\sigma_0)=N.
\]
Note that $\arr(t)$ is left-continuous at all $0<t<\infty$ with
$zarr(0)=0$ and $\arr(0+)=1$. 
Since $\arr = \sum_{n \in \Z} \delta_{T_n}$,
we have
\[
\arr(t) = \arr([0,t))
= 1 + \arr((0,t)), \quad t \ge 0.
\]
Remembering that $\P$ is a Palm probability and $\P(T_0=0)=1$, define
\begin{equation}
\label{funU}
U(t) :=\E \arr(t)
= \sum_{n=0}^\infty \P(T_n < t), \quad t \ge 0.
\end{equation}
If the $\tau_n$ are i.i.d., then $U$ is known as $0$-potential function
(if $T_0, T_1, T_2, \ldots$ is thought of as a random walk)
or renewal function
(if $T_0, T_1, T_2, \ldots$ are thought of as the points of a renewal process).
We have that $U$ is left-continuous on $[0,\infty)$ with $U(0)=0$, $U(0+)=1$.
We shall deal with the renewal case next.
We will also need the definition
\begin{equation}
\label{funWf}
W(f,t) := \E f(T_{\arr(t)}), \quad t \ge 0,
\end{equation}
where $f$ is an appropriate function for which the expectation exists.
In particular, with $f(x)=e^{-ux}$ for some $u>0$, we let
\begin{equation}
\label{funW}
W_u(t) = \E e^{-u T_{\arr(t)}},
\end{equation}
and with $f(x) = x^p$ for some $p>0$, we let
\[
M_p(t) = \E T_{\arr(t)}^p.
\]
The following result gives the Laplace transform of the $\wP$-marginal of $\alpha(t)$
in terms of functions that can be computed as unique solutions to 
fixed-point equations.

\begin{theorem}
\label{Hermes}
Consider the blocking system and
assume that $(\tau_n, \sigma_n)$, $n \in \Z$, is i.i.d.\ under $\P$
and such that $\E \tau_0 <\infty$ and $\P(\tau_0 \ge \sigma_0) > 0$.
Assume further that $\tau_n$ is independent of $\sigma_n$ for all $n$.
Then, for $u>0$,
\begin{equation}
\label{alphaBL}
\wE e^{-u \alpha(0)} 
= \E e^{-u\sigma} \cdot \frac{1-\E e^{-u T_N}}{u \E T_N}
= \E e^{-u\sigma} \cdot \frac{1-\E W_{u}(\sigma) }{u\, \E \tau\, \E U(\sigma)},
\end{equation}
where $U$ and $W_{u}$ are the unique solutions to
the fixed-point equations
{
\begin{align}
U(t)  &= 1 + \int_{(0,t]} U(t-x) \P(\tau \in dx)  
\label{recU}
\\
W_{u}(t)  &= \int_{(t,\infty)} e^{-ux}\P(\tau \in dx)+ \int_{(0,t]}  W_{u}(t-x) e^{-ux} \P(\tau \in dx).
\label{recW}
\end{align}}
In particular, under $\wP$, $\alpha(0)$ is the sum of two independent random variables:
\begin{equation}
\label{alphadec}
\alpha(0) \eqdist \sigma + \overline {T_N},
\end{equation}
where $\overline {T_N}$ is the stationary version of $T_N$.
\end{theorem}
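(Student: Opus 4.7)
The plan is to specialize the general Palm formula \eqref{Plato} under the i.i.d./independence assumptions and then identify the resulting quantities in terms of $U$ and $W_u$. Applying Theorem \ref{alphagen2} to $F(x)=(1-e^{-ux})/u$, whose derivative is $f(x)=e^{-ux}$, the telescoping
\[
F(T_N+\sigma_N)-F(\sigma_N)=\tfrac{1}{u}\,e^{-u\sigma_N}\bigl(1-e^{-uT_N}\bigr)
\]
turns the first form of \eqref{Plato} into
\[
\wE e^{-u\alpha(0)}=\frac{\lambda}{u}\,\E\bigl[e^{-u\sigma_N}\bigl(1-e^{-uT_N}\bigr);\,\psi_0=1\bigr].
\]

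The crux of the argument is a pair of independences. First, by \eqref{psim} the variable $\psi_0$ is measurable with respect to $\sigma(\tau_m,\sigma_m:m\le -1)$; since $(\tau_n,\sigma_n)$ is i.i.d.\ under $\P$, this past is independent of the future $(\sigma_0,\tau_0,\sigma_1,\tau_1,\ldots)$, which contains $(T_N,\sigma_N)$. Hence $\{\psi_0=1\}$ is independent of $(T_N,\sigma_N)$ and its indicator factors out as $\P(\psi_0=1)$. Second, I would observe that $N$ is a stopping time in the filtration $\GG_\ell:=\sigma(\sigma_0,\tau_0,\ldots,\tau_{\ell-1})$ (the event $\{N=\ell\}$ is $\GG_\ell$-measurable), and by the independence assumption on the $\sigma$-sequence, $\sigma_n$ is independent of $\GG_n$ for every $n\ge 1$; partitioning on $\{N=n\}$ then gives $\sigma_N\eqdist\sigma$ and $\sigma_N$ independent of $T_N$. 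Finally, \eqref{Binv} and the relation $B_1-B_0=T_N$ under $\Pstar$ (cf.\ \eqref{B1TN}) together with past-future independence give $\lambda\P(\psi_0=1)=1/\Estar(B_1-B_0)=1/\E T_N$. Assembling these pieces yields the first equality in \eqref{alphaBL}, and the decomposition \eqref{alphadec} follows because $(1-\E e^{-uT_N})/(u\E T_N)$ is precisely the Laplace transform of the stationary version $\overline{T_N}$ recalled in Section \ref{sec:mi}.

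For the second equality, condition on $\sigma_0$ and exploit its independence from $(\tau_n)_{n\ge 0}$. Since $N=\arr(\sigma_0)$ by \eqref{Na} and \eqref{Nblo}, the definition \eqref{funW} gives $\E e^{-uT_N}=\E W_u(\sigma)$, while Wald's identity---applicable because $\arr(t)$ is a stopping time for the partial sums $T_\ell$---yields $\E T_N=\E\tau\cdot\E U(\sigma)$. The renewal equations \eqref{recU} and \eqref{recW} are then obtained by standard first-step conditioning on $\tau_0$: either $\tau_0\ge t$, in which case $\arr(t)=1$ and $T_{\arr(t)}=\tau_0$, or $\tau_0=x<t$, in which case the process regenerates so that $\arr(t)-1$ and $T_{\arr(t)}-x$ have, independently of $\tau_0$, the laws of $\arr(t-x)$ and $T_{\arr(t-x)}$. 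Since $U(0)=0$ and $W_u(0)=1$, the atom $\{\tau=t\}$ may be placed in either branch of each integral without changing the equation.

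The main obstacle will be the two independence assertions in the first step, particularly the random-indexed independence of $\sigma_N$ from $T_N$, which requires the stopping-time structure of $N$ rather than a naive application of independence. Once these are secured, the rest---Wald, first-step analysis, and recognition of the stationary-version Laplace transform---is routine.
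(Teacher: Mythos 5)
Your proposal is correct and follows essentially the same route as the paper's proof: specialize the Palm formula \eqref{Plato} with an exponential $F$, use the past-measurability of $\psi_0$ together with the i.i.d.\ structure to drop the conditioning (equivalently, $\lambda\P(\psi_0=1)=1/\E T_N$), exploit the stopping-time property of $N$ to factor $\sigma_N$ from $T_N$, and obtain $\E T_N=\E\tau\,\E U(\sigma)$, $\E e^{-uT_N}=\E W_u(\sigma)$ and the fixed-point equations by Wald-type conditioning and first-step renewal arguments. The only detail the paper adds that you omit is the observation that $\P(\tau_0\ge\sigma_0)>0$ plus the i.i.d.\ assumption implies the stability condition \eqref{SenecaB}, so that Theorem \ref{alphagen2} is indeed applicable.
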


\begin{proof}
Observe first that $\P(\tau_0 \ge \sigma_0)>0$ implies  (by the ergodic theorem)
\eqref{SenecaB}
and hence a unique steady-state version exists.
Using the fact that $\psi_n$ is a measurable function of the variables
$\tau_m, \sigma_m$ with $m \le n$ [see \eqref{psim}] we write \eqref{Plato} as
\begin{equation}
\label{aagain}
\wE F'(\alpha(0)) 
= \frac{\E[ F(T_N + \sigma_N)- F(\sigma_N)]}
{\E T_N },
\end{equation}
with $N= \inf\{\ell \ge 1:\, \tau_0+\cdots+\tau_{\ell-1} 
\ge \sigma_0\}$, $\P\text{-a.s.}$
Since $N-1=\inf\{i \ge 0:\, \tau_0+\cdots+\tau_i \ge \sigma_0\}$,
it follows that $N-1$ is a stopping time with respect to $\mathscr A_i$, $i \ge 0$,
where $\mathscr A_i$ is the $\sigma$-algebra generated by $(\sigma_0, \tau_0,
\ldots,\tau_i)$.
Let $F(x) = e^{-ux}$. Then
\[
\E[ F(T_N + \sigma_N)- F(\sigma_N)]
= \E[ F(T_N) F(\sigma_N) - F(\sigma_N)]
= [\E F(T_N) -1]\, \E F(\sigma_N),
\]
where the last equality needs that $N-1$ is a stopping time.
Noting that $\E F(\sigma_N) = \E F(\sigma)$
we obtain the first equality in \eqref{alphaBL} from which
decomposition \eqref{alphadec} follows at once.
\\
For the last equality of \eqref{alphaBL} we have
\begin{multline}
\E\ T_N
= \E\sum_{i=0}^{N-1} \tau_i 
= \E  \sum_{i=0}^\infty \tau_i\, \1_{T_i \le \sigma_0}
=  \sum_{i=0}^\infty (\E \tau_i) \P(T_i \le \sigma_0)
= (\E \tau) \, \sum_{i=0}^\infty \P(T_i \le \sigma_0)
= (\E \tau) \, \E U(\sigma),
\label{ETN}
\end{multline}
and,
\begin{equation}
\label{EuTN}
\E e^{-u T_N} = \E e^{-u T_{\arr(\sigma_0)}}
=  \E\, \E[ e^{-u T_{\arr(\sigma_0)} } |\sigma_0]
=  \E\, W_{u}(\sigma_0).
\end{equation}
\\
Equation \eqref{recU} is the  renewal equation from standard renewal theory.
To obtain\eqref{recW} we write
\begin{equation*}
W(f,t) = \E f(T_{\arr(t)})=
\E [f(T_{\arr(t)});{\red t < \tau_0}] + \E [f(T_{\arr(t)}); {\red t \ge \tau_0}].
\end{equation*}
If $t < \tau_0$ then $\arr(t)=1$, $T_{\arr(t)}=T_1=\tau_0$, $\P$-a.s.
If $t\ge \tau_0$ and $\tau_0=x$ then
$T_{\arr(t)} \eqdist x+T_{\arr(t-x)}$, under $\P$. Set 
\[
\Phi_t:=T_{\arr(t)}.
\]
If $\tau$ is independent of
$(\Phi_t)$ we have 
\[
T_{\arr(t)} \eqdist \tau+\Phi_{t-\tau}
\]
and so 
\[
f(T_{\arr(t)}) \1_{\tau_0 \le t} \eqdist f(\tau+\Phi_{t-\tau}) \1_{\tau\le t}.
\]
Hence
\begin{equation}
\label{FFF}
W(f,t)= \E[f(\tau);\, \tau >t] + \E[f(\tau+\Phi_{t-\tau});\, \tau \le t].
\end{equation}
Letting $f(x)=e^{-ux}$ we further have
$F(\tau+\Phi_{t-\tau}) = e^{-u\tau} e^{-u \Phi_{t-\tau}}$
and so
\[
\E[e^{-u(\tau+\Phi_{t-\tau})};\, \tau \le t]
= \E[e^{-u\tau} \E(e^{-u \Phi_{t-\tau}}|\tau) \1_{\tau \le t}]
= \E[e^{-u\tau} W_{u}(t-\tau) \1_{\tau \le t}],
\]
and this establishes \eqref{recW}.
\end{proof}
To compute the first moment of the AoI we need to know the second moment
of $T_{\arr(t)}$. Recall that $M_p(t) = \E T_{\arr(t)}^p$ is the $p^{\rm th}$ moment of $T_{\arr(t)}$. These moments can be computed recursively, as in the lemma
below, which is of independent interest.
\begin{lemma}
If $p$ is a positive integer we have
\begin{equation}
M_p(t) = \E M_p(t-\tau) + \E \tau^p + \sum_{k=1}^{p-1} \binom{p}{k}
\E[\tau^k M_{p-k}(t-\tau)] \label{Wpol}
\end{equation}
\end{lemma}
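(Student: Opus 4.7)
The plan is to mimic the derivation of the renewal equation \eqref{recW} for $W_u(t)$ used in the proof of Theorem \ref{Hermes}, but with $f(x)=x^p$ in place of $f(x)=e^{-ux}$. The starting point is the identity \eqref{FFF}, which was established there under the renewal assumption and which I take as given: with $\Phi_s$ denoting an independent copy of $T_{\arr(s)}$ (independent of the initial increment $\tau=\tau_0$),
\[
W(f,t) = \E[f(\tau);\, \tau>t] + \E[f(\tau+\Phi_{t-\tau});\, \tau\le t].
\]
Specializing to $f(x)=x^p$ immediately gives
\[
M_p(t) = \E[\tau^p;\, \tau>t] + \E\bigl[(\tau+\Phi_{t-\tau})^p;\, \tau\le t\bigr].
\]

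Next I would expand the second term via the binomial theorem. Using that $\Phi_{t-\tau}$ is independent of $\tau$ given the value of $\tau$, and that its conditional $(p-k)$-th moment is by definition $M_{p-k}(t-\tau)$, I get
\[
\E\bigl[(\tau+\Phi_{t-\tau})^p;\, \tau\le t\bigr]
= \sum_{k=0}^{p} \binom{p}{k}\, \E\bigl[\tau^k\, M_{p-k}(t-\tau);\, \tau\le t\bigr].
\]
I then isolate the boundary terms. The $k=p$ term is $\E[\tau^p\cdot M_0(t-\tau);\,\tau\le t]=\E[\tau^p;\,\tau\le t]$, which combines with $\E[\tau^p;\,\tau>t]$ to produce $\E\tau^p$. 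The $k=0$ term is $\E[M_p(t-\tau);\,\tau\le t]$. What remains is the sum over $1\le k\le p-1$.

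Finally, I would invoke the natural convention $M_q(s)=0$ for $s\le 0$ and $q\ge 1$, which holds because $\arr(s)=0$ and $T_{\arr(s)}=T_0=0$ for all $s\le 0$ under $\P$. This convention lets me drop the indicator $\{\tau\le t\}$ on both the $k=0$ term and the terms with $1\le k\le p-1$ (the latter because $p-k\ge 1$). The resulting identity is exactly \eqref{Wpol}.

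There is no real obstacle here: the argument is conditioning on the first interarrival $\tau_0$, the renewal/independence property of the $(\tau_i)$, the binomial expansion, and the boundary convention. The only thing to be careful about is the bookkeeping at $k=0$ and $k=p$ and checking that extending $M_p$ by zero on $(-\infty,0]$ is consistent with the definition of $\arr(\cdot)$, which is immediate from $T_0=0$.
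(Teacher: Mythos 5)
Your proposal is correct and follows essentially the same route as the paper: the paper likewise specializes \eqref{FFF} to $f(x)=x^p$, expands $(\tau+\Phi_{t-\tau})^p$ binomially, and recombines the $k=0$ and $k=p$ terms to get \eqref{Wpol}. The only difference is that you make explicit the convention $M_q(s)=0$ for $s\le 0$ when dropping the indicator $\{\tau\le t\}$, a point the paper leaves implicit.
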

\begin{proof}
Proceed as in the proof of Theorem \ref{Hermes} but let
$f(x)=x^p$ in \eqref{FFF}:
\begin{align*}
M_p(t) &= \E[\tau^p; \tau > t] + \E[(\tau+\Phi_{t-\tau})^p; \tau \le t]
\\
&= \E[\tau^p; \tau > t] 
+ \E\left[\sum_{k=0}^p \binom{p}{k} \tau^k \Phi_{t-\tau}^{p-k}; \tau \le t\right]
\\
&= \E[\tau^p;\tau> t]  + \E[\tau^p; \tau\le t]
+ \E[\Phi_{t-\tau}^{p}; \tau \le t] 
+ \E\left[\sum_{k=1}^{p-1} \binom{p}{k} \tau^k \Phi_{t-\tau}^{p-k}; \tau \le t\right]
\\
&=\E \tau^p +  \E M_p(t-\tau) + \sum_{k=1}^{p-1} \binom{p}{k}
\E[\tau^k M_{p-k}(t-\tau)].
\end{align*}
\end{proof}

Let $(U*U)(t) := \int_0^t U(t-x)\, U(dx)$.
\begin{corollary}
Under the assumptions of Theorem \ref{Hermes},
\begin{equation}
\label{Nero}
\wE \alpha(0) 
= \E \sigma + \frac{\E T_N^2}{2 \E T_N}
= \E \sigma + \frac{\E M_2(\sigma)}{2 \E M_1(\sigma)}
=\E \sigma 
+ \frac{\E \tau^2}{2 \E \tau}
+ \frac{ \E(\tau (U*U)(\sigma-\tau))}{ \E U(\sigma)}.
\end{equation}
\end{corollary}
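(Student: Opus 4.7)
The plan is to combine the distributional identity $\alpha(0) \eqdist \sigma + \overline{T_N}$ from Theorem \ref{Hermes} with the elementary identity $\E \overline{X} = \E X^2/(2\E X)$ for the stationary version of a positive random variable. This immediately gives $\wE \alpha(0) = \E\sigma + \E T_N^2/(2\E T_N)$, which is the first equality of \eqref{Nero}. Since $N = \arr(\sigma_0)$ and the sequence $(\tau_i, T_i)_{i\ge 0}$ is independent of $\sigma_0$ by the i.i.d.\ assumption, conditioning on $\sigma_0$ gives $\E T_N^p = \E M_p(\sigma)$ for every $p \ge 1$. Applied to $p = 1, 2$ this produces the second equality.

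For the third equality, closed forms for $M_1$ and $M_2$ are needed. For $M_1$, the argument used in \eqref{ETN}, but performed with a deterministic $t$ in place of $\sigma_0$, yields $M_1(t) = (\E\tau)\, U(t)$. For $M_2$, I would specialize the recursion \eqref{Wpol} to $p = 2$:
\[
M_2(t) = \E M_2(t-\tau) + \E\tau^2 + 2\,\E[\tau\, M_1(t-\tau)],
\]
and substitute the formula for $M_1$ to rewrite this as a standard renewal equation $M_2(t) = g(t) + \int_0^t M_2(t-x)\,\P(\tau \in dx)$ with inhomogeneous term $g(t) = \E\tau^2 + 2(\E\tau)\,\E[\tau\, U(t-\tau)]$. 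Inverting against the renewal measure $U(dx)$ and applying Fubini (using the convention $U(\cdot) = 0$ on $(-\infty, 0)$ to identify $\int_0^t U(t - x - \tau)\,U(dx)$ with $(U*U)(t - \tau)$) then gives
\[
M_2(t) = (\E\tau^2)\,U(t) + 2(\E\tau)\,\E[\tau\,(U*U)(t-\tau)].
\]
Evaluating at $t = \sigma$, integrating (with $\sigma \indep \tau$) and dividing by $2\,\E M_1(\sigma) = 2(\E\tau)\,\E U(\sigma)$ separates the fractions to produce the final equality of \eqref{Nero}.

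The only delicate step is the renewal step for $M_2$: verifying that the recursion has the canonical renewal form so that inversion against $U(dx)$ is legitimate, and carrying the support convention $U(\cdot) = 0$ on $(-\infty, 0)$ cleanly through the double convolution so that the second term collapses to $\E[\tau\,(U*U)(\sigma - \tau)]$. Everything else is algebra.
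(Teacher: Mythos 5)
Your proposal is correct and follows essentially the same route as the paper: the first equality from the decomposition $\alpha(0)\eqdist \sigma+\overline{T_N}$, the second by conditioning on $\sigma_0$ so that $\E T_N^p=\E M_p(\sigma)$, and the third by computing $M_1(t)=(\E\tau)U(t)$ and solving the $p=2$ case of \eqref{Wpol} as a renewal equation against $U$ to get $M_2(t)=(\E\tau^2)U(t)+2(\E\tau)\E[\tau\,(U*U)(t-\tau)]$. Your extra care with the canonical renewal form and the support convention for $U$ only makes explicit what the paper summarizes as ``with the help of \eqref{recU} we can solve this explicitly.''
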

\begin{proof}
The first equality in \eqref{Nero} follows from the decomposition \eqref{alphadec}.
The second equality follows from
$\E T_N^p = \E \E[T_{\arr(\sigma_0)}^p| \sigma_0] = \E M_p(\sigma)$.
We next have
\begin{equation}
\label{per1}
M_1(t) = \E \tau \, U(t)
\end{equation}
and, from \eqref{Wpol} with $p=2$,
\[
M_2(t) = \E \tau^2 +  2 \E [\tau\,M_1(t-\tau)] + \E M_2(t-\tau)
= \E \tau^2 +  2 \E \tau \, \E [\tau\,U(t-\tau)] + \E M_2(t-\tau).
\]
With the help of \eqref{recU} we can solve this explicitly and express $M_2$
as a function of $U$:
\begin{equation}
\label{per2}
M_2(t) = \E \tau^2 \cdot U(t)+ 2\, \E \tau \, \E[ \tau\, (U*U)(t-\tau)].
\end{equation}
Using \eqref{per1} and \eqref{per2} in the second equality of \eqref{Nero}
we arrive at the third one.
\end{proof}

The Laplace transforms of $U$, $W_{u}$ and $M_2$ are easy to
obtain explicitly in terms of the Laplace transform of $\tau$:
\begin{lemma}
\label{LapLem1}
\begin{align}
&\widehat U(\xi) := \int_0^\infty e^{-\xi t} U(t) dt 
= \frac{1/\xi}{1-\E e^{-\xi \tau}},
\label{hatU}
\\
&\widehat W_{u}(\xi) := \int_0^\infty e^{-\xi t} W_{u}(t) dt
= \frac{1}{\xi} \cdot
\frac{\E[e^{-u\tau} -e^{-(u+\xi) \tau}]}{1-\E e^{-(u+\xi) \tau}}.
\label{hatW}
\\
&\widehat M_2(\xi)  := \int_0^\infty e^{-\xi t} M_2(t) dt 
= \frac{\E \tau^2}{\xi (1-\E e^{-\xi \tau})}
+ 2 (\E \tau) \frac{\E (\tau e^{-\xi\tau})}{\xi (1-\E e^{-\xi \tau})^2}.
\label{hatW2}
\end{align}
\end{lemma}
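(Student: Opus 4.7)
The plan is to take Laplace transforms of the three renewal-type equations available to us (\eqref{recU}, \eqref{recW}, and \eqref{per2}) and solve the resulting algebraic relations. The key tool throughout is that for any locally integrable $g$ and finite measure $\mu$ on $[0,\infty)$, the Laplace transform of $t \mapsto \int_{[0,t]} g(t-x)\mu(dx)$ is $\widehat g(\xi)\cdot \int_{[0,\infty)} e^{-\xi x}\mu(dx)$.

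First, for \eqref{hatU}: apply Laplace transform to \eqref{recU}. The constant $1$ contributes $1/\xi$, and the convolution term contributes $\widehat U(\xi)\cdot \E e^{-\xi\tau}$, so $\widehat U(\xi) = 1/\xi + \widehat U(\xi)\,\E e^{-\xi\tau}$, which rearranges to \eqref{hatU}.

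Next, for \eqref{hatW}: apply Laplace transform to \eqref{recW}. The first summand is the Laplace transform of $t\mapsto \E[e^{-u\tau};\tau>t]$; switching the order of integration yields
\[
\int_0^\infty e^{-ux}\frac{1-e^{-\xi x}}{\xi}\,\P(\tau \in dx) = \frac{1}{\xi}\bigl(\E e^{-u\tau} - \E e^{-(u+\xi)\tau}\bigr).
\]
The second summand is a convolution against the measure $e^{-ux}\P(\tau \in dx)$, so it contributes $\widehat W_u(\xi)\cdot \E e^{-(u+\xi)\tau}$. Solving the resulting linear equation for $\widehat W_u(\xi)$ gives \eqref{hatW}.

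Finally, for \eqref{hatW2}: start from the explicit representation \eqref{per2}. The first term transforms directly to $\E\tau^2 \cdot \widehat U(\xi)$. For the second term I need the Laplace transform of the Stieltjes convolution $U\ast U$, where $U$ is left-continuous with $U(0)=0$, $U(0+)=1$, so that the Stieltjes measure $dU$ on $[0,\infty)$ satisfies
\[
\int_{[0,\infty)} e^{-\xi x}\,U(dx) = \sum_{n=0}^\infty \E e^{-\xi T_n} = \frac{1}{1-\E e^{-\xi\tau}} = \xi\widehat U(\xi).
\]
Hence $\widehat{U\ast U}(\xi) = \xi\widehat U(\xi)^2$. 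A short conditioning on $\tau$ then gives that the Laplace transform of $t\mapsto \E[\tau\,(U\ast U)(t-\tau)]$ equals $\E(\tau e^{-\xi\tau})\cdot \widehat{U\ast U}(\xi)$. Combining these pieces yields \eqref{hatW2}. The only subtle point is the correct bookkeeping of the unit mass of $dU$ at the origin (coming from $T_0=0$); once that is handled, the computation is purely algebraic.
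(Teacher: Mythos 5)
Your proposal is correct and follows essentially the same route as the paper: Laplace-transform the fixed-point equations \eqref{recU} and \eqref{recW} to get \eqref{hatU} and \eqref{hatW}, and transform the representation \eqref{per2} to get \eqref{hatW2}. Your explicit treatment of the Laplace--Stieltjes transform of $dU$ (with its unit atom at the origin, giving $\widehat{U*U}(\xi)=\xi\,\widehat U(\xi)^2$) simply fills in the step the paper leaves as ``\eqref{hatW2} follows from \eqref{per2} and \eqref{hatU},'' and it does so correctly.
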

\begin{proof}
Equation \eqref{recU} then gives
\[
\widehat U(\xi) = \frac{1}{\xi} + \widehat U(\xi)\, \E e^{-\xi \tau},
\]
and hence \eqref{hatU} follows. 
Equation \eqref{recW} gives
\begin{align*}
\widehat W_{u}(\xi) 
&= \int_0^\infty e^{-\xi t}\E[e^{-u\tau} \1_{\tau>t}]\, dt
+ \int_0^\infty e^{-\xi t} \E[W_{u}(t-\tau) e^{-u\tau}  \1_{\tau \le t}]\, dt
\\
&= \E\left[e^{-u\tau}  \frac{1-e^{-\xi \tau}}{\xi}\right]
+ \E\left[  \e^{-u\tau}  e^{-\xi\tau} \int_\tau^\infty e^{-\xi(t-\tau)} W_{u}(t-\tau) dt\right]
\\
&= \frac{1}{\xi} 
\E\left[e^{-u\tau}  (1-e^{-\xi \tau})\right]
+ \E\left[ e^{-u\tau} e^{-\xi\tau} \right] \widehat W_{u}(\xi),
\end{align*}
from which \eqref{hatW} follows. 
Finally, \eqref{hatW2} follows from \eqref{per2} and \eqref{hatU}.
\end{proof}

\begin{corollary} 
\label{cor:AoI-blocking-mean}
Let the assumptions of Theorem \ref{betaiid} hold true.
\\
(i) If the variables $\tau_n$ are exponential with rate $\lambda$, then
\[
\wE e^{-u \alpha(0)} 
= \frac{\lambda}{1+\lambda \E \sigma} \cdot
\frac{(u+\lambda-\lambda \E e^{-u\sigma}) \E e^{-u\sigma}}{u(u+\lambda)},
\qquad
\wE \alpha(0) 
= \E \sigma + \frac{1}{\lambda} + \frac{\lambda}{2}\cdot
\frac{\E \sigma^2}{1+\lambda \E \sigma}.
\]
(ii) If the variables $\sigma_n$ are exponential with rate $\mu$, then
\begin{multline*}
\wE e^{-u \alpha(0)} 
=\frac{1}{\E \tau}\cdot \frac{\mu}{(\mu+u)u} \cdot
\frac{(1-\E e^{-\mu \tau})(1-\E e^{-u \tau})}{1-\E e^{-(\mu+u) \tau}}
= \frac{\mu^2}{(\mu+u)^2} \cdot
\frac{\E e^{-\mu \overline\tau} \, \E e^{-u \overline\tau}}
{\E e^{-(\mu+u) \overline \tau}},
\\
\wE \alpha(0) 
=\frac{1}{\mu} + \frac{\E \tau^2}{2\E \tau} 
+ \frac{\E(\tau e^{-\mu \tau})}{1-\E e^{-\mu \tau}}.
\end{multline*}
(iii) If the $\tau_n$ are  exponential with rate $\lambda$,
and the $\sigma_n$ are exponential with rate $\mu$ then, under $\wP$,
\[
\wE e^{-u \alpha(0)} = \frac{\mu^2\lambda(\lambda+\mu+u)}
{(\lambda+\mu)(\lambda+u)(\mu+u)^2},
\qquad
\wE \alpha(0) = \frac{1}{\mu} + \frac{1}{\lambda} 
+ \frac{\lambda}{\mu(\lambda+\mu)}.
\]
\end{corollary}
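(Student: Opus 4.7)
The plan is to specialize Theorem \ref{Hermes} (for the Laplace transform) and formula \eqref{Nero} (for the mean) to each of the three exponential cases, the key inputs being the explicit expressions for $U$, $W_u$ and $M_2$ provided by Lemma \ref{LapLem1}.

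For part (i), with $\tau$ exponential of rate $\lambda$, I would first read off $U(t)=1+\lambda t$ and, from \eqref{hatW}, $\widehat{W_u}(\xi)=\lambda/[(\lambda+u)(u+\xi)]$, which inverts to $W_u(t)=\frac{\lambda}{\lambda+u}e^{-ut}$. Hence $\E U(\sigma)=1+\lambda\E\sigma$ and $\E W_u(\sigma)=\frac{\lambda}{\lambda+u}\E e^{-u\sigma}$. Substituting these together with $\E\tau=1/\lambda$ into the formula \eqref{alphaBL} of Theorem \ref{Hermes} yields the claimed Laplace transform after elementary collection of terms. For the mean I would bypass \eqref{Nero} and use the decomposition \eqref{alphadec} directly: the memoryless property of $\tau$ under $\P$ gives $T_N\eqdist\sigma+\mathbf e_\lambda$ with $\mathbf e_\lambda$ an independent rate-$\lambda$ exponential, so $\E T_N$ and $\E T_N^2$ are immediate, and $\wE\alpha(0)=\E\sigma+\E T_N^2/(2\E T_N)$ simplifies to the stated expression.

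For part (ii), with $\sigma$ exponential of rate $\mu$, any expectation $\E f(\sigma)$ equals $\mu\widehat f(\mu)$, so Lemma \ref{LapLem1} delivers $\E U(\sigma)=1/(1-\E e^{-\mu\tau})$ and $\E W_u(\sigma)=(\E e^{-u\tau}-\E e^{-(u+\mu)\tau})/(1-\E e^{-(u+\mu)\tau})$. A single cancellation gives $1-\E W_u(\sigma)=(1-\E e^{-u\tau})/(1-\E e^{-(u+\mu)\tau})$, and substituting into \eqref{alphaBL} produces the first form of the Laplace transform. The second (stationary-version) form then follows by recognizing each factor $(1-\E e^{-v\tau})/(v\E\tau)$ as $\E e^{-v\overline\tau}$ for $v\in\{u,\mu,u+\mu\}$. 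For the mean, I would again use $\wE\alpha(0)=\E\sigma+\E T_N^2/(2\E T_N)$ with $\E T_N=\E\tau\cdot\E U(\sigma)$ and $\E T_N^2=\E M_2(\sigma)=\mu\widehat{M_2}(\mu)$ read off from \eqref{hatW2}; after the obvious cancellation the ratio collapses to $\E\tau^2/(2\E\tau)+\E(\tau e^{-\mu\tau})/(1-\E e^{-\mu\tau})$.

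Part (iii) is a direct specialization of either (i) (by further setting $\sigma$ exponential of rate $\mu$) or (ii) (by further setting $\tau$ exponential of rate $\lambda$); both paths give the same answer. The only simplification worth flagging is the identity $(u+\lambda)(\mu+u)-\lambda\mu=u(u+\lambda+\mu)$, which cancels the $u$ in the denominator and produces the claimed rational function. The main obstacle throughout is simply clean bookkeeping of the algebra; no new stochastic idea is needed beyond the functional identities for $U$, $W_u$, and $M_2$ already established.
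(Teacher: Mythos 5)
Your proposal is correct and follows essentially the same route as the paper: specialize \eqref{alphaBL} and the decomposition/\eqref{Nero} using $U(t)=1+\lambda t$, $W_u(t)=\frac{\lambda}{\lambda+u}e^{-ut}$, and the transforms of Lemma \ref{LapLem1}, with case (iii) obtained by further specialization. The only (harmless) deviation is in the mean of part (i), where you invoke the memoryless overshoot $T_N\eqdist\sigma+\mathbf e_\lambda$ instead of inverting $\widehat M_2$ as the paper does; both yield the same expression.
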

\begin{proof}
(ia)
We compute the functions $U(t)$ and $W_{u}(t)$ that enter formula \eqref{alphaBL}.
Since, under $\P$, $\arr=\sum_n \delta_{T_n}$ 
is a Poisson process with a point at $0$
we have, directly from \eqref{funU}, $U(t)=1+\lambda t$.
Since $\tau$ is exponential, \eqref{hatW} explicitly gives the Laplace transform of $W_{u}$:
\[
\widehat W_{u}(\xi) =
 \frac{1}{\xi} \cdot
\frac{\E[e^{-u\tau}(1-e^{-\xi \tau})]}{1-\E e^{-u\tau} e^{-\xi \tau}}
= \frac{1}{\xi} \cdot 
\frac{\frac{\lambda}{\lambda+u} -\frac{\lambda}{\lambda+u+\xi}}
{1-\frac{\lambda}{\lambda+u+\xi}}
= \frac{\lambda}{\lambda+u} \cdot \frac{1}{u+\xi},
\]
and hence
\[
W_{u}(t) = \frac{\lambda}{\lambda+u} e^{-ut}.
\]
Substituting into \eqref{alphaBL} we obtain the announced formula for 
$\wE e^{-u\alpha(0)}$.
\\
(ib) Equations \eqref{hatU} and \eqref{hatW2} give
\[
\widehat M_2(\xi)  
= \E \tau^2\, \widehat U(\xi)
+ 2 (\E \tau) \frac{\E (\tau e^{-\xi\tau})}{\xi (1-\E e^{-\xi \tau})^2}
= \E \tau^2\, \widehat U(\xi) + \frac{ 2 \lambda \E \tau }{\xi^3}.
\]
Hence
\[
M_2(t) =  \E \tau^2\, U(t) + \lambda (\E \tau) t^2.
\]
Using this and $M_1(t) = \E \tau\, U(t)$ in \eqref{Nero} we obtain the announced
formula for $\wE \alpha(0)$.
\\
(iia) 
If $\sigma$ is exponential with rate $\mu$ then
$\E U(\sigma) = \mu \widehat U(\mu)$ and
$\E W_{u}(\sigma) = \mu \widehat W_{u}(\mu)$.
Hence  \eqref{alphaBL} gives
\[
\wE e^{-u \alpha(0)} =\E e^{-u\sigma}\,  \frac{1-\mu \widehat W_{u}(\mu)}
{u\,\E \tau\, \mu\, \widehat U(\mu)}
\]
But the Laplace transforms $\widehat U$ and $\widehat W_{u}$ are known from
Lemma \ref{LapLem1}. Substituting in the last display we obtain the
first announced equality
for $\wE e^{-u \alpha(0)}$.
For the second equality, simply replace the three terms of the form $1-\E e^{-\xi \tau}$ by 
$\xi (\E \tau) \E e^{-\xi\overline\tau}$.
\\
(iib) From the middle of \eqref{Nero} we have
\[
\wE \alpha(0) = \frac{1}{\mu} + \frac{\widehat W_2(\mu)}{2 \widehat W_1(\mu)}
\]
and the formula follows from the previously derived formulas for $\widehat W_2$
and $\widehat W_1$.
\\
(iiia)
Consider the second equality in (ii).
Since $\overline \tau \eqdist \tau$ we have $\E e^{-\xi \overline \tau}
= \lambda/(\xi+\lambda)$.
Replacing the three terms in the second equality in (ii) by such ratios
we arrive at the announced formula.
Alternatively, letting $\E e^{-\mu \sigma} = \mu/(\mu+u)$ and $\E \sigma=1/\mu$
in (i) we arrive at the same formula.
\\
(iiib)
Set $\E \sigma = 1/\mu$, $\E \sigma^2 = 2/\mu^2$ in the last formula of (i).
\end{proof}

\subsection{The new age of information  for the blocking system}
\label{naoibl}
Recall that the NAoI process is given by $\beta(t) = A_t - S_{D_t}$,
where $A_t$ is the last arrival (accepted or not) before $t$
and $S_{D_t}$ is the last successful arrival before the last successful departure before $t$;
this quantity is given by \eqref{onetwo}.

\begin{theorem}
\label{betagen2}
Consider the blocking system under stationarity assumptions
and assume that \eqref{SenecaB} holds.
Then the $\wP$-law of $\beta(0)$ has an atom at $0$ satisfying
\begin{equation}
\label{0value}
\wP(\beta(0)=0)
= \frac{\Estar(\tau_0-\sigma_0)^+}{\Estar T_N},
\end{equation}
while, for $f$ bounded and measurable function,
\begin{multline}
\label{posvalue}
\wE [f(\beta(0)); \beta(0)>0] 
= \frac{1}{\Estar T_N} \Estar\left\{ \sum_{i=0}^{N-1} \tau_i f(T_i - T_M) 
- (T_N-\sigma_0)\,  f(T_{N-1} - T_M) \right\}
\\
+ \frac{1}{\Estar T_N} \Estar\left\{(T_N-\sigma_0)\,  f(T_{N-1})\1_{T_{N-1}>0}\right\},
\end{multline}
where $N=\inf\{\ell \ge 1:\, \psi_\ell=1\}$ and
$M=\sup\{\ell \le -1:\, \psi_\ell=1\}$.
\end{theorem}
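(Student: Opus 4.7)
The plan is to mirror the proof of Theorem \ref{alphagen2} by applying the Palm inversion formula \eqref{palminv} to the process $\beta$, with additional bookkeeping to track $A_t$ within the current reading interval. Under $\Pstar$ we have $B_0=T_0=0$, $B_0'=\sigma_0$, $B_1=T_N$ with $N$ as in \eqref{Nbl}, and $B_{-1}=T_M$ with $M$ as in the statement; the denominator of \eqref{palminv} equals $\Estar(B_1-B_0)=\Estar T_N$.

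The core step is to describe $\beta$ piecewise on $[0,T_N)$ using \eqref{onetwo} together with the arrival times inside the cycle. On the reading interval $[0,\sigma_0)$, \eqref{onetwo} gives $S_{D_t}=B_{-1}=T_M$; by the definition of $N$ the arrivals lying in this interval are exactly $T_0<T_1<\cdots<T_{N-1}$, so $\beta(t)=T_i-T_M$ for $t\in[T_i,T_{i+1})$ with $0\le i\le N-2$, and $\beta(t)=T_{N-1}-T_M$ for $t\in[T_{N-1},\sigma_0)$. On the idle interval $[\sigma_0,T_N)$, \eqref{onetwo} gives $S_{D_t}=B_0=0$ while $A_t=T_{N-1}$, so $\beta(t)=T_{N-1}$. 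Since $T_M<0\le T_i$, every value $T_i-T_M$ is strictly positive, so $\beta(t)$ can vanish on $[0,T_N)$ only when $T_{N-1}=0$, equivalently when $N=1$ (that is, $\tau_0\ge\sigma_0$), in which case $\beta\equiv 0$ on $[\sigma_0,\tau_0)$.

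From this picture both parts of the theorem follow. For the atom, with $f=\1_{\{0\}}$ the numerator in \eqref{palminv} equals $\Estar[(T_N-\sigma_0)\1_{N=1}]=\Estar(\tau_0-\sigma_0)^+$, which gives \eqref{0value}. For the continuous part, integrating $f(\beta(t))\1_{\beta(t)>0}$ over $[0,T_N)$ produces
\[
\sum_{i=0}^{N-2}\tau_i\,f(T_i-T_M)+(\sigma_0-T_{N-1})\,f(T_{N-1}-T_M)+(T_N-\sigma_0)\,f(T_{N-1})\,\1_{T_{N-1}>0},
\]
and I would rewrite the first two summands by extending the sum up to $N-1$; the identity $(\sigma_0-T_{N-1})-\tau_{N-1}=-(T_N-\sigma_0)$ collapses them to $\sum_{i=0}^{N-1}\tau_i f(T_i-T_M)-(T_N-\sigma_0)f(T_{N-1}-T_M)$, whence \eqref{posvalue} after division by $\Estar T_N$.

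The only delicate point, and the main obstacle to a slick one-line argument, is the boundary case $N=1$ (empty first sum): one must verify that the algebraic rearrangement above still reproduces the original middle term correctly, which it does since $\tau_0 f(-T_M)-(\tau_0-\sigma_0)f(-T_M)=\sigma_0 f(-T_M)$. Everything else is routine Palm bookkeeping, parallel to the AoI calculation in Theorem \ref{alphagen2}.
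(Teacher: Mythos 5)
Your proposal is correct and follows essentially the same route as the paper's proof: Palm inversion over the cycle $[B_0,B_1)=[0,T_N)$, the same piecewise description of $\beta$ via \eqref{onetwo}, and the same add-and-subtract of the $i=N-1$ term using $T_{N-1}+\tau_{N-1}=T_N$. The only cosmetic difference is that you apply the inversion formula directly to $x\mapsto f(x)\1_{x>0}$, whereas the paper first derives the identity for general $f$ and then subtracts the atom via $g(x)=f(x)-f(0)\1_{x=0}$; these are the same computation.
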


\begin{proof}
Notice that $N = \inf\{\ell \ge 1:\, T_\ell  \ge \sigma_0\}$, $\P^*$-a.s.
We use the Palm inversion formula:
\begin{equation}
\label{palminvagain}
\wE f(\beta(0)) = \frac{\Estar \int_{B_0}^{B_1} f(\beta(t))\, dt}
{\Estar(B_1-B_0)}.
\end{equation}
Since $M, N$ are the  indices of the admitted messages nearest to $0$,
\[
B_{-1} = T_M \le T_{-1} < T_0=B_0=0 < T_1 < \cdots < T_{N-1} < \sigma_0 \le T_N=B_1,
\quad \Pstar\text{-a.s.}
\]
In particular, $B_1-B_0=T_N$, $\P^*$-a.s.
Since $\beta(t) = A_t-S_{D_t}$, using \eqref{onetwo} we have
\[
\beta(t) = \begin{cases}
T_i-T_M, &\text{ if } T_0 \le T_i \le t < T_{i+1} \le T_{N-1} 
\\
T_{N-1}-T_M, &\text{ if } T_{N-1} \le t < T_0+\sigma_0
\\
T_{N-1}-T_0, &\text{ if }T_0+\sigma_0 \le t < T_N
\end{cases}.
\]
Let $f: \R \to \R$ be bounded and measurable.
We write the integral in the numerator of \eqref{palminvagain} as:
\begin{align}
\int_{T_0}^{T_N} f(\beta(t)) \, dt
&= \int_{T_0}^{T_{N-1}} f(\beta(t))\, dt
+\int_{T_{N-1}}^{T_0+\sigma_0} f(\beta(t))\, dt
+\int_{T_0+\sigma_0}^{T_{N}} f(\beta(t))\, dt
\nonumber
\\
&= \sum_{i=0}^{N-2}  \int_{T_i}^{T_{i+1}} f(T_i - T_M) dt
+ \int_{T_{N-1}}^{T_0+\sigma_0} f(T_{N-1} - T_M) dt
+ \int_{T_0+\sigma_0}^{T_N} f(T_{N-1} - T_0) dt
\nonumber
\\
&{= \sum_{i=0}^{N-2} \tau_i f(T_i - T_M) 
+ (\sigma_0-T_{N-1})\,  f(T_{N-1} - T_M) 
+ (T_N-\sigma_0)\, f(T_{N-1})}.
\label{betaintegral}
\end{align}
Add and subtract the term corresponding to $i=N-1$ to write the last line as
\begin{align}
&= 
\sum_{i=0}^{N-1} \tau_i f(T_i - T_M) - \tau_{N-1} f(T_{N-1} - T_M)
+ (\sigma_0-T_{N-1})\,  f(T_{N-1} - T_M) 
+ (T_N-\sigma_0)\, f(T_{N-1})  \nonumber
\\
&= \sum_{i=0}^{N-1} \tau_i f(T_i - T_M) 
+ (\sigma_0-T_{N-1}-\tau_{N-1})\,  f(T_{N-1} - T_M) 
+ (T_N-\sigma_0)\, f(T_{N-1}) \nonumber
\\
&= \left\{ \sum_{i=0}^{N-1} \tau_i f(T_i - T_M) 
- (T_N-\sigma_0)\,  f(T_{N-1} - T_M) \right\}
+ (T_N-\sigma_0)\, f(T_{N-1}).
\label{bracketed}
\end{align}
(For $f \ge 0$, the term in the bracket is positive because the last term of the sum
is $\tau_{N-1} f(T_{N-1}-T_M)$ is bigger than 
$(T_N-\sigma_0)\,  f(T_{N-1} - T_M)$ and this
is because
$\tau_{N-1}-(T_N-\sigma_0) = \sigma_0 - T_{N-1} >0$.)
By \eqref{palminvagain},
\begin{equation}
\label{straight}
\Estar T_N \, \wE f(\beta(0)) 
= 
 \Estar\left\{ \sum_{i=0}^{N-1} \tau_i f(T_i - T_M) 
- (T_N-\sigma_0)\,  f(T_{N-1} - T_M) \right\}
+\Estar\left\{(T_N-\sigma_0)\,  f(T_{N-1})\right\}.
\end{equation}
To reveal the atom of the $\wP$-law of $\beta(0)$ at $0$, let
\[
f(x) = \1_{x=0}.
\]
Then $f(T_i - T_M) =0$ because $T_M<0$.
Also, $f(T_{N-1})=\1_{T_{N-1}=0} = \1_{N=1} = \1_{\tau_0 \ge \sigma_0}$.
Hence
\[
\Estar T_N \, \wP(\beta(0)=0)
= \Estar\left\{(T_N-\sigma_0)\, \1_{N=1}\right\}
= \Estar\left\{(\tau_0-\sigma_0)\, \1_{\tau_0 \ge \sigma_0}\right\}
= \Estar(\tau_0-\sigma_0)^+.
\]
On the other hand,
\begin{align*}
\wE [f(\beta(0)); \beta(0)>0]
&=  \wE f(\beta(0)) - \wE [f(\beta(0)); \beta(0)=0]
\\
&=  \wE f(\beta(0)) - f(0) \wP(\beta(0)=0)
\\
&=  \wE [f(\beta(0)) - f(0) \1_{\beta(0)=0}] \equiv \wE g(\beta(0)),
\end{align*}
where
\[
g(x) = f(x) - f(0) \1_{x=0}.
\]
We use $g$ in place of $f$ in \eqref{straight} after noting that
$g(T_i-T_M) = f(T_i-T_M) - f(0) \1(T_i=T_M) = f(T_i-T_M)$ for $i \ge 0$,
and $g(T_{N-1}) = f(T_{N-1}) - f(0) \1_{T_{N-1}=0} = f(T_{N-1}) - f(0) \1_{N=1}$.
So
\begin{multline*}
\Estar T_N \, \wE [f(\beta(0)) ; \beta(0)>0]
= 
\\
= \Estar\left\{ \sum_{i=0}^{N-1} \tau_i f(T_i - T_M) 
- (T_N-\sigma_0)\,  f(T_{N-1} - T_M) \right\}
+\Estar\left\{(T_N-\sigma_0)\,  (f(T_{N-1}) - 
f(0) \1_{N=1})\right\}.
\end{multline*}
Notice that 
\[
f(T_{N-1}) - f(0) \1_{N=1}
= f(T_{N-1}) - f(T_{N-1}) \1_{N=1}
= f(T_{N-1}) \, \1_{N>1}
= f(T_{N-1}) \, \1_{T_{N-1}>0}
\]
and substitute into the last display to obtain the announced formula.
\end{proof}

By Palm theory and stationarity, we have that $|M|$ and $N$ have the same $\Pstar$-law
and so do $|T_M|$ and $T_N$.
This  simple fact is stated as an stand-alone lemma because
it holds only under stationary assumptions and because it is needed when we
explicitly compute distributions under independence assumptions.

\begin{remark}
\label{rembet} 
We now give a physical meaning to the $\wP$-law of $\beta(0)$
conditional on $\beta(0)>0$.
Say that the message arriving
at time $T_n$ is {\it undisturbed} if it is admitted (and hence successful) and
no other messages arrive during the time it is being processed;
i.e., $\psi_n=1$ and $T_n + \sigma_n \le T_{n+1}$.
Therefore, for $T_n+\sigma_n \le t < T_{n+1}$ we have $\beta(t)=0$:
undisturbed messages provided the freshest possible information;
this is what contributes to the atom at $0$ for $\beta(0)$.
Define then an auxiliary system, pathwise, by removing all undisturbed messages.
If $\beta_+(t)$ denotes the NAoI process for the auxiliary system then
we have that, under $\wP$, $\beta(0)$ equals $0$ with probability
$\frac{\Estar(\tau_0-\sigma_0)^+}{\Estar T_N}$
or $\beta_+(0)$ with the remaining probability.
In particular, 
\[
\wE [f(\beta(0));\, \beta(0)>0] = \wE f(\beta_+(0)).
\]
\end{remark}

\begin{lemma}
\label{lemtriv}
Assume that $(\tau_n, \sigma_n)$, $n \in \Z$, is stationary under $\P$.
Let $N = \inf\{\ell \ge 1:\, \psi_\ell=1\}$
and $M= \sup\{\ell \le -1:\, \psi_\ell=1\}$.
Then
\[
\E (g(-T_M)|\psi_0=1) = \E (g(T_N)|\psi_0=1),
\]
for any bounded and measurable function $g$.
\end{lemma}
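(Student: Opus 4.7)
The plan is to reduce the identity to the shift-invariance of $\Pstar$ under the discrete Palm flow. Since $\Pstar = \P(\cdot\mid \psi_0=1)$ by \eqref{PstarPbl}, the claim is equivalent to $\Estar g(T_N) = \Estar g(-T_M)$. Under $\Pstar$ the message at $T_0=0$ is admitted, so $B_0 = T_0 = 0$; and by the very definitions of $N$ and $M$ as the indices of the nearest admitted arrivals strictly after and before the origin, we have $B_1 = T_N$ and $B_{-1} = T_M$, $\Pstar$-almost surely. Thus $T_N = B_1 - B_0$ and $-T_M = B_0 - B_{-1}$ are two consecutive inter-atom distances in the stationary enumeration $\{B_k\}$.

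Next I would invoke the standard Palm reindexing identity
\[
B_k \circ \theta_{B_\ell} \;=\; B_{k+\ell} - B_\ell, \qquad k,\ell \in \Z,
\]
valid $\Pstar$-a.s.\ (see \cite{BB}). Specialized to $\ell=-1$, $k=1$ this gives $B_1 \circ \theta_{B_{-1}} = B_0 - B_{-1}$, which, since $B_0 = 0$ under $\Pstar$, simplifies to $T_N \circ \theta_{B_{-1}} = -T_M$. Because $\theta_{B_{-1}}$ preserves $\Pstar$ (the defining property of the Palm discrete flow noted after Definition~\ref{defpalm}), the change of variables formula yields
\[
\Estar g(T_N) \;=\; \Estar\bigl[g(T_N)\circ \theta_{B_{-1}}\bigr] \;=\; \Estar g(-T_M),
\]
which, rewritten using $\Pstar=\P(\cdot\mid \psi_0=1)$, is the desired equality.

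The only point requiring a line of care is the reindexing identity itself: when the atoms of $\{B_k\}$ are translated by $-B_{-1}$, the atom that sat at $B_{-1}$ lands at the origin and must be relabelled as the new $B_0$ under the convention $B_0 \le 0 < B_1$, which shifts every other label up by one and produces the stated formula. I do not foresee any genuine obstacle beyond this small bit of bookkeeping; the argument is really just the statement that $T_N$ and $-T_M$ are distributed as two consecutive spacings of the stationary sequence $\{B_k\}$, whose common law is independent of the index.
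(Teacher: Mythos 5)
Your proof is correct and follows essentially the same route as the paper: both rest on identifying $T_N$ and $-T_M$ with the forward and backward spacings $B_1-B_0$ and $B_0-B_{-1}$ of the successful-arrival point process under its Palm probability $\Pstar=\P(\cdot\mid\psi_0=1)$, and on the invariance of $\Pstar$ under the point shifts $\theta_{B_k}$. The only difference is that the paper quotes the exchange identity $\Estar g(-T^*_{-1})=\Estar g(T^*_1)$ as a standard Palm fact, whereas you derive it explicitly from the reindexing identity $B_k\circ\theta_{B_\ell}=B_{k+\ell}-B_\ell$; this is just a fuller write-up of the same argument.
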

\begin{proof}
The point process $\sum_n \psi_n \delta_n$ is stationary under $\widetilde \P$
and the Palm probability of the latter with respect to this point process
is denoted by $\Pstar$.
If $\cdots < T^*_{-1} < T^*_0 \le 0 < T^*_1 < T^*_2<\cdots$ is
an enumeration of the points of $\sum_n \psi_n \delta_n$ in their natural order
then $\Estar g(-T^*_{-1}) = \Estar g(T^*_1)$ for any bounded measurable function $g$.
But $T^*_1=T_N$ and $T^*_{-1} = T_M$ and $\P=\Pstar(\cdot|\psi_0=1)$.
\end{proof}

Under i.i.d.\ assumptions, and because the decision
on whether to admit a message or not is past-dependent, the ensued regeneration
results into further simplification and the vanishing of the $M$ from the formula.
We explain this below.
First fix $u \ge 0$ and consider the function $W_u(t)$ introduced in \eqref{funW}
as well as
{
\begin{align}
V_u(t) &:= \E \sum_{i=0}^{\arr(t)-1}  e^{-u T_i}, \quad t \ge 0  \label{funV}
\\
Q_u(t) &:= \E  \big\{ (T_{\arr(t)}-t)\, e^{-u T_{\arr(t)-1}} \big\}, \quad t \ge 0.  \label{funQ}
\end{align}
}

\begin{theorem}
\label{Zeus}
Consider the blocking system and
assume that $(\tau_n, \sigma_n)$, $n \in \Z$, is i.i.d.\ under $\P$
and such that $\E \tau_0 <\infty$ and $\P(\tau_0 \ge \sigma_0)>0$.
Assume further that $\tau_n$ is independent of $\sigma_n$ for all $n$.
Then $\wP(\beta(0)=0) = \frac{\E(\tau_0-\sigma_0)^+}{\E T_N}$ and
{
\begin{align}
\wE [e^{-u \beta(0)}; \beta(0)>0] 
&
=  \frac{1}{\E T_N}
\E e^{-u T_N}\,
\left\{\E \tau\, \E \sum_{i=0}^{N-1}  e^{-u T_i} - \E (T_N-\sigma_0)\,  e^{-u T_{N-1}}
\right\}
\nonumber
\\
& \qquad\qquad\qquad\qquad\qquad\qquad\qquad+  \frac{1}{\E T_N}
\E\left\{ (T_N-\sigma_0)\, e^{-u T_{N-1}} \1_{T_{N-1}>0}\right\}
\nonumber
\\
& =\frac{\E W_{u}(\sigma) \,
[ \E \tau\, \E V_u(\sigma) - \E Q_u(\sigma)] + \E [e^{-u\tau} Q_u(\sigma-\tau)]}{\E \tau\, \E U(\sigma)},
\label{betaBL}
\end{align}
}
where $U, W_u$ are unique solutions to the fixed point equations
\eqref{recU}, \eqref{recW}, respectively, while $V_u, Q_u$ 
are unique solutions to
{
\begin{align}
V_u(t) 
&= 1+ \int_{(0,t]} V_u(t-x)\, e^{-ux}\, \P(\tau\in dx),
\label{recV}
\\
Q_u(t) 
&= \E (\tau-t)^+ + \int_{(0,t]} Q_u(t-x) e^{-ux} \P(\tau\in  dx).
\label{recQ}
\end{align}}
In particular, under $\wP$, and conditional on $\beta(0)>0$, 
the random variable $\beta(0)$ is absolutely continuous.
\end{theorem}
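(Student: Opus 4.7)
The plan is to specialize the general formula in Theorem~\ref{betagen2} to the i.i.d.\ setting. The crucial structural observation is that, by \eqref{psim}, $\psi_0$ is a measurable function of $(\tau_i,\sigma_i)_{i\le -1}$ alone, and hence under the i.i.d.\ hypothesis it is independent of the ``future'' $(\tau_i,\sigma_i)_{i\ge 0}$. Since $\Pstar=\P(\cdot\mid\psi_0=1)$ by \eqref{PstarPbl}, this independence forces the $\Pstar$- and $\P$-laws to coincide on the $\sigma$-algebra generated by the future (in particular for $N$, $T_N$, $T_{N-1}$, $\sigma_0$), while $T_M$, being past-measurable, is independent of the future under $\Pstar$. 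The atom identity $\wP(\beta(0)=0)=\E(\tau-\sigma)^+/\E T_N$ then follows immediately from \eqref{0value}, and $\E T_N=\E\tau\cdot \E U(\sigma)$ is already \eqref{ETN} since $N=\arr(\sigma_0)$.

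For the positive part, substitute $f(x)=e^{-ux}$ into \eqref{posvalue}. The two terms in the first $\Estar$-bracket contain the factors $f(T_i-T_M)=e^{uT_M}e^{-uT_i}$ and $f(T_{N-1}-T_M)=e^{uT_M}e^{-uT_{N-1}}$; the past--future independence makes $\Estar e^{uT_M}$ factor out, and Lemma~\ref{lemtriv} with $g(x)=e^{-ux}$ identifies it with $\Estar e^{-uT_N}=\E e^{-uT_N}=\E W_u(\sigma)$ via \eqref{EuTN}. I would then identify
\[
\E\sum_{i=0}^{N-1}\tau_i e^{-uT_i}=\E\tau\cdot\E V_u(\sigma),\qquad \E(T_N-\sigma_0)\,e^{-uT_{N-1}}=\E Q_u(\sigma),
\]
which together produce the first summand of \eqref{betaBL}. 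The remaining term $\Estar\{(T_N-\sigma_0)e^{-uT_{N-1}}\1_{T_{N-1}>0}\}$ is future-measurable and hence equals its $\P$-expectation; I would compute it by splitting on the event $\{N=1\}=\{\tau_0\ge\sigma_0\}$, on which $T_{N-1}=0$ and the indicator kills the contribution, leaving only $\{N\ge 2\}$.

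All four identifications would be established by a uniform first-step analysis conditioning on $\tau_0$: on $\{\tau_0>t\}$ we have $\arr(t)=1$, $T_{\arr(t)}=\tau_0$, $T_{\arr(t)-1}=0$, giving the inhomogeneous terms ($1$ in \eqref{recV}, $\E(\tau-t)^+$ in \eqref{recQ}); on $\{\tau_0\le t\}$ the shift identity $(\tau_{i+1},T_{i+1}-\tau_0)_{i\ge 0}\eqdist(\tau_i,T_i)_{i\ge 0}$, independent of $\tau_0$, produces the convolution terms and closes the renewal equations. The same recipe applied to $g(s):=\E\sum_{i=0}^{\arr(s)-1}\tau_i e^{-uT_i}$ shows $g$ solves $V_u$'s equation but with source $\E\tau$, whence $g=\E\tau\cdot V_u$ by uniqueness. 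The $\{N\ge 2\}$ part of the third term shifts similarly into $\E[e^{-u\tau}Q_u(\sigma-\tau)]$ (extending $Q_u$ by $0$ on $(-\infty,0)$), and assembling and dividing by $\E T_N=\E\tau\cdot\E U(\sigma)$ yields \eqref{betaBL}. Absolute continuity of $\beta(0)$ given $\beta(0)>0$ should follow from the smoothing effect of integrating $W_u,V_u,Q_u$ against the law of $\sigma$ inside \eqref{betaBL}; the main obstacle, which I would be most careful about, is to maintain the past/future decoupling in \eqref{posvalue} unbroken through the substitutions and to handle the boundary case $\{N=1\}$ correctly when removing the indicator term from the third bracket.
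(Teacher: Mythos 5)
Your proposal is correct and follows essentially the same route as the paper's proof: substitute $f(x)=e^{-ux}$ into \eqref{posvalue}, use \eqref{psim} and the i.i.d.\ structure to replace $\Estar$ by $\E$ and factor out the past term $e^{uT_M}$, convert it via Lemma \ref{lemtriv} into $\E e^{-uT_N}=\E W_u(\sigma)$, identify the remaining expectations with $\E\tau\,\E V_u(\sigma)$, $\E Q_u(\sigma)$ and $\E[e^{-u\tau}Q_u(\sigma-\tau)]$ (with $Q_u\equiv 0$ on the negatives), and derive \eqref{recV}--\eqref{recQ} by the same first-step conditioning on $\tau_0$. The only cosmetic difference is that you get $\E\sum_{i=0}^{N-1}\tau_i e^{-uT_i}=\E\tau\,\E V_u(\sigma)$ by a renewal-equation/uniqueness argument where the paper conditions term by term on $(\sigma_0,\tau_0,\dots,\tau_{i-1})$; note that, exactly like the paper's own proof, you leave the final absolute-continuity assertion essentially unargued.
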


\begin{remark}
The term 
$\E \tau\, \E V_u(\sigma) - \E Q_u(\sigma)$
in \eqref{betaBL} is nonnegative and this is due to the  remark made below
\eqref{bracketed} about the nonnegativity of the bracketed term in \eqref{betaBL}.
\end{remark}

\begin{proof}
The value of $\wP(\beta(0)=0)$ follows from \eqref{0value} and \eqref{psim} that
allows us to replace $\Estar$ by $\E$.
To show the rest, we look at the various terms in \eqref{posvalue} with $f(x) = e^{-ux}$.
Using 
\eqref{psim} we obtain
\begin{equation}
\label{plato1}
\Estar \sum_{i=0}^{N-1} \tau_i e^{-u (T_i - T_M) }
=\E (e^{ u  T_M} |\psi_0=1)\, \E \sum_{i=0}^{N-1} \tau_i e^{-u T_i}
\end{equation}
Due to Lemma \ref{lemtriv}, the first term of the product is further written as:
\[
\E (e^{u T_M}  |\psi_0=1) 
= \E(e^{-u T_N} | \psi_0=1) 
= \E e^{-u T_N} .
\]
The second term in the last product of \eqref{plato1} is computed as follows.
\begin{multline}
 \E \sum_{i=0}^{N-1} \tau_i e^{-u T_i}
= \E \sum_{i=0}^\infty \tau_i e^{-u T_i} \1_{T_i < \sigma_0}
= \sum_{i=0}^\infty \E\{\E[ \tau_i e^{-u T_i} \1_{T_i < \sigma_0}|
\sigma_0, \tau_0,\ldots, \tau_{i-1}]\}
\\
= \sum_{i=0}^\infty \E\{e^{-u T_i} \1_{T_i < \sigma_0}\E[ \tau_i |
\sigma_0, \tau_0,\ldots, \tau_{i-1}]\}
= (\E \tau)\, \E \sum_{i=0}^{N-1} e^{-u T_i}.
\label{plato2}
\end{multline}
Using the same logic,
\begin{gather}
\Estar (T_N-\sigma_0) e^{-u (T_{N-1}-T_M)}
= \Estar e^{ uT_M} \, \E(T_N-\sigma_0)\,  e^{-u T_{N-1}}
=  \E e^{ uT_M} \, \E(T_N-\sigma_0)\,  e^{-u T_{N-1}}
\label{plato3}
\\
\Estar\left\{(T_N-\sigma_0)\,  f(T_{N-1})\1_{T_{N-1}>0}\right\}
= \E \left\{(T_N-\sigma_0)\,  f(T_{N-1})\1_{T_{N-1}>0}\right\}
\label{plato4}
\end{gather} 
Substituting \eqref{plato2} into \eqref{plato1} and then this, together with
\eqref{plato3} and \eqref{plato4}, into \eqref{posvalue} we arrive at the first 
equality for \eqref{betaBL}.
For the second equality, use \eqref{ETN}, \eqref{EuTN} and
\eqref{funW}, \eqref{funV}, \eqref{funQ} and 
observe that
\[
\E  \big\{ (T_{\arr(t)}-t)\, e^{-u T_{\arr(t)-1}} \1_{T_{\arr(t)}-1>0} \big\}
= \E[ e^{-u\tau}\, Q_u(t-\tau)].
\]
\\
To see that $V_u$ satisfies \eqref{recV}, notice that
\begin{align*}
V_u(t) &= \E \left[\sum_{i=0}^{\arr(t)-1}  e^{-u T_i} ;\, T_1 > t\right]
+ \E \left[\sum_{i=0}^{\arr(t)-1}  e^{-u T_i} ;\, T_1 \le  t\right]
\\
&=\E[e^{-u T_0};\,T_1>t] 
+ \E \left[e^{-u T_0}+ e^{-u T_1} V_u(t-T_1);\, t-T_1 \ge 0\right]
\\
&= e^{-u 0} + \E[e^{-u \tau} V_u(t-\tau) \1_{\tau \le t}].
\end{align*}
To see that $Q_u$ satisfies \eqref{recQ}, notice that
\begin{align*}
Q_u(t) &= \E \left[  (T_{\arr(t)}-t)\, e^{-u T_{\arr(t)-1}};\, T_1 > t\right]
+ \E \left[ (T_{\arr(t)}-t)\, e^{-u T_{\arr(t)-1}};\, T_1 \le  t\right]
\\
&=\E \left[  (T_{1}-t)\, e^{-u T_{0}};\, T_1 > t\right]
+\int \E \left[  (x+T_{\arr(t-x)}-t)\, e^{-u (x+T_{\arr(t-x)-1})}\right]\,\1_{x \le t}\, \P(T_1 \in dx)
\\
&= \E \left[  (\tau-t)\, e^{-u 0};\, \tau > t\right]
+\int_{(0,t]} e^{-u x}\,\E \left[  (T_{\arr(t-x)}-(t-x))\, e^{-u (T_{\arr(t-x)-1})}\right]\, 
\P(\tau \in dx)
\\
&=\E(\tau-t)^+
+\int_{(0,t]} e^{-u x}\,Q_u(t-x)\, 
\P(\tau \in dx).
\end{align*}
\end{proof}

Continuing in the same manner as Lemma \ref{LapLem1}, we obtain the
Laplace transforms of $V_u$ and $Q_u$.
\begin{lemma}
\label{LapLem2}
\begin{align}
&\widehat V_u(\xi) = \frac{1/\xi}{1-\E e^{-(u+\xi)\tau}}
\label{hatV}
\\
&\widehat Q_u(\xi) = \frac{1}{\xi^2} \frac{\xi \E \tau-1 + \E e^{-\xi \tau}}{1-\E e^{-(u+\xi)\tau}}
\label{hatQ}
\end{align}
\end{lemma}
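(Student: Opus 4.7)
The plan is to mimic the proof of Lemma \ref{LapLem1}: apply the Laplace transform in $t$ to both sides of the fixed-point equations \eqref{recV} and \eqref{recQ}, exploit the convolution structure on the right-hand side, and solve the resulting algebraic equations for $\widehat V_u(\xi)$ and $\widehat Q_u(\xi)$. Throughout, the key observation is that the integral term on the right of both equations is the Laplace-Stieltjes convolution of the unknown function against the finite measure $e^{-ux}\P(\tau\in dx)$, whose Laplace transform is the scalar $\E e^{-(u+\xi)\tau}$.

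For \eqref{hatV}, I would take $\int_0^\infty e^{-\xi t}(\cdot)\,dt$ of both sides of \eqref{recV}. The constant term $1$ contributes $1/\xi$, and a Fubini/convolution argument gives
\[
\int_0^\infty e^{-\xi t}\int_{(0,t]}V_u(t-x)e^{-ux}\P(\tau\in dx)\,dt
= \widehat V_u(\xi)\,\E e^{-(u+\xi)\tau}.
\]
Hence $\widehat V_u(\xi) = 1/\xi + \widehat V_u(\xi)\,\E e^{-(u+\xi)\tau}$, from which \eqref{hatV} is immediate.

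For \eqref{hatQ}, the same convolution argument handles the second term on the right of \eqref{recQ}, so the only substantive computation is the Laplace transform of $t \mapsto \E(\tau-t)^+$. I would swap expectation and integration to get
\[
\int_0^\infty e^{-\xi t}\E(\tau-t)^+\,dt
= \E\int_0^\tau (\tau-t)e^{-\xi t}\,dt.
\]
Elementary integration (splitting $\int_0^\tau (\tau-t)e^{-\xi t}dt$ via integration by parts, or the substitution $s=\tau-t$) yields $\tau/\xi - (1-e^{-\xi\tau})/\xi^2$, and taking expectations produces $(\xi\E\tau - 1 + \E e^{-\xi\tau})/\xi^2$. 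The Laplace-transformed \eqref{recQ} then reads
\[
\widehat Q_u(\xi) = \frac{\xi\E\tau-1+\E e^{-\xi\tau}}{\xi^2} + \widehat Q_u(\xi)\,\E e^{-(u+\xi)\tau},
\]
and solving for $\widehat Q_u(\xi)$ gives \eqref{hatQ}.

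The only mild obstacle is the computation of the Laplace transform of $\E(\tau-t)^+$; everything else is mechanical. I would also note in passing the standard point that the fixed-point equations \eqref{recV} and \eqref{recQ} have unique locally bounded solutions (since the convolution kernel $e^{-ux}\P(\tau\in dx)$ has total mass strictly less than $1$ when $u>0$, and for $u=0$ one recovers a defective renewal equation whose solution is still unique by the standard iterative construction), so the identification of the Laplace transforms is legitimate.
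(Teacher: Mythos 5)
Your proof is correct and is exactly the route the paper takes (its proof of Lemma \ref{LapLem2} simply says ``Directly from \eqref{recV} and \eqref{recQ}'', i.e.\ Laplace-transform the fixed-point equations as in Lemma \ref{LapLem1} and solve the resulting linear equations). Your explicit computation of $\int_0^\infty e^{-\xi t}\E(\tau-t)^+\,dt=(\xi\E\tau-1+\E e^{-\xi\tau})/\xi^2$ and the convolution identities are all accurate, so nothing is missing.
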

\begin{proof}
Directly from \eqref{recV} and \eqref{recQ}.
\end{proof}

\begin{corollary}
\label{cor:NAoI-blocking-mean}
Let the assumptions of Theorem \ref{betaiid} hold true.
\\
(i) If the variables $\tau_n$ are exponential with rate $\lambda$, then
\[
\wP(\beta(0)=0) = \frac{\E e^{-\lambda \sigma}}{1+\lambda \E\sigma}
\]
and, with $L_\sigma(u)=\E e^{-u\sigma}$,
\[
\wE[e^{-u\beta(0)}; \beta(0)>0]
=
\frac{\lambda}{1+\lambda \E \sigma}
\bigg[
\frac{L_\sigma(u)}{\lambda+u} \,
\frac{\lambda^2(1-L_\sigma(u))-u^2(1-L_\sigma(\lambda))}{u(\lambda-u)} 
+ \frac{L_\sigma(u)-L_\sigma(\lambda)}{\lambda-u}
\bigg]
\]
(ii) If the variables $\sigma_n$ are exponential with rate $\mu$, then,
with $L_\tau(u) = \E e^{-u \tau}$,
\[
\wP(\beta(0)=0) =
\frac{1}{\mu \E \tau}
(1-\E e^{-\mu \tau}) (\mu \E \tau - 1 - \E e^{-\mu \tau}),
\]
\[
\wE[e^{-u\beta(0)}; \beta(0)>0] 
= \frac{1-L_\tau(\mu)}{\mu \E \tau (1-L_\tau(u+\mu))}
\bigg[
\frac{L_\tau(u)-L_\tau(u+\mu)}{1-L_\tau(u+\mu)} (1-L_\tau(\mu))
+ L_\tau(u+\mu) \, (\mu \E \tau - 1 - L_\tau(\mu))
\bigg]
\]
(iii) If the $\tau_n$ are  exponential with rate $\lambda$,
and the $\sigma_n$ are exponential with rate $\mu$ then, under $\wP$,
\[
\beta(0) \eqdist
\begin{cases}
0 , & \text{with probability } \frac{\mu^2}{(\lambda+\mu)^2}
\\[4mm]
\zeta, & \text{with probability } \frac{\lambda(\lambda+2\mu)}{(\lambda+\mu)^2}
\end{cases},
\]
where $\zeta$ is an absolutely continuous random variable with
\[
\E e^{-u \zeta} 
= \frac{\mu^2}{\lambda+2\mu}\,
\frac{u^2 + (2\lambda+\mu)u + \lambda(\lambda+2\mu)}{(u+\lambda)(u+\mu)^2}.
\]
\end{corollary}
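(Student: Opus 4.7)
The plan is direct substitution into Theorem \ref{Zeus} using Lemmas \ref{LapLem1} and \ref{LapLem2}. The atom is immediate: $\wP(\beta(0)=0)=\E(\tau-\sigma)^+/\E T_N$ with $\E T_N=\E\tau\cdot\E U(\sigma)$. In case (i), $U(t)=1+\lambda t$ (either directly, or by inverting \eqref{hatU}), and a one-line conditioning on $\sigma$ gives $\E(\tau-\sigma)^+=(\E e^{-\lambda\sigma})/\lambda$; division produces the stated probability. In case (ii), $\E U(\sigma)=\mu\widehat U(\mu)=1/(1-\E e^{-\mu\tau})$ from \eqref{hatU}, and conditioning on $\tau$ gives $\E(\tau-\sigma)^+=\E\tau-(1-\E e^{-\mu\tau})/\mu$; division yields the claimed expression. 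Case (iii) then follows from (i) via $L_\sigma(\lambda)=\mu/(\mu+\lambda)$ and $\E\sigma=1/\mu$.

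For the conditional Laplace transform in case (i), I would explicitly invert the Laplace transforms of $U,W_u,V_u,Q_u$ in Lemmas \ref{LapLem1}--\ref{LapLem2}: since $L_\tau(\xi)=\lambda/(\lambda+\xi)$, all denominators are rational in $\xi$ and partial-fraction inversion gives
\[
W_u(t)=\tfrac{\lambda}{\lambda+u}e^{-ut},\quad
V_u(t)=1+\tfrac{\lambda}{u}(1-e^{-ut}),\quad
Q_u(t)=\tfrac{1}{u-\lambda}\bigl[\tfrac{u}{\lambda}e^{-\lambda t}-e^{-ut}\bigr].
\]
Then $\E W_u(\sigma)$, $\E V_u(\sigma)$, $\E Q_u(\sigma)$ become explicit in $L_\sigma(u)$ and $L_\sigma(\lambda)$. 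The remaining cross-term $\E[e^{-u\tau}Q_u(\sigma-\tau)\1_{\sigma\geq\tau}]$ is computed by performing the inner $s$-integration against $\lambda e^{-\lambda s}$ on $[0,\sigma]$; after cancellation of the $e^{-(\lambda+u)\sigma}$ contributions it collapses to $(L_\sigma(u)-L_\sigma(\lambda))/(\lambda-u)$. Assembling into \eqref{betaBL} and factoring out $\lambda/(1+\lambda\E\sigma)=1/(\E\tau\cdot\E U(\sigma))$ gives the stated formula; note that the apparent singularity at $u=\lambda$ is removable, as it must be since the left-hand side is entire on $\{u>0\}$.

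For case (ii), no Laplace inversion is needed: since $\sigma$ is exponential, $\E g(\sigma)=\mu\widehat g(\mu)$ for every locally integrable $g$, so $\E U(\sigma)$, $\E W_u(\sigma)$, $\E V_u(\sigma)$, $\E Q_u(\sigma)$ are read directly off Lemmas \ref{LapLem1}--\ref{LapLem2} at $\xi=\mu$ in terms of $L_\tau(\mu),L_\tau(u),L_\tau(u+\mu)$. The cross-term is analogous: conditioning on $\tau$ and using $\E[g(\sigma-\tau)\1_{\sigma\geq\tau}\mid\tau]=e^{-\mu\tau}\mu\widehat g(\mu)$ reduces it to $L_\tau(u+\mu)\cdot\mu\widehat Q_u(\mu)$. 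Collecting the surviving rational terms in $L_\tau(\cdot)$ yields the formula. Case (iii) is obtained by specializing either (i) with $L_\sigma(u)=\mu/(\mu+u)$ or (ii) with $L_\tau(u)=\lambda/(\lambda+u)$; both routes give the same rational function of $u$. Subtracting the atom $\mu^2/(\lambda+\mu)^2$ and renormalizing by $\lambda(\lambda+2\mu)/(\lambda+\mu)^2$ produces the claimed $\E e^{-u\zeta}$, whose proper-rational form with a simple pole at $-\lambda$ and a double pole at $-\mu$ exhibits $\zeta$ as a mixture of an exponential and a gamma density, hence absolutely continuous.

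The primary obstacle is algebraic bookkeeping rather than any new idea: \eqref{betaBL} combines four auxiliary functions and two distinct generating arguments, the $(u-\lambda)^{-1}$ factor in $Q_u$ in case (i) produces a removable singularity whose cancellation must be tracked, and in case (ii) the three quantities $L_\tau(\mu), L_\tau(u), L_\tau(u+\mu)$ must be manipulated into the compact ratio shown. Verifying that the bracketed term $\E\tau\cdot\E V_u(\sigma)-\E Q_u(\sigma)$ in \eqref{betaBL} is nonnegative (as noted in the remark after Theorem \ref{Zeus}) is useful as a sanity check.
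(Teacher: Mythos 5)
Your proposal is correct and follows essentially the same route as the paper's proof: substitute into \eqref{betaBL} (equivalently the regrouped \eqref{betaBL2}), read the auxiliary functions off Lemmas \ref{LapLem1}--\ref{LapLem2}, invert them explicitly in case (i) (your $W_u,V_u,Q_u$ and the collapsed cross-term $(L_\sigma(u)-L_\sigma(\lambda))/(\lambda-u)$ agree with the paper's $W_u$, $H_u=\E\tau\,V_u-Q_u$ and $Q_u^+$ after trivial regrouping), evaluate everything at $\xi=\mu$ in case (ii), and specialize to get (iii). One caveat: in case (ii) your computation of $\E(\tau-\sigma)^+$ and of $\widehat Q_u$ correctly produces $\mu\E\tau-1+\E e^{-\mu\tau}$, which is what consistency with part (iii) (the atom $\mu^2/(\lambda+\mu)^2$) requires, so the sign $\mu\E\tau-1-\E e^{-\mu\tau}$ appearing in the printed statement (in both the atom and the bracket of the conditional transform) is a typo, and your remark that division "yields the claimed expression" should be understood as yielding this corrected expression rather than the formula as printed.
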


\begin{proof} 
From Theorem  \eqref{Zeus}, we
have $\wP(\beta(0)=0) = \lambda \E(\tau-\sigma)^+/ \E U(\sigma)$
and the expressions of this are obtained by elementary integrals in all
cases.
We rewrite \eqref{betaBL} as
\begin{equation}
\wE [e^{-u \beta(0)}; \beta(0)>0] 
=\frac{\E W_{u}(\sigma) \,
\E H_u(\sigma) + \E Q_u^+(\sigma)}{\E \tau\, \E U(\sigma)},
\label{betaBL2}
\end{equation}
where
\[
H_u(t) = \E \tau\,  V_u(t) - Q_u(t),\quad
Q_u^+(t) = \E[ e^{-u\tau}\, Q_u(t-\tau)].
\]
We thus know the Laplace transforms of all functions entering in \eqref{betaBL2}
in terms of $L_\tau(\xi):= \E e^{-\xi \tau}$:
\begin{multline*}
\widehat U(\xi)  = \frac{1/\xi}{1-L_\tau(\xi)},\quad
\widehat W_{u}(\xi)  
= \frac{1}{\xi} \frac{L_\tau(u) - L_\tau(u+\xi)}{1-L_\tau(u+\xi)},\quad
\\
\widehat H_u(\xi) 
= \frac{1}{\xi^2}\frac{1- L_\tau(\xi)}{1-L_\tau(u+\xi)},\quad
\widehat Q_u^+(\xi) = \frac{L_\tau(u+\xi)}{\xi^2} \,
\frac{\xi \E \tau - 1 + L_\tau(\xi)}{1-L_\tau(u+\xi)}.
\end{multline*}
(i) When $\tau$ is exponential, we already know that $U(t)=1+\lambda t$ and that $W_u(t) = \lambda e^{-ut}/(\lambda+u)$ and, with $L_\tau(u)=\lambda/(\lambda+u)$,
we obtain
\[
\widehat H_u(\xi)  = \frac{\lambda+u+\xi}{\xi (\lambda+\xi) (u+\xi)},
\quad
\widehat Q_u^+(\xi)  = \frac{1}{(u+\xi)(\lambda+\xi)},
\]
that can easily be inverted to the nonnegative functions
\[
H_u(t) 
= \frac{\lambda^2 (1-e^{-ut}) - u^2 (1-e^{-\lambda t})}{\lambda u (\lambda-u)},
\quad
Q_u^+(t) =\frac{ e^{-ut} - e^{-\lambda t}}{\lambda-u} .
\]
The values of $H_u$ and $Q_u^+$ at $u=\lambda$ should be
interpreted as limits when $u \to \lambda$. Thus,
$H_\lambda(t) = \lambda^{-1} [2-(\lambda t+2)] e^{-\lambda t}$,
$Q_\lambda^+(t) =t e^{-\lambda t}$.
Substitute these functions in \eqref{betaBL2} to obtain the announced formula.
\\
(ii) When $\sigma$ is exponential with rate $\mu$, all functions in \eqref{betaBL2}
are essentially Laplace transforms of $\sigma$, for example,
$\E W_u(\sigma) = \mu \widehat W_u(\mu)$. Hence
\begin{equation*}
\wE [e^{-u \beta(0)}; \beta(0)>0] 
=\frac{\mu \widehat W_u(\mu) \,
\mu \widehat H_u(\mu)+ \mu \widehat Q_u^+(\mu)}{\E \tau\, \mu \widehat U(\mu)},
\end{equation*}
and the formula is obtained because we know all Laplace transforms.
\\
(iii) 
The formula readily follows from either (i) or (ii).
\end{proof}

Let us take a closer look at the law of the random variable $\zeta$
of Corollary \ref{cor:NAoI-blocking-mean}(iii).
Letting $\rho=\lambda\mu$ we have
\[
\E e^{-u \mu \zeta} 
= \frac{1}{\rho+2}\,
\frac{u^2 + (2\rho+1)u + \rho(\rho+2)}{(u+\rho)(u+1)^2}.
\]
Inverting this Laplace transform, we find that $\mu \zeta$ has density
\[
g_\rho(t) 
= \frac{1}{(\rho+2)(\rho-1)^2}
\big[ \rho e^{-\rho t} + (\rho^3-3\rho+1+\rho^2(\rho-1)t) e^{-t}
\big],
\]
for all values of $\rho\neq 1$ and, for $\rho=1$, the density corresponds
to the limit of this expression when $\rho \to 1$:
\[
g_1(t) = \frac16 (t^2+2t+2) e^{-t}.
\]

We now pass on to computing first moments.
\begin{lemma}
Consider the blocking system under stationarity assumptions.
Then 
\begin{equation}
\label{bebe}
\wE \beta(0)=
\frac{\displaystyle \Estar\left[\sum_{i=0}^{N-1} \tau_i T_i - \sigma_0 T_M\right]}
{\displaystyle \Estar T_N}\,
\end{equation}
\end{lemma}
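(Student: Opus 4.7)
The plan is to apply the Palm inversion formula \eqref{palminvagain} with the test function $f(x)=x$, and then simplify the path integral that was already computed in equation \eqref{betaintegral} of the proof of Theorem \ref{betagen2}. Concretely, the Palm inversion formula gives
\[
\wE \beta(0) = \frac{\Estar \int_{B_0}^{B_1} \beta(t)\, dt}{\Estar(B_1 - B_0)},
\]
and since $B_0 = T_0 = 0$ and $B_1 = T_N$ under $\Pstar$, the denominator is $\Estar T_N$; so the task reduces to simplifying the numerator.

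Setting $f(x) = x$ in \eqref{betaintegral} I would obtain
\[
\int_{T_0}^{T_N} \beta(t)\, dt
= \sum_{i=0}^{N-2} \tau_i (T_i - T_M) + (\sigma_0 - T_{N-1})(T_{N-1} - T_M) + (T_N - \sigma_0)\, T_{N-1}.
\]
The next step is to collect the terms proportional to $T_M$ using that, $\Pstar$-a.s., $T_0 = 0$ and hence $\sum_{i=0}^{N-2} \tau_i = T_{N-1}$. The coefficient of $-T_M$ becomes $\sum_{i=0}^{N-2} \tau_i + (\sigma_0 - T_{N-1}) = \sigma_0$, leaving
\[
\int_{T_0}^{T_N} \beta(t)\, dt
= \sum_{i=0}^{N-2} \tau_i T_i + (\sigma_0 - T_{N-1}) T_{N-1} + (T_N - \sigma_0) T_{N-1} - \sigma_0 T_M.
\]
A small cancellation in the middle two terms gives $(\sigma_0 - T_{N-1}) T_{N-1} + (T_N - \sigma_0) T_{N-1} = T_N T_{N-1} - T_{N-1}^2 = \tau_{N-1} T_{N-1}$, so absorbing this into the sum produces
\[
\int_{T_0}^{T_N} \beta(t)\, dt
= \sum_{i=0}^{N-1} \tau_i T_i - \sigma_0 T_M.
\]

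Taking $\Estar$ and dividing by $\Estar T_N$ yields \eqref{bebe}. The only real work is the bookkeeping of the telescoping sum; there is no probabilistic subtlety beyond what is already encoded in Theorem \ref{betagen2}, so the main obstacle is merely ensuring the coefficients of $T_M$ and the $T_i$'s are collected correctly, which the identity $T_{N-1} = \sum_{i=0}^{N-2}\tau_i$ makes immediate.
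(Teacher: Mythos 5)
Your proposal is correct and follows essentially the same route as the paper: apply the Palm inversion formula with $f(x)=x$ to the expression \eqref{betaintegral}, use $B_1-B_0=T_N$ under $\Pstar$ for the denominator, and regroup the terms (your bookkeeping, with the coefficient of $-T_M$ collapsing to $\sigma_0$ and the telescoping $\tau_{N-1}T_{N-1}$ term absorbed into the sum, matches the paper's regrouping exactly). No gap to report.
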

\begin{proof}
Take $f(x)=x$ in \eqref{betaintegral} and regroup the terms there to obtain
\begin{align*}
\int_{T_0}^{T_N} \beta(t) dt 
=
\sum_{i=0}^{N-1} \tau_i (T_i - T_M) 
+ (T_N-\sigma_0)\,T_M 
= \sum_{i=0}^{N-1} \tau_i T_i - \sigma_0 T_M
\end{align*}
and then use the Palm inversion formula.
\end{proof}

Next define
\begin{equation}
\label{funZ}
Z(t) = \E \sum_{i=0}^{\arr(t)-1} T_i, \quad t \ge 0.
\end{equation}

\begin{lemma}
Consider the blocking system and
assume that $(\tau_n, \sigma_n)$, $n \in \Z$, is i.i.d.\ under $\P$
and such that $\E \tau_0 <\infty$.
Assume further that $\tau_n$ is independent of $\sigma_n$ for all $n$.
Then
\[
\wE \beta(0)= \E \sigma + \frac{\E \left[\sum_{i=0}^{N-1} T_i\right] }{\E N}
= \E \sigma + \frac{\E Z(\sigma)}{\E U(\sigma)},
\]
where $Z$ is the unique solution to the fixed-point equation
\begin{gather*}
Z(t) =  \E[Z(t-\tau)] + \E[\tau U(t-\tau)]
\end{gather*}
and has Laplace transform
\[
\widehat Z(\xi) = \frac{\E\tau e^{-\xi \tau}}{\xi(1-\E e^{-\xi \tau})^2}.
\]
\end{lemma}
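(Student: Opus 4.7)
The plan is to start from the general formula \eqref{bebe} of the preceding lemma, namely
\[
\wE \beta(0)= \frac{\Estar\big[\sum_{i=0}^{N-1} \tau_i T_i - \sigma_0 T_M\big]}{\Estar T_N},
\]
and exploit the i.i.d.\ assumption to reduce everything to $\P$ and to functionals of the forward sequence $(\sigma_0,\tau_0,\tau_1,\ldots)$. Because, by \eqref{psim}, $\psi_0$ is measurable with respect to $(\tau_m,\sigma_m:m\le -1)$, under the i.i.d.\ assumption the conditioning in $\Pstar=\P(\cdot\mid \psi_0=1)$ is independent of the forward variables. Hence, for any functional $g=g(\sigma_0,\tau_0,\tau_1,\ldots)$ one has $\Estar g=\E g$. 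This immediately gives $\Estar T_N=\E T_N$ and $\Estar\sum_{i=0}^{N-1}\tau_i T_i = \E\sum_{i=0}^{N-1}\tau_iT_i$. On the other hand, $T_M$ is a function of the past, so $\Estar[\sigma_0 T_M]=\E\sigma\cdot \Estar T_M$, and Lemma \ref{lemtriv} applied with $g(x)=x$ yields $\Estar T_M=-\Estar T_N=-\E T_N$. Substituting gives
\[
\wE \beta(0)=\E\sigma+\frac{\E\sum_{i=0}^{N-1}\tau_i T_i}{\E T_N}.
\]

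Next I want to strip the factor $\tau_i$ from the sum by a Wald-type identity. Since $T_i=\tau_0+\cdots+\tau_{i-1}$, and the event $\{N>i\}=\{T_j<\sigma_0\text{ for } j\le i\}$ depends only on $(\sigma_0,\tau_0,\ldots,\tau_{i-1})$, I have $\E[\tau_i T_i\1_{N>i}]=\E\tau\cdot\E[T_i\1_{N>i}]$ by independence of $\tau_i$ from the triggering $\sigma$-field. Summing over $i\ge 0$ and recalling that $N=\arr(\sigma_0)$ gives $\E\sum_{i=0}^{N-1}\tau_i T_i=\E\tau\cdot \E\sum_{i=0}^{N-1}T_i = \E\tau\cdot \E Z(\sigma)$. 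Since \eqref{ETN} already provides $\E T_N=\E\tau\cdot\E U(\sigma)$, the $\E\tau$ factors cancel and we obtain
\[
\wE\beta(0)=\E\sigma+\frac{\E Z(\sigma)}{\E U(\sigma)},
\]
as desired; note also that $\E U(\sigma)=\E N$ by \eqref{funU}.

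For the recurrence for $Z$, I condition on $\tau_0=T_1$, following the same first-step pattern used for \eqref{recU}, \eqref{recW}, \eqref{recV}, \eqref{recQ} in the proof of Theorem \ref{Zeus}. On $\{\tau_0>t\}$ one has $\arr(t)=1$ so $\sum_{i=0}^{\arr(t)-1}T_i=T_0=0$. On $\{\tau_0\le t\}$, shifting by $\tau_0$ produces an independent copy of the renewal process, and decomposing $T_i=\tau_0+(T_i-\tau_0)$ for $i\ge 1$ gives
\[
\sum_{i=0}^{\arr(t)-1}T_i = \tau_0\,\big(\arr(t)-1\big)+\sum_{j=0}^{\arr'(t-\tau_0)-1}T'_j,
\]
where $\arr'$ and $(T'_j)$ refer to the shifted sequence. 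Taking expectations yields the stated recurrence $Z(t)=\E[Z(t-\tau)]+\E[\tau U(t-\tau)]$ (with the convention $Z,U$ vanish on the negative half-line). Laplace transforming and using $\widehat U(\xi)=1/[\xi(1-\E e^{-\xi\tau})]$ from Lemma \ref{LapLem1} gives $\widehat Z(\xi)(1-\E e^{-\xi\tau})=\E[\tau e^{-\xi\tau}]\widehat U(\xi)$, from which the announced Laplace transform follows. The only mildly delicate step is the independence justification in the Wald identity and the replacement of $\Estar$ by $\E$, but both are direct consequences of the i.i.d.\ assumptions combined with \eqref{psim}; no genuine obstacle is present.
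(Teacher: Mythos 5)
Your proposal is correct and follows essentially the same route as the paper: start from \eqref{bebe}, use \eqref{psim} and Lemma \ref{lemtriv} to reduce $\Pstar$-expectations to $\P$, apply the Wald-type factorization (as in \eqref{plato2}) to get $\E\sum_{i=0}^{N-1}\tau_iT_i=\E\tau\,\E Z(\sigma)$ together with $\E T_N=\E\tau\,\E U(\sigma)$, and obtain the fixed-point equation for $Z$ by a first-step (first-principles) decomposition followed by Laplace transformation. The only difference is that you spell out the first-step argument for $Z$ which the paper merely asserts (offering differentiation of \eqref{recV} as an alternative), and your use of $g(x)=x$ in Lemma \ref{lemtriv} involves the same harmless unboundedness the paper itself tolerates.
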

\begin{proof}
The numerator of \eqref{bebe} is written as
\begin{align*}
\Estar\left[\sum_{i=0}^{N-1} \tau_i T_i -\sigma_0 T_M \right]
&=\Estar\sum_{i=0}^{N-1} \tau_i T_i + \Estar \sigma_0 (-T_M)
\\
&= \E \tau\, \E \left[\sum_{i=0}^{N-1} T_i\right]  + \E(- T_M|\psi_0=1) \, \E \sigma
\\
&= \E \tau\, \E \left[\sum_{i=0}^{N-1} T_i\right]  + \E T_N \, \E \sigma.
\end{align*}
Dividing this by $\E T_N = \E \tau\, \E N$ results in the first equality.
Next use the function \eqref{funZ} to write
$\E \left[\sum_{i=0}^{N-1} T_i\right] = \E Z(\sigma)$.
The fixed point equation is obtained from first principles or by differentiating
both sides of \eqref{recV} with respect to $u$ and letting $u \to 0$.
The Laplace transform is obtained by taking the Laplace transform of both sides
of the fixed-point equation.
\end{proof}

\begin{corollary}
Let the assumptions of Theorem \ref{betaiid} hold true.
\\
(i) If the variables $\tau_n$ are exponential with rate $\lambda$, then
\[
\wE \beta(0)= \E \sigma + \frac{\lambda}{2}\frac{ \E \sigma^2}
{1+\lambda \E \sigma}.
\]
(ii) If the variables $\sigma_n$ are exponential with rate $\mu$, then,
with $L_u = \E e^{-u \tau}$,
\[
\wE \beta(0)
=
\frac{1}{\mu} + \frac{\E\tau e^{-\mu \tau}}{1-\E e^{-\mu \tau}}
\]
(iii) If the $\tau_n$ are  exponential with rate $\lambda$,
and the $\sigma_n$ are exponential with rate $\mu$ then, under $\wP$,
\[
\wE \beta(0)
= \frac{1}{\mu} +\frac{\lambda}{\mu(\lambda+\mu)}.
\]
\end{corollary}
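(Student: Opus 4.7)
The plan is to apply the preceding lemma case by case, reducing everything to already-computed Laplace transforms and exploiting the specific form of $\tau$ or $\sigma$. The lemma gives
\[
\wE \beta(0) = \E \sigma + \frac{\E Z(\sigma)}{\E U(\sigma)},
\]
with known $\widehat U(\xi) = 1/[\xi(1-L_\tau(\xi))]$ and $\widehat Z(\xi) = \E[\tau e^{-\xi\tau}]/[\xi(1-L_\tau(\xi))^2]$, where $L_\tau(\xi):=\E e^{-\xi\tau}$. So the whole task is to turn these Laplace transforms into actual values of $\E U(\sigma)$ and $\E Z(\sigma)$ under the extra distributional hypothesis.

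For part (i), with $\tau$ exponential of rate $\lambda$, I would invert $\widehat U$ and $\widehat Z$ as functions of $t$. One gets $L_\tau(\xi) = \lambda/(\lambda+\xi)$ and $\E[\tau e^{-\xi\tau}] = \lambda/(\lambda+\xi)^2$, so $1-L_\tau(\xi) = \xi/(\lambda+\xi)$ and
\[
\widehat U(\xi) = \frac{1}{\xi^2} + \frac{1}{\lambda \xi}\cdot(\text{well,}) \quad\text{more directly}\quad \widehat Z(\xi) = \frac{\lambda}{\xi^3},
\]
giving $U(t)=1+\lambda t$ and $Z(t)=\lambda t^2/2$. Then $\E U(\sigma)=1+\lambda\E\sigma$ and $\E Z(\sigma)=\lambda\E\sigma^2/2$, and substitution into the lemma yields the announced formula.

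For part (ii), the key trick is that when $\sigma$ is exponential of rate $\mu$, $\E f(\sigma) = \mu \widehat f(\mu)$ for any nonnegative measurable $f$ with finite integral, so I can read off $\E U(\sigma) = \mu\widehat U(\mu) = 1/(1-L_\tau(\mu))$ and $\E Z(\sigma)=\mu\widehat Z(\mu) = \E[\tau e^{-\mu\tau}]/(1-L_\tau(\mu))^2$ without inverting anything. Dividing these and adding $\E\sigma=1/\mu$ gives the displayed expression. For part (iii), I would simply specialize either (i) or (ii) to both $\tau$ and $\sigma$ exponential; with $\E\sigma=1/\mu$, $\E\sigma^2=2/\mu^2$ in (i) the arithmetic is immediate, and the two approaches produce the same answer, providing a useful self-check.

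There is no real obstacle here since each case is either a short Laplace-transform inversion (case (i)) or an application of the identity $\E f(\sigma)=\mu\widehat f(\mu)$ (case (ii)); the only minor care point is confirming that the inversion of $\widehat Z$ in case (i) indeed telescopes to $\lambda/\xi^3$ after writing $\E[\tau e^{-\xi\tau}]/(1-L_\tau(\xi))^2 = [\lambda/(\lambda+\xi)^2]\cdot[(\lambda+\xi)^2/\xi^2]$, which collapses because of the exponential form. Everything else is algebra.
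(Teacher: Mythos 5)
Your proposal is correct and follows exactly the route the paper intends: evaluate the lemma's formula $\wE\beta(0)=\E\sigma+\E Z(\sigma)/\E U(\sigma)$ by inverting $\widehat U,\widehat Z$ when $\tau$ is exponential (giving $U(t)=1+\lambda t$, $Z(t)=\lambda t^2/2$) and by using $\E f(\sigma)=\mu\widehat f(\mu)$ when $\sigma$ is exponential, then specializing for (iii). The only blemish is the garbled intermediate display for $\widehat U(\xi)$ in case (i), which should read $\widehat U(\xi)=\frac{1}{\xi}+\frac{\lambda}{\xi^{2}}$; your stated inversion $U(t)=1+\lambda t$ and all final formulas are nevertheless correct.
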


\section{Concluding discussion and open problems} \label{sec:fw}

\paragraph{Summary.}
In summary, the contributions of this paper are:
A new age of information measure (NAoI) definition was introduced and motivated.
The utility of Palm calculus was demonstrated in deriving the distribution of 
AoI and NAoI for stationary bufferless systems under pushout and blocking policies.
All formulas obtained for bufferless $\PO$ are also valid for inifinite buffer queues
under preemptive LIFO policy; see Section \ref{bb}.
In particular, the expectations of these quantities, under renewal assumptions
are summarized in Table \ref{tab:means}. 
Under the same assumptions, some interesting stochastic decomposition and
representation results were also obtained; see Section \ref{sec:outline} for a summary
of these results.

\begin{table}[h]
\renewcommand{\arraystretch}{1.5}
\centering
\begin{tabular}{ |c|c|c|c|c| }
\hline
$~$ & \multicolumn{2}{ |c| }{pushout ($\PO$)}  & \multicolumn{2}{ |c| }{blocking
($\BL$)} \\ \cline{2-5} 
model  & AoI ($\wE \alpha_\PO(0)$) & NAoI ($\wE \beta_\PO(0)$)   & AoI ($\wE \alpha_\BL(0)$)  & NAoI ($\wE \beta_\BL(0)$) \\ \hline\hline

GI/GI & $\frac{\E \tau^2}{2\E \tau} + \frac{\E \tau \wedge \sigma}{\P (\tau\geq\sigma)}$ 
& $ \frac{\E \tau \wedge \sigma}{\P (\tau\geq\sigma)}$  
& $\frac{1}{\mu} + \frac{\E \tau^2}{2\E \tau} 
+ \frac{\E \tau (U*U)(\sigma-\tau)}{\E U(\tau)}$ 
& $ \frac{1}{\mu} + \frac{\E Z(\sigma)}{E U(\sigma)}$\\ [2pt]\hline

M/GI &  $\frac{1}{\lambda \E e^{-\lambda\sigma}} $
&  $\frac{1}{\lambda \E e^{-\lambda\sigma}} - \frac{1}{\lambda}$
& $\frac{1}{\mu} + \frac{1}{\lambda}+\frac{\lambda}{2}
\frac{\E \sigma^2}{1+\lambda\E \sigma} $
&  $\frac{1}{\mu} + \frac{\lambda}{2}\frac{\E\sigma^2}{1+\lambda/\mu}$\\ [2pt]\hline

GI/M & $\frac{\E \tau^2}{2\E \tau} + \frac{1}{\mu}$
& $\frac{1}{\mu}$
& $\frac{1}{\mu}+ \frac{\E \tau^2}{2\E \tau} 
+\frac{\E \tau e^{-\mu \tau}}{1-\E e^{-\mu\tau}}$
& $\frac{1}{\mu} + \frac{\E \tau e^{-\mu\tau}}{1-\E e^{-\mu\tau}}$\\ [2pt]\hline

M/M &  $\frac{1}{\lambda}+\frac{1}{\mu} $
& $\frac{1}{\mu}$
& $\frac{1}{\mu} + \frac{1}{\lambda}+
\frac{\lambda}{\mu(\lambda+\mu)}$
& $\frac{1}{\mu} + \frac{\lambda}{\mu(\lambda+\mu)}$\\[2pt]\hline
\end{tabular}
\caption{Mean AoI and NAoI for different models of
interarrival times (with $\E \tau=1/\lambda$) and service 
times (with $\E \sigma = 1/\mu$) in the renewal case. 
}\label{tab:means}
\end{table}

Using Laplace inversion, we obtained, in certain cases, 
the density of AoI and the density of the NAoI conditional that it be positive.
We may alternately obtain expressions for the probability densities
by  using {\em level-crossing arguments} as in, e.g., \cite{Brill08}.
We should also point out the generality of the formulas obtained
in Theorems \ref{alphagen}, \ref{betagen}, \ref{alphagen2} and \ref{betagen2}:
they remain true even under general stationarity assumptions.
Therefore, we can, for example, 
incorporate situations where messages arrive according to
processes that are more general than renewal ones, e.g., Markov renewal.

\paragraph{What is best for a bufferless system?}
Let us now take a look at the issue of choosing the ``best'' policy
for bufferless system. The choice depends not only on the arrival/processing 
rates but on the way that arrivals and processing times are distributed.
It also depends on what we mean by ``best''.
If ``good'' means low expectation and if renewal assumptions are made,
then sometimes $\PO$ always outperforms $\BL$, sometimes
$\BL$ outperforms $\PO$ and sometimes the answer depends on how loaded the
system is. Suppose $\mu=1$.\\
(a) In the M/M case we have, for all $\lambda$,
\[
\widetilde \E \beta_\PO=1 <
\widetilde\E \beta_\BL=1+\frac{\lambda}{\lambda+1}.
\]
(b) In the M/D case (where D stands for deterministic) we have,  for all $\lambda$,
\[
\widetilde\E \beta_\PO = \frac{e^{\lambda}-1}{\lambda}
> \widetilde\E \beta_\BL =1 + \frac{1}{2} \cdot \frac{\lambda}{1+\lambda}.
\]
This inequality is implied by the inequality
$e^x > 1+x+x^2/2$ which is true for all $x>0$.
\\
(c) In the M/GI case we have a freedom to choose the law of $\sigma$.
We take a mixture: $\sigma$ is either equal to the constant $1/3$,
or is exponentially distributed with rate $3/5$, with equal probability for each case
(the parameters are chosen so that $\E \sigma=1$) we have that $\PO$ outperforms
$\BL$ for high arrival rates (roughly for $\lambda >11.2$)
but the opposite is true for smaller rates. The exact expressions are obtained
from the second row of Table \ref{tab:means} and are plotted in 
Figure \ref{fig3}(c).
\begin{figure}[t]
    \centering
    \subfloat[$M/M$]{
        \includegraphics[width=0.3\textwidth,height=2.8cm]{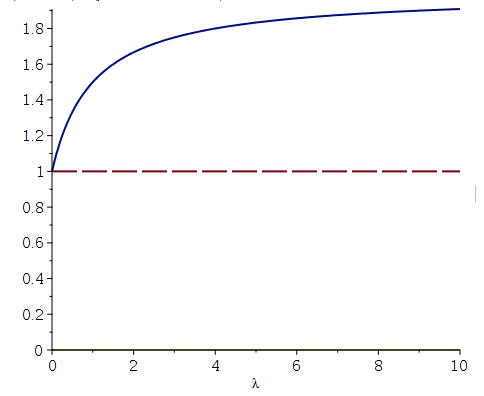}
        }
    \hfill
    \subfloat[$M/D$]{
        \includegraphics[width=0.3\textwidth,height=4cm]{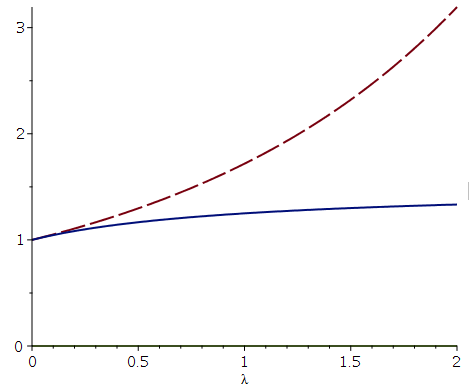}
        }
\hfill
    \subfloat[$M/GI$]{
        \includegraphics[width=0.3\textwidth,height=4.1cm]{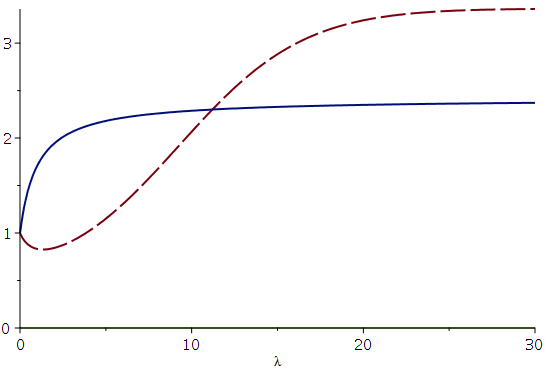}
        }
    \caption{\it 
Mean NAoI as a function of the arrival rate $\lambda$
for pushout (broken line) and blocking (solid line) policies
in three cases.
Here, $\mu=1$ in all cases.}
\label{fig3}
\end{figure}
A better policy, insofar as expectations are concerned, 
can be found by considering a $\PO\BL(\ell)$ policy
or a $\BL\PO(\ell)$ policy for appropriate $\ell$.
Changing the optimality criterion changes the story completely.

\paragraph{Large buffers make no sense.}
Most of research in the AoI area so far has focused on systems with 
infinite storage capacity have been studied. Among these systems, FIFO
seems to be worst from the point of view of AoI. Indeed, it makes no sense to store an
accepted message if we are only interested in the age of information.
We should process it as soon as possible, perhaps even by preempting
the message that is currently being processed.
It is therefore intuitive that the buffer should have capacity of at most 2,
including the packet currently being processed.
It seems that adding additional buffer space beyond $2$ works against us.
See  \cite{Altman18} regarding this point.

Let us define a system, that we call $\PO_2$.
The buffer has size 2. An arriving message, say message 1, 
to an empty buffer starts being
processed immediately. If a second message, say $2$, arrives while $1$ is being
processed it is stored. If message $3$ arrives while $1$ is still being processed and
$2$ stored, it pushes $2$ out and replaces it.
If no message arrives for a while, then $1$ finishes and $3$ starts being processed,
leaving one unit available to accommodate the next arriving message, if any.
The point is that while a message is being processed, it is never disturbed by
an arriving message. An arriving message will only disturb the stored message,
if any. Could, then, adding an extra unit buffer improve the system from the point
of view of AoI or NAoI? The answer seems to be no. For evidence via simulations,
look, for example at the D/M case (deterministic periodic arrivals, i.i.d.\ exponential 
service times). Figure \ref{fig:dm1} compares the three policies from the
point of view of mean NAoI in steady state as a function of the arrival rate $\lambda$.

\begin{figure}[h]
\centering
\includegraphics[width=8cm]{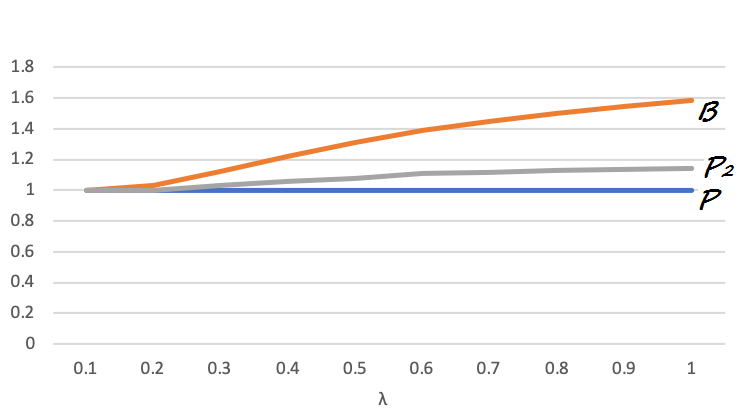}
\caption{ \it
The mean NAoI as a function of $\lambda$ for D/M systems
with deterministic interarrival times
and i.i.d.\ exponential service times (with mean $1$)
in three cases.
}
\label{fig:dm1}
\end{figure}


We next take a look at the most commonly studied system,
an infinite buffer FIFO system. 
But it performs even more poorly.
Consider the $M/M$ case.
Here there is an explicit formula:
\begin{equation}
\text{Mean AoI for M/M/1/$\infty$/FIFO } 
=\frac{1}{\lambda} + \frac{1}{\mu}  +  \frac{\lambda^2}{\mu^2}\cdot
\frac{1}{\mu-\lambda}, 
\label{mean-fifo}
\end{equation}
see \cite[eq.\ (17)]{Yates12}
and compare it with the mean AoI for $\PO$ and $\BL$,
formulas as in Table \ref{tab:means}:
\[
\wE \alpha_\PO(0) = \frac{1}{\lambda} + \frac{1}{\mu} , \quad
\wE \alpha_\BL(0) = \frac{1}{\lambda} + \frac{1}{\mu} + \frac{\lambda}{\mu}\cdot
\frac{1}{\lambda+\mu}.
\]
Clearly, $\wE \alpha_\PO(0)$ is the smallest of all. The FIFO mean
is larger  than $\wE \alpha_\BL(0)$ when $\lambda > \sqrt{2}-1$.
But even when $\lambda < \sqrt{2}-1$, the mean NAoI under FIFO
is only $0.94\%$ better than $\wE \alpha_\BL(0)$. 
See Figure \ref{fig:fifo}; here, the curve for $\PO_2$ has been obtained
by stochastic simulation.
All curves tend to $\infty$ as $\lambda\to 0$ because of the fact
that we plot AoI and not NAoI, and AoI also measures the
time until the previous arrival.
Also, the FIFO curve tends to $\infty$ as $\lambda \to \mu=1$,
and that is because the FIFO system becomes unstable.
\begin{figure}[h]
\centering
\includegraphics[width=8cm]{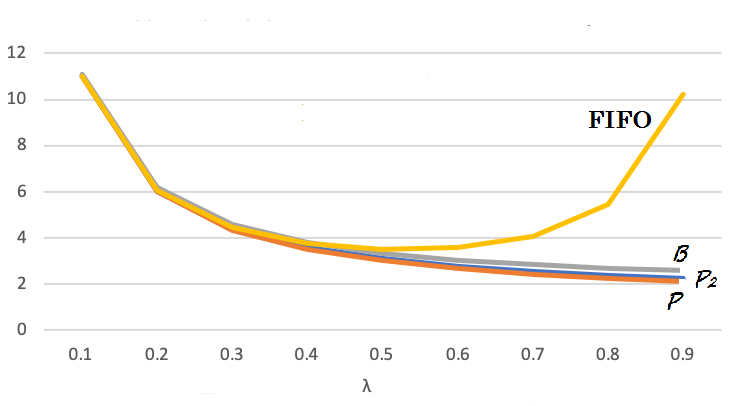}
\includegraphics[width=8cm]{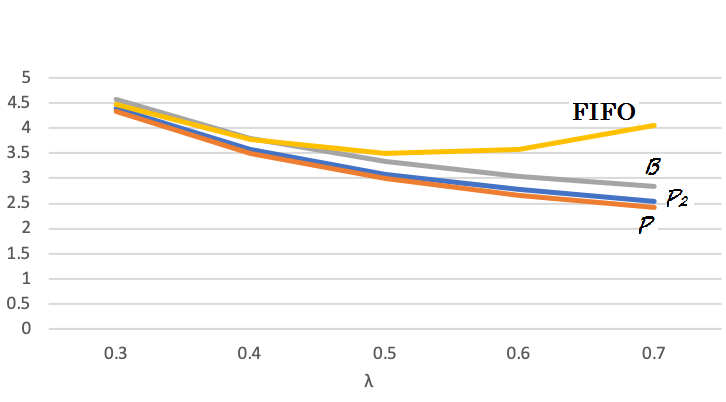}
\caption{\it
The mean AoI as a function of $\lambda$ for M/M systems
with i.i.d.\ exponential interarrival times, i.i.d.\ exponential service times 
(with mean $1$)
in four cases: the bufferless $\PO$ and $\BL$ systems, the $\PO_2$ system,
and the infinite capacity system under FIFO. The right figure is a detail of the left.
Note that $\BL$ is slightly worse than FIFO for small $\lambda$.
However, $\PO_2$ is better than FIFO and $\PO$ is better than $\PO_2$
for all $\lambda$.}
\label{fig:fifo}
\end{figure}

The following observations provide additional evidence regarding the claim
that small buffer systems perform at least as well as the well-studied 
infinite buffer LIFO and FIFO systems.
Consider a sequence of arrival times and processing times fed to 
four systems, 
preemptive infinite 
LIFO buffer (\pLIFO), infinite buffer FIFO, 
the $\PO$ system, and the $\PO_2$ system.
Let $\alpha_\pLIFO$, $\beta_\LIFO$, etc.,
be the AoIs and NAoIs for the four systems.

\paragraph{Observation 1.}
For all times $t$,
$\alpha_\pLIFO(t) = \alpha_{\PO}(t)$
and 
$\beta_\pLIFO(t) = \beta_{\PO}(t)$.\\
To see this, recall that we use the same arrival times and same processing
times for both $\pLIFO$ and $\PO$ systems.
Consider a trajectory of the $\pLIFO$ system. 
We will show how to construct the trajectory of the $\PO$ system deterministically
from that of the $\pLIFO$ system.
Consider  the arrival of a message at an empty $\pLIFO$ system,
call this message 1, letting $2,3,\ldots$ be the indices of subsequent messages.
Observe the $\pLIFO$ system until the end of the busy period started with message $1$.
Necessarily, this busy period also ends with $1$.
Let $j_1$ be the index of the first message within this busy period that will not be 
disturbed by any arriving message: message $j_1$ is processed without
interruption. Similarly, let $j_2>j_1$ be the next uninterrupted message,
and so on.
We can construct the trajectory of $\PO$ by observing that the messages
$1,\ldots,j_1-1$ are unsuccessful and $j_1$ is the first successful message.
Similarly, $j_1+1, \ldots, j_2-1$ are unsuccessful and $j_2$ is successful.
The process $A^*_t$ changes only at the departure times
of $j_1, j_2, \ldots$. It is thus the same for both $\pLIFO$ and $\PO$.
Hence $\alpha_{\pLIFO}(t) = \alpha_\PO(t)$ for all $t$.
Since $A_t$, the last arrival before $t$, is also the same for both systems
(arrivals are coupled), we also have $\beta_\pLIFO(t) = \beta_{\PO}(t)$.

\paragraph{Observation 2.}
For times $t$ that are successful departures from the $\PO_2$ system,
$\alpha_{\FIFO}(t) \ge \alpha_{\PO_2}(t)$ and $\beta_{\FIFO}(t) \ge \beta_{\PO_2}(t)$.
\\
Assume both systems empty at time 0 with the first message
indexed 1 arriving at time $T_1 \geq 0$.
Under system $x\in\{\FIFO, \PO_2\}$, let
$T_{x,k}' $ be  departure time of  message $k$,
$\alpha_x(t)$ be the AoI at time $t$, and 
$W_x(t)$ be the work-to-be-done at time $t$ 
(including the service time of an arrival at time $t$).
For the $\mathcal{P}_2$  system, let 
$\chi_k$ indicate whether the
$k^{\rm th}$ message is successfully served and
let $(k)$ be the index of the $k^{\rm th}$ successfully served message.
Thus, $\forall k$, $\chi_{(k)}\equiv 1$.
We argue inductively that
\begin{equation}\label{W-induction}
\forall k\geq 1, ~~
W_{\rm FIFO}(T_{(k)}) ~ \geq ~
W_{\mathcal{P}_2 }(T_{(k)}).
\end{equation}
Clearly  (\ref{W-induction}) is true with equality ($=\sigma_1$) 
at $k=1$.
Assume  (\ref{W-induction}) for arbitrary $k\geq 1$.
This leads to the following inequality:
\begin{eqnarray*}
\lefteqn{ W_{\rm FIFO}(T_{(k+1)})} & &  \\
& = & \max\left\{W_{\rm FIFO}(T_{(k)}) +
\sum_{i=(k)+1}^{(k+1)}\sigma_i -(T_{(k+1)}-T_{(k)}), \,
\max_{(k)+1\leq \ell \leq (k+1)}
\sum_{i=\ell}^{(k+1)}\sigma_i -(T_{(k+1)}-T_{\ell})\right\}\\
& \ge &  \max\left\{ W_{\mathcal{P}_2}(T_{(k)}) +
\sum_{i=(k)+1}^{(k+1)}\sigma_i\chi_i -(T_{(k+1)}-T_{(k)}), \,
\max_{(k)+1\leq \ell \leq (k+1)}
\sum_{i=\ell}^{(k+1)}\sigma_i\chi_i -(T_{(k+1)}-T_{\ell})\right\}
\\
& = &  \max\left\{ W_{\mathcal{P}_2}(T_{(k)}) +
\sigma_{(k+1)} -(T_{(k+1)}-T_{(k)}), \,\,
\sigma_{(k+1)}\right\}
\\
&= &  W_{\mathcal{P}_2}(T_{(k+1)}) 
\end{eqnarray*}
Thus, for all $k$, the departure time of $(k)$ under FIFO,
\begin{eqnarray*}
T_{\rm FIFO,(k)}'=T_{(k)} + W_{\rm FIFO}(T_{(k)})  
&\geq &  
T_{(k)} + W_{\mathcal{P}_2}(T_{(k)})  
=T_{\mathcal{P}_2,(k)}'.
\end{eqnarray*}
So, at time  $T_{\mathcal{P}_2,(k)}'$, the index of
the most recent completely served message under
FIFO is $k' \leq (k)$.
Therefore,
\[
\alpha_{\mathcal{P}_2}(T_{\mathcal{P}_2,(k)}') =  T_{\mathcal{P}_2,(k)}' - T_{(k)}
\le T_{\mathcal{P}_2,(k)}' - T_{k'}
 = \alpha_{\rm FIFO} (T_{\mathcal{P}_2,(k)}') 
\]
These arguments also hold for  NAoI $\beta$ because arrival
times are coupled.

\paragraph{}
So it is unclear and rather puzzling why infinite capacity systems 
have been considered. As mentioned in Section \ref{bb}, \cite{Shroff17}
showed that, among all infinite capacity systems, and under specific
distributional assumptions, \pLIFO  ~is best.
But \pLIFO ~with preemption has the same  AoI and NAoI as $\PO$.
So all the results obtained in this paper for $\PO$ also hold for
\pLIFO.

For the M/M/1/$\infty$-FIFO system, \cite{Yates12} observes that,
when the service rate $\mu$ is  fixed
the mean AoI is minimized at $\lambda \approx 0.53\mu$.
This is trivial: just minimize the expression \eqref{mean-fifo} over $0<\lambda<\mu$.
To accomplish this may  require that the arrival rate $\lambda$ can be controlled.
It is unclear how this control avoids dropping arriving (freshest) messages, or not
generating them in the first place.  
Obviously, given $\lambda$, one can generally reduce
the mean AoI by increasing $\mu$.
The assumption of a  FIFO queueing discipline has been justified by 
its existing deployment in many practical scenarios  (e.g., message
transmission buffers of sensors).
But practical scenarios also involve {\em finite} message buffers,
and an arriving (freshest)  message to a full buffer is dropped
(unless pushout is available, in which case the queue could be operated
as a bufferless system with pushout).
This will be particularly problematic under heavy traffic.
Generally, the AoI concept is not very interesting under light traffic.

\paragraph{Alternative definitions of age of information.}
Alternative definitions of age of information are possible and may be desirable.
For example, a measure of freshness of information may
involve message streams where the most recent message does
{\it not} obsolete all previous ones. More specifically, assume that,
upon arrival of a new message (with normalized ``importance" 1),
the importance of all prior messages  can be diminished
by a positive factor $\xi<1$, and the objective could be
to minimize the sum of the importance of all transmitted messages.
That is, it may be desirable at the receiver to accurately interpolate
between the freshest messages.
This case may require a large message buffer under LIFO.

\paragraph{Open problems.}$~$
\begin{enumerate}
\item
Compute the distributions and/or expectations of AoI and NAoI
under the $\PO\BL(\ell)$ and $\BL\PO(\ell)$ policies, 
especially under renewal assumptions.
Choose the $\ell$ that minimizes a given performance measure,
e.g., $\wE \beta(0)$ or $\wP(\beta(0)>x)$ as a function of the
interarrival and processing time distributions.

\item
Formulate and solve a dynamic optimization problem.
That is, decide the policy that accepts/rejects incoming messages
and also decides which of them will be successful or not.
Even under renewal assumptions, this is not an easy problem.
Life can possibly be made easier under specific distributional assumptions,
e.g., in the good old M/M case.

\item
Analyze the $\PO_2$ system. That is, compute distributions and/or
expectations for AoI and NAoI.

\item
Conjecture: In steady state,
$\beta_{\PO_2}(0)$ is stochastically smaller than $\beta_{\FIFO}(0)$.
Evidence for this is Observation 2 above.

\item
Conjecture: In steady state,
$\beta_{\PO_2}(0)$ is stochastically smaller than $\beta_{\rm npLIFO}(0)$, where the 
latter is LIFO with {\em non}-preemptive service policy.

\item
Take into account the technological constraints and see if alternative
measures of the age of information can justify large buffers.

\item
Better explain how these measures help real-time systems in real-life situations.
\end{enumerate}

\section*{Acknowledgments}
We thank Kostya Borovkov for reading an early draft of the paper.
We also thank the three anonymous reviewers who read the paper carefully,
identified all typos, some them in the formulas themselves, and critically questioned it.
Their detailed comments helped us improve the paper.

\small
\appendix
\section{List of symbols}
\begin{tabular}{ll}
$\delta_x$ & delta measure at the point $x$ \\
$\overline X$ & a random variable with density $\P(X>x)/\E X$ \\
$T_n$ & arrival time of a message\\
$\chi_n$ & accept/reject index \\
$\psi_n$ & success/failure index \\
$\mathcal Z_n$ & informally, the event that the server is idle just before $T_n$ \\
$\widetilde \P$ & informally, probability measure governing the stationary system\\
$\P$ & Palm probability of $\widetilde \P$ with respect to the arrival process\\
$\P^*$ & Palm probability of $\widetilde \P$ with respect to reading intervals
beginnings $=\P(\cdot|\mathcal Z_0)$\\
$\arr$ &= $\sum_n \delta_{T_n}$, arrival process as a point process\\
$\arr(t)$ & $=\mathfrak a([0,t))$\\
$U(t)$ & $=\E \arr(t)$\\
$Z(t)$ &  $= \E \sum_{i=0}^{\arr(t)-1} T_i$\\
$W_u(t)$ &  $= \E e^{-u T_{\arr(t)}}$\\
$V_u(t)$ & $= \E \sum_{i=0}^{\arr(t)-1}  e^{-u T_i}$\\
$Q_u(t)$ & $= \E  \big\{ (T_{\arr(t)}-t)\, e^{-u T_{\arr(t)-1}} \big\}$\\
$\sigma_n$ & processing time of a message\\
$T_n'$ & departure time of a message either due to successful reading or not\\
$A_t$ & last arrival epoch before $t$ \\
$S_t$ & last arrival epoch before $t$ of a successful message \\
$D_t$ & last departure epoch before $t$ of a successful message \\
$A^*_t$ & $=S_{D_t}$ \\
$\Delta f(t)$ & $=f(t+)-f(t-)$ \\
$\tau_n$ & $=T_{n+1}-T_n$ \\
$B_k$ & beginning of a reading interval \\
$\mathbf R_k$ & duration of a reading interval \\
$B_k'$ & $=B_k+\mathbf R_k$ \\
$\mathbf C_k$ & $=B_{k+1}-_k$, cycle length \\
$\lambda$ & arrival rate $=1/\E \tau$ $=\sum_n \widetilde \P(0<T_n<1)$
\end{tabular}

\bibliographystyle{plain}

\vspace*{1cm}

\begin{center}
\begin{tabular}{ccc}
\begin{minipage}[t]{0.3\textwidth}
\small
{\sc George Kesidis}\\ 
Computer Science Department, 
The Pennsylvania State University, University Park, PA, 16802, USA,
\href{mailto:gik2@psu.edu}{gik2@psu.edu}
\end{minipage}
&
\begin{minipage}[t]{0.3\textwidth}
\small
{\sc Takis Konstantopoulos}\\
Department of Mathematical Sciences , 
The University of Liverpool, Liverpool  L69 7ZL, UK;
\href{mailto:takiskonst@gmail.com}{takiskonst@gmail.com}
\end{minipage}
&
\begin{minipage}[t]{0.33\textwidth}
\small
{\sc Michael A.\ Zazanis}\\ 
Department of Statistics, Athens University of Economics and Business, 76 Patission St., Athens 104 34, Greece;
\href{mailto:zazanis@aueb.gr}{zazanis@aueb.gr}
\end{minipage}
\end{tabular}
\end{center}

\end{document}